\pgfplotsset{compat=newest}
\newcommand{\g}{g_\gamma\(u\)}
\newcommand{\qt}{\enquote}
\DeclareMathOperator*{\argmax}{arg\,max}
\renewcommand{\(}{\left(}
\renewcommand{\)}{\right)}
\newcommand{\lt}{\left[}
\newcommand{\rt}{\right]}
\newtheorem*{theorem*}{Theorem}
\Crefname{notat}{Notation}{Notations}
\newtheorem{notat}{Notation}
\newtheorem{notation}[notat]{Notation}
\begin{document}
\title{Catastrophe by Design in Population Games:\\ Destabilizing Wasteful Locked-in Technologies
\thanks{Stefanos Leonardos and Georgios Piliouras gratefully acknowledge MOE AcRF Tier 2 Grant 2016-T2-1-170. Georgios Piliouras also acknowledges grant PIE-SGP-AI-2018-01, NRF2019-NRF-ANR095 ALIAS grant and NRF 2018 Fellowship NRF-NRFF2018-07.}}
\titlerunning{Catastrophe by Design in Population Games}

\author{Stefanos Leonardos\inst{1} \and 
Iosif Sakos\inst{1} \and 
Costas Courcoubetis\inst{1} \and 
Georgios Piliouras\inst{1}}
\authorrunning{S. Leonardos et. al}

\institute{Singapore University of Technology and Design, 8 Somapah Rd, Singapore 487372, Singapore\\
\email{\{stefanos\_leonardos,costas,georgios\}@sutd.edu.sg}, \email{iosif\_sakos@mymail.sutd.edu.sg}}

\maketitle
\begin{abstract}
In multi-agent environments in which coordination is desirable, the history of play often causes lock-in at sub-optimal outcomes. Notoriously, technologies with a significant environmental footprint or high social cost persist despite the successful development of more environmentally friendly and/or socially efficient alternatives. The displacement of the status quo is hindered by entrenched economic interests and network effects. To exacerbate matters, the standard mechanism design approaches based on centralized authorities with the capacity to use preferential subsidies to effectively dictate system outcomes are not always applicable to modern decentralized economies. What other types of mechanisms are feasible? \par
In this paper, we develop and analyze a mechanism that induces transitions from inefficient lock-ins to superior alternatives. This mechanism does not exogenously favor one option over another -- instead, the phase transition emerges endogenously via a standard evolutionary learning model, Q-learning, where agents trade-off exploration and exploitation. Exerting the same transient influence to both the efficient and inefficient technologies encourages exploration and results in irreversible phase transitions and permanent stabilization of the efficient one. On a technical level, our work is based on bifurcation and catastrophe theory, a branch of mathematics that deals with changes in the number and stability properties of equilibria. Critically, our analysis is shown to be structurally robust to significant and even adversarially chosen perturbations to the parameters of both our game and our behavioral model.

\keywords{Mechanism Design \and Catastrophe Theory \and Bifurcations \and Equilibrium Selection \and Q-learning \and Population Games \and Lock-in.}
\end{abstract}

\section{Introduction}\label{sec:introduction}

Efficient and sustainable innovations often fail to reach the critical mass of adoption that is needed to supersede socially harmful and energy-wasteful status-quo technologies \cite{Rob67,Rog03,Cho10} due to positive network effects and lock-in costs \cite{Mak10}. This frequently observed deadlock, termed \emph{inefficient lock-in} \cite{Mas19}, is a recurrent theme in the economics literature. Recent (technological) examples include the transportation sector --- where the existing infrastructure (network effect) for CO\textsubscript{2} emitting vehicles hinders or, at least, slows down the adoption of greener alternatives (e.g., platooning systems, electric cars or aircrafts powered by renewable energy) --- updates in the Internet's infrastructure --- e.g., the long-anticipated transition from IPV4 to IPV6, a problem also known as \emph{protocol ossification}, and, notoriously, Proof of Work mining, a setting that has attracted widespread attention by academics, entrepreneurs and policymakers \cite{Fia19,Gor19}.\footnote{Manifestations of this phenomenon are prevalent across the whole spectrum of social, economic, political, and natural sciences: persistence of bad social norms \cite{Sme20}, inefficient business practices and procedures \cite{Sal95,Mas19}, fierce market competition over supply and demand of products and services (also known as \qt{life-cycle competition}) \cite{Far07,Chi14,har16}, community or committee voting in managerial boards \cite{Ang07,Mas17}, mood disorders \cite{Lee14}, investments in undeveloped regions with tourism potential \cite{Mak10} are only some of the numerous settings in which this problem is observed.}\par
While this issue is not entirely new, \cite{Art89,Cow90,Sha99}, both experimental and theoretical studies suggest that little is known about how to move out of a lock-in \cite{Dev03,Lee07,Kes12}. Surprisingly, in all these cases, the challenge of transitioning to an existing alternative is primarily not \emph{technological} but rather \emph{game-theoretical} in nature. Yet, game theory alone can only be used to understand whether a lock-in will occur: since most standard solution concepts are invariant to the history of play, they cannot offer much insight on whether such a lock-in can be overcome unless they are coupled with a proper dynamic, behavioral model \cite{Bra06}. In this respect, the need to develop a suitable behavioral model that will combine adaptive learning with strategic behavior and which will be suitable to make predictions in games with a history of lock-in has been receiving increasing attention \cite{Mas19}.\par
In this paper, we address this problem from the perspective of mechanism design and focus on developing an economically feasible mechanism that will provably allow a central planner, or more generally, an external authority with partial and only temporary influence over agents' behavior, to move the system out of the inefficient lock-in and lead it to a socially desirable outcome. \par
From a theoretical perspective, we stylize this situation as a coordination or evolutionary game with multiple equilibria \cite{Far07}. Accordingly, the population state (or equilibrium) in which the currently prevailing, yet wasteful technology is used by all agents is evolutionary stable and small perturbations, i.e., adopters of the alternative technology, are doomed to fail \cite{Meh17}. In turn, the externality that agents' actions impose on the society is not quantified in their short term utility function and hence, does not shape their current decisions. This creates a deadlock: a situation --- among many known in social and economic sciences --- in which selfish behavior stands at odds with the social good. Yet, mechanism design, the natural approach to attack this problem, is powerless in this setting. The inherently decentralized nature of contemporary economies and social interactions raises challenges that severely lessen the applicability of off-the-shelf solutions. The reasons are well-rooted in the structure of large population networks (that represent today's transborder societies) and the behavioral patterns of the participating agents. \par
Specifically, mechanisms that consistently subsidize socially beneficial behavior are not economically feasible and would be subject to gaming in the long run. Even more of a showstopper is the fact that many top-down policies that treat differentially one option versus another have been shown to be rather hard, if not outright impossible, to sustain in practice or have frequently failed to offset users' potential losses from prevailing network effects. More importantly, standard expected utility models, the bedrock of classic mechanism design, are arguably too simplistic to model agents' behavior in practice for several reasons: hedging against volatility or extreme events, risk attitudes, collusions, politics/governance, to name but a few. It would thus be important to develop solutions that are robust to more complex behavioral assumptions. 

\paragraph{Our model:} To address this problem, we introduce an evolutionary game-theoretic model that captures agent behavior in decentralized systems with network effects. The agents' strategic interaction is stylized as a \emph{population game} whose state updates are governed by a \emph{revision protocol} that balances exploitation (expected utility maximization) and exploration of suboptimal alternatives (hedging) \cite{Joh18}. These two elements generate an evolutionary dynamic that can then be used to reason about agents' decision making in this setting \cite{San10}. \par
The advantage of having both a game-theoretic model (Section \ref{sec:model}) as well as learning-theoretic model (Section \ref{sec:behavior}) is that it allows us to formally argue about the stability of the equilibria of the game. For example, in the simplest possible game-theoretic model of competition between a prevalent, yet wasteful technology $\(W\)$ and a more efficient (both socially and individually) alternative $\(S\)$, we have that the utility of using the $W$ (respectively $S$) strategy increases linearly in the number of other agents that are using the same technology. This results in three types of fixed points: everyone using $W$, everyone using $S$, and a \qt{mixed} population case at the exact split where both technologies are equally desirable/profitable. Intuitively, this mixed state is an unstable equilibrium as a slight increase of the fraction of $W$ (respectively $S$) adopters is enough to break ties and encourage convergence to a monomorphic state. However, to make this discussion concrete, we need to formally describe how a mixed population state (i.e., the $W/S$ split) evolves over time. \par
To model the adaptive behavior of the agents, we use the Boltzmann Q-learning dynamics \cite{Wat92,Tan97} (one of the most well-known models of evolutionary reinforcement learning that can also be derived as a limit of the well known Experience-Weighted Attraction behavioral game theory model \cite{Cam99,Cam03,Gal13}). The decision of each agent, or equivalently of each unit of investment, is whether to adopt technology $W$ or technology $S$ given that $x\in[0,1]$ fraction of the population adopts technology $W$. According to the Q-learning behavioral model, the agents update their actions by keeping track of the collective past performance --- in particular, of a properly defined Q-score --- of their available strategies ($W$ and $S$). Roughly (see \Cref{sec:model} for the formal specifications and notation), if we denote with $u\(W,x\)$ and $u\(S,x\)$ the utility of an investment unit from either of the two technologies $W$ and $S$ when the state of the population is $\(x,1-x\)$, with $x\in[0,1]$ denoting the fraction of $W$ investments, then the Q-learning dynamics are given by the following scheme
\begin{equation}\label{eq:intro}
\dot x=x[\underbrace{u\(W,x\)-u\(S,x\)}_{\text{Replicator Dynamics}}-T\cdot\underbrace{\(x\ln{x}+\(1-x\)\ln{\(1-x\)}\)}_{\text{Entropy}}]\,.
\end{equation}
In a far-reaching twist, each agent's utility function is enhanced by an entropy term that is weighted by a parameter $T$, termed \emph{temperature} or \emph{rationality}. When $T=0$, the dynamics are precisely the replicator dynamics \cite{San10,Pan16,Boo19} and they recover the Nash equilibria of the game, whereas large values of $T$ lead to uniform randomization between the available options. Informally, low temperatures capture cool-headed agents that focus primarily on strategies with good historical performance, whereas high temperatures favor hedging and exploration. \par
This interpretation suggests that $T$ is a behavioral attribute of the population which cannot be influenced by an external authority. However, from equation \eqref{eq:intro}, it can be seen that the application of a \emph{tax-rate} (or any other mechanism) to rescale agents' utilities, is essentially equivalent to a change in parameter $T$, i.e., to a change in the agents' rationality \cite{Wol12,Yan17}. This is precisely the intuition that we exploit here to prompt the transition from the \emph{evolutionary stable}, yet inefficient, equilibrium of the locked-in technology $W$ (cf. \Cref{prop:evolutionary}) to the adoption of the desirable technology $S$.\medskip

\paragraph{Our solution: Catastrophe Design.} 
Our main contribution is to formally prove that there exists a simple, robust, and transient catastrophe-based mechanism to destabilize the $W$ equilibrium and enforce the $S$ equilibrium. The proposed mechanism, while simple to implement, has provable %effectiveness 
guarantees that critically exploit the complex underlying geometric structure of the Q-learning dynamics, cf. \Cref{sec:results}. As a first step in the process, we provide a complete characterization of both the number as well as the stability of the equilibria of the Q-learning dynamics, also known as Quantal Response Equilibria (QRE) \cite{Mck95}, given game-theoretic models of $W/S$ competition for all values of $T$ (Theorem \ref{thm:main}). This is a question of independent interest as it explores the possible limit system behaviors in the lack of any controlling mechanism. \par
Then, we describe how by simply raising taxation/temperature up to (slightly beyond) a critical value and then reducing it down to zero results into convergence to the socially optimal $S$ equilibrium (Theorem \ref{cor:convergence}). The mechanism is formally described in \Cref{sub:hysteresis}. Here, we provide an informal statement for the reader's convenience.

\begin{theorem*}[Informal statement of \Cref{cor:convergence}]
For any initial population state, there exists a finite sequence of control variable levels $T = \langle T_0=0, T_1, \ldots, T_n=0\rangle$ such that through the following procedure:
\begin{itemize}
\item scale the control variable to $T_i$, and
\item wait until the system converges to a QRE
\end{itemize}
the system is going to converge to the desirable state $x = 0$, which corresponds to the energy-friendly technology $S$. In particular, the sequence of control levels starts from $T_0=0$ (no control), increases temporarily above a critical level and is reset back to $T_n=0$.
\end{theorem*}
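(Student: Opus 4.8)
I will show that a single temporary temperature spike already suffices: take $n=2$ and the schedule $T=\langle T_0=0,\,T_1,\,T_2=0\rangle$ with $T_1$ chosen just above the fold value supplied by \Cref{thm:main}. The first step is to turn the continuous-time instruction ``wait until convergence'' into a finite combinatorial map, using one-dimensionality. At each fixed temperature the Q-learning dynamics of \Cref{sec:behavior} form an autonomous scalar ODE $\dot x=f_T(x)$ on the invariant compact interval $[0,1]$, so every trajectory is eventually monotone and converges to a fixed point, with no cycles or other recurrence possible. Reading ``wait until the system converges to a QRE'' as ``run until the state lies within an arbitrarily small $\epsilon$ of a fixed point'' --- the only reading under which the procedure terminates in finite time, since boundary equilibria are approached only asymptotically --- this instruction becomes a map $\Phi_T$ sending a state $x$ to a small neighbourhood of the stable fixed point reached by flowing in the direction of $\operatorname{sign} f_T(x)$. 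For $T$ in the generic hyperbolic regime $\Phi_T$ is constant on each basin of attraction, the basins are subintervals delimited by the unstable fixed points, and --- the basins being open --- the choice of $\epsilon$ does not affect anything below.

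\textbf{Bifurcation data.} \Cref{prop:evolutionary} and the $T=0$ case of \Cref{thm:main} give the coordination-game picture at $T=0$: the monomorphic states $x=0$ (efficient, technology $S$) and $x=1$ (wasteful lock-in, technology $W$) are both asymptotically stable, a single interior unstable equilibrium $x_u$ separates their basins, and so $\Phi_0$ maps $[0,x_u)$ to $0$ and $(x_u,1]$ to $1$. \Cref{thm:main} records how this deforms as $T$ grows: there is a critical temperature $T^\ast$ at which the \emph{wasteful} branch ($x$ near $1$) and the interior unstable branch collide and annihilate in a saddle--node (fold) bifurcation, so that for every $T>T^\ast$ there is a \emph{unique} stable QRE $x^\dagger(T)$, globally attracting on $(0,1)$ (the boundary points $0,1$ being repelling once $T>0$); moreover the surviving branch is the efficient one, emanating from $x=0$ at $T=0$, so that $x^\dagger(T)<x_u$ for $T$ slightly above $T^\ast$. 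This last inequality is where the efficiency of $S$ genuinely enters: it is precisely because the efficient state is the one with the larger basin (is risk-dominant) that the fold destroys the \emph{wasteful} equilibrium rather than the efficient one and leaves an attractor that still lies inside the original basin of $x=0$.

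\textbf{Tracing the orbit.} Fix $T_1\in(T^\ast,\,T^\ast+\delta)$ small enough that $x^\dagger(T_1)<x_u$ --- possible by continuity of $x^\dagger(\cdot)$ and the previous paragraph --- and run $\langle 0,T_1,0\rangle$ from an arbitrary initial state $x_{\mathrm{init}}\in[0,1)$. The first phase ($T_0=0$) leaves the state at $0$ if $x_{\mathrm{init}}=0$, at $x_u$ if $x_{\mathrm{init}}=x_u$, and otherwise carries it into an arbitrarily small neighbourhood of $0$ or of $1$; in every case the state now lies in $[0,1)$. The second phase ($T_1>T^\ast$) fixes the state if it equals $0$ and otherwise, by global attraction on $(0,1)$, carries it into a small neighbourhood of $x^\dagger(T_1)$; since $0<x^\dagger(T_1)<x_u$, the state now lies in $[0,x_u)$. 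The third phase ($T_2=0$) then carries it to $0$. Thus the system converges to the desired efficient state $x=0$. The lone excluded initial condition $x_{\mathrm{init}}=1$ is genuinely beyond any control, being a fixed point for every $T\ge 0$, but it is vacuous, as the underlying Q-learning process always retains full support. Finally, a monotone ramp $\langle 0,T_1',\dots,T_1,0\rangle$ with $T_1'<\dots<T_1$ works identically, matching the informal statement's ``increases temporarily above a critical level''.

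\textbf{Main obstacle.} The scheduling argument above is short; its whole content rests on the bifurcation data, i.e.\ on \Cref{thm:main}. The single non-routine fact --- which is exactly the ``catastrophe by design'' phenomenon --- is that the attractor persisting just past the fold lies inside the $T=0$ basin of the \emph{efficient} state; this fails for a generic bistable one-parameter family and holds here only because the payoffs are affine in the population share and are regularized by the strictly concave Boltzmann entropy, a structure that forces the wasteful equilibrium to be the one annihilated first. Establishing this --- and hence the inequality $x^\dagger(T_1)<x_u$ --- robustly, i.e.\ under the adversarial perturbations of the game and behavioral parameters advertised in the abstract, is where the real difficulty lies; granted \Cref{thm:main}, the theorem then follows by the three-phase bookkeeping just described.
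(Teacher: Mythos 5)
Your proposal is correct and follows essentially the same route as the paper: both rest entirely on the bifurcation/stability data of \Cref{thm:main} (unique globally attracting QRE below $1/2<(1+\gamma)/2$ once $T$ exceeds the fold value) and then trace the orbit through the three phases $\langle 0,T_1,0\rangle$. The only substantive difference is in the final descent: you reset $T$ to $0$ instantly and invoke the basin $[0,(1+\gamma)/2)$ of the efficient equilibrium, whereas the paper additionally shows (via implicit differentiation) that the lower QRE branch is monotone in $T$ and tends to $0$ as $T\to 0^+$, so as to cover gradual ramps of the control parameter as well; your handling of the boundary case $x_{\mathrm{init}}=1$ and of the meaning of ``wait until convergence'' is, if anything, slightly more careful than the paper's.
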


Our approach is based on the combination of two observations which can be visualized in \Cref{fig:introduction_plot_2} (cf. \Cref{fig:main}): 
First, the number and stability of equilibria (QRE) of the Q-learning dynamics are a function of $T$, i.e., of the trade-off level between exploration and exploitation. For $T=0$, there exist three QRE, which are precisely the three described Nash equilibria. For slightly larger $T>0$, we still have three QRE, but now due to the exploration term, all three lie in the interior of the interval $(0,1)$. Finally, beyond some critical level of the control parameter, the number of QRE drops from three to one. In particular, at this critical level, the prevailing equilibrium fuses with the unstable mixed equilibrium and at the very next moment, the now merged equilibria, cancel each other out at a phenomenon known as \emph{saddle-node bifurcation} or \emph{catastrophe} \cite{Str00,Kuz04}. \par
The second observation is that we can effectively control parameter $T$ (e.g., increase it by a multiplicative scale $\alpha$), since it is mathematically equivalent to scaling down the utility of the agents by the same factor $\alpha$ (up to time reparameterization in equation \ref{eq:intro}). In policy terms, this can be interpreted as taxation of income (i.e., a multiplicative decrease of payoffs for all actions), which results in agent behavior that is less stringent about maximizing earnings at all costs. Informally and taking this idea to its logical extreme, a taxation level of $99\%$  (i.e., a very large $T$) would effectively render the agents indifferent about payoffs and make them choose actions at random.\par

\begin{figure}[!t]
\centering
\includegraphics[scale=0.58]{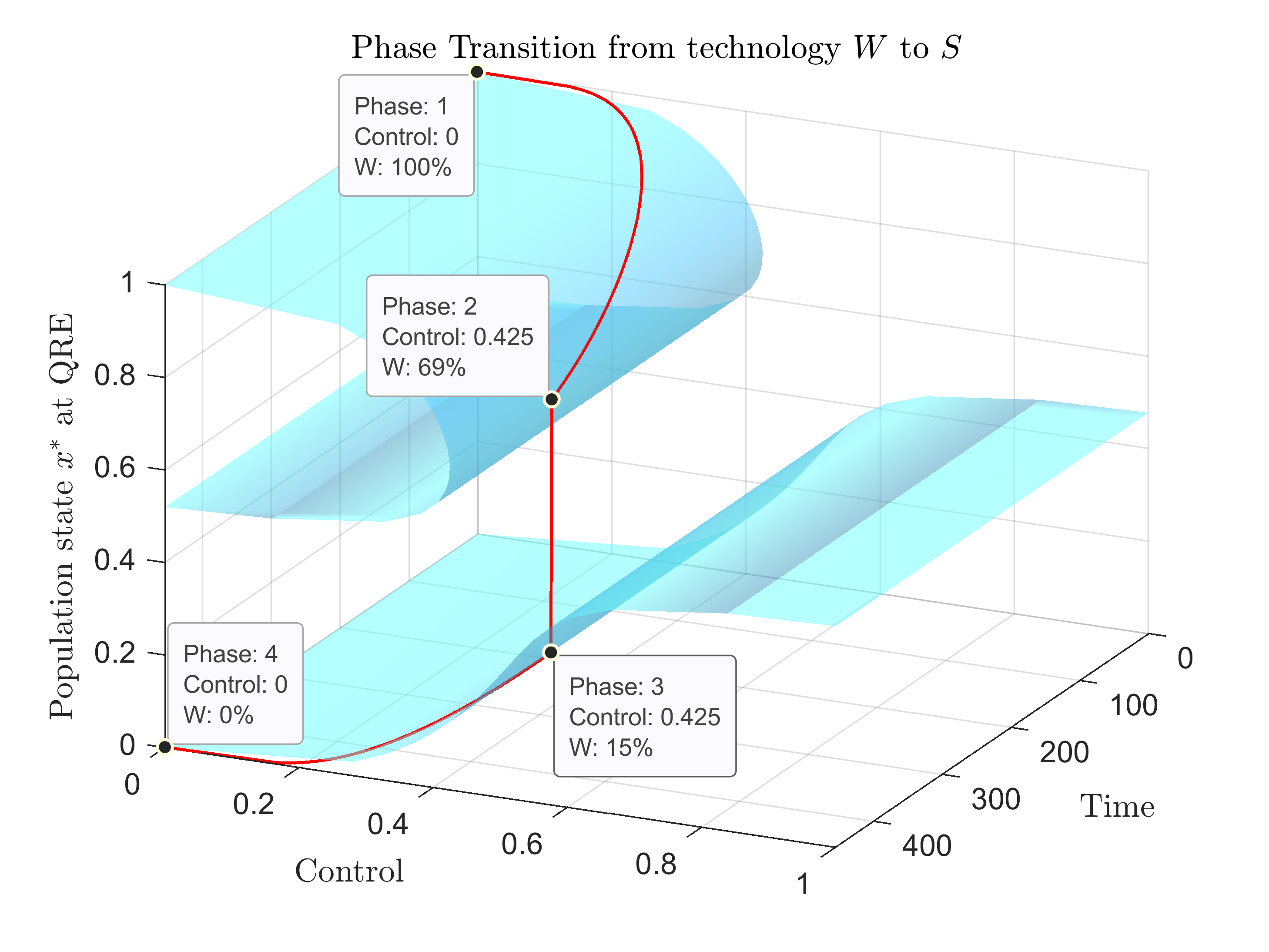}
\caption{Visualization of the QRE surface projected in time (right horizontal axis) and the transition from the prevailing wasteful technology $\(W\)$ to the efficient technology $\(S\)$ via an induced catastrophe. At phase 1, all resources are invested in $W$, and the control parameter is at $0$. As the control parameter increases, the population moves along the red line on the QRE surface. At phase 2, the control parameter reaches the critical value or tipping point at which the two upper QRE merge into one and at the very next moment, phase 3, at which the parameter $T$ is increased slightly above the critical level, the system undergoes an abrupt transition at which the upper QRE vanishes. Between these two successive time points, the population state changes from $69\%$ of investment in $W$ right before the \emph{catastrophe} to only $15\%$ immediately after. After this point, the control parameter is reset to $0$, and due to the resulting \emph{hysteresis effect} \cite{Rom15}, the population converges to the new equilibrium which corresponds to the adoption of the (new) technology $S$.}
\label{fig:introduction_plot_2}\vspace{-0.3cm}
\end{figure}

Putting these two observations together, by controlling $T$, we can control the resulting QRE and, thus, the resulting state of the system over time. More critically, when exceeding the critical level of the control parameter (\emph{tipping point}), a phase transition or catastrophe occurs at which the state behavior changes abruptly (phases $2$ and $3$ in \Cref{fig:introduction_plot_2}). This is the critical step that disrupts the prevalent outcome and encodes a new memory to the system that moves the population into the attracting region of the desirable equilibrium. At this point, the level of adoption of the new technology, or \emph{critical mass}, is typically well below the simple majority (50\%). Finally, we leverage (in principle at least) this saddle-node bifurcation (which eliminates the inefficient equilibria) to create irreversible phase transitions such that even when the controlling parameter $T$ returns to its initial state $T=0$, the state of the system is not the original undesirable stable state ($x=1$), but the target $S$ state, $x=0$ (phase 4 in \Cref{fig:introduction_plot_2}).

\paragraph{Robustness of findings.}
Critically, we stress-test our findings and establish that they are robust to modeling uncertainty/misspecifications across different axes, cf. \Cref{sec:robustness}. In \Cref{sub:robust_e}, we allow (possibly adversarial) uncertainty/perturbations in the dynamics and the population game-theoretical model. For the reader's convenience, our main result is restated in the following Theorem.

\begin{theorem*}[Informal statement of \Cref{thm:stability_e}]
Under state-dependent, bounded perturbations in the evolution of the Q-learning dynamics, the proposed catastrophe mechanism yields a transition from the inefficient to the efficient equilibrium with the only difference that the population stabilizes at some neighborhood of the QRE of the unperturbed system in the intermediate phases. 
\end{theorem*}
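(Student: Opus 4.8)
The plan is to reduce the perturbed problem to the unperturbed one through a one–dimensional comparison (sandwich) argument, exploiting that the state space is the compact interval $[0,1]$ and that all the qualitative information we need about the uncontrolled system is already supplied by \Cref{thm:main}. Write the controlled-and-perturbed dynamics as $\dot x = g_T(x) + \xi(t,x)$, where $g_T(x)$ denotes the right-hand side of \eqref{eq:intro} at temperature $T$ and $\xi=\xi(t,x)$ is the (state-dependent, possibly adversarial) noise with $\sup_{t,x}\lvert\xi(t,x)\rvert\le\varepsilon$. Since $g_T(x)-\varepsilon\le g_T(x)+\xi(t,x)\le g_T(x)+\varepsilon$, the comparison principle for scalar ODEs implies that every solution of the perturbed system stays trapped between the solutions of the two \emph{autonomous} systems $\dot x = g_T^{\pm}(x):=g_T(x)\pm\varepsilon$, whose graphs are simply the graph of $g_T$ shifted vertically by $\pm\varepsilon$. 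Hence it suffices to control $g_T^{\pm}$.

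First I would handle a fixed sub-critical temperature $T$ (three QRE). By \Cref{thm:main} the two outer QRE are hyperbolic sinks of $g_T$, i.e. $g_T'<0$ there, so the implicit function theorem applied to $g_T(x)=\mp\varepsilon$ shows that for $\varepsilon$ small each of them persists as an $O(\varepsilon)$-close hyperbolic sink $x^{\pm}(T)$ of $g_T^{\pm}$, and the unstable middle QRE likewise moves by $O(\varepsilon)$. Consequently the interval bounded by $x^{-}(T)$ and $x^{+}(T)$ around each outer sink is forward invariant for both $g_T^{\pm}$, hence — by the sandwich — forward invariant and globally attracting within the corresponding basin for the perturbed flow, with finite entry time that is uniform on compact subsets of the basin because $g_T^{\pm}$ is bounded away from zero off an $O(\varepsilon)$-neighbourhood of its zeros. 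This already yields the clause ``the population stabilizes at some neighborhood of the QRE of the unperturbed system in the intermediate phases,'' with the neighbourhood radius $\to 0$ as $\varepsilon\to 0$.

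Next comes the catastrophe step, where the quantitative bookkeeping matters. At the critical temperature $T_c$ the $W$-side sink and the middle source of $g_T$ merge in a saddle–node (cf. \Cref{thm:main}); for $T=T_c+\eta$ with $\eta>0$ the function $g_T$ has lifted off the axis on the whole old $W$-basin $[x_{\mathrm{sep}},1]$, i.e. there is $m(\eta)>0$ with $g_T(x)\le -m(\eta)$ on that region (in the orientation where $x=1$ is the wasteful state), and $m(\eta)$ can be made arbitrarily large by increasing $\eta$. The overshoot $\eta$ must therefore be chosen \emph{as a function of} $\varepsilon$ so that $m(\eta)>\varepsilon$; then $g_T^{+}(x)\le -(m(\eta)-\varepsilon)<0$ throughout $[x_{\mathrm{sep}},1]$, so every perturbed trajectory there decreases monotonically and crosses, in finite time, strictly below the location that the separatrix will occupy once $T$ is brought back down — that is, into the basin of the $x=0$ equilibrium of the reset system, with a margin exceeding $\varepsilon$. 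The final phase then iterates the sub-critical analysis along the decreasing schedule $T_k\downarrow 0$: at each level the state is already well inside the open basin of the lower sink, so it re-stabilizes in an $O(\varepsilon)$-neighbourhood of the corresponding QRE, and at $T=0$ that QRE is exactly $x=0$. Chaining these phases reproduces the control sequence $\langle T_0=0,T_1,\dots,T_n=0\rangle$ of \Cref{cor:convergence} with only the stated $O(\varepsilon)$ slack.

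I expect the main obstacle to be exactly the coupling in the catastrophe step between the overshoot $\eta$, the (reset) separatrix location, and the noise bound $\varepsilon$: because $\xi$ is merely bounded rather than smooth, the classical structural-stability theorems that preserve the number and smooth dependence of equilibria do not apply verbatim, and one has to replace them by the explicit uniform lower bounds on $\lvert g_T\rvert$ away from its zeros that \Cref{thm:main} makes available, and then propagate the estimates so that the ``room to spare'' obtained at the end of the catastrophe phase survives every subsequent decrease of $T$. A secondary point is to verify that each ``wait until the system converges to a QRE'' instruction terminates in finite time; this follows from the same uniform bounds together with compactness of $[0,1]$, since off the trapping intervals $\dot x$ has a definite sign and is bounded away from zero. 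Assembling these pieces proves \Cref{thm:stability_e}.
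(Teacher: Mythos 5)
Your overall strategy --- trap the perturbed scalar flow between two autonomous envelope systems and rerun the phase-by-phase catastrophe argument on the envelopes --- is the same as the paper's, but the key lemma is different. The paper does not shift the whole vector field by $\pm\varepsilon$; it observes (\Cref{lem:bound}) that because the noise enters \emph{inside} the bracket of \eqref{eq:perturbed} and $\partial f/\partial\gamma=-1$, the perturbed bracket satisfies $f\(x;T,\gamma+\epsilon_0\)<f_\epsilon\(x;T,\gamma\)<f\(x;T,\gamma-\epsilon_0\)$, i.e.\ the two envelopes are members of the \emph{same parametric family} already classified in \Cref{thm:main}. This buys two things your version does not: (i) the argument works for every admissible $\epsilon_0\in\(0,\min\{\gamma,1-\gamma\}\)$, not only for asymptotically small noise, and it yields the explicit threshold $\max_i T_i>T_c\(\gamma-\epsilon_0\)$ via \Cref{cor:monotonicity} rather than an implicit coupling $\eta(\varepsilon)$; (ii) because the noise is damped by the prefactor $x\(1-x\)$ and dominated by the blow-up of the entropy term near the boundary, the terminal phase converges \emph{exactly} to $x=0$, whereas in your additive model $x=0$ is not even a rest point of the perturbed flow and you only obtain an $O(\varepsilon)$-neighbourhood.

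Two concrete points need repair. First, your model $\dot x=g_T(x)+\xi(t,x)$ is not the perturbation the theorem is about: in \eqref{eq:perturbed} the noise term is multiplied by $x\(1-x\)$, which is precisely what keeps the boundary equilibria intact and makes the final convergence to $x=0$ exact; proving the statement for your model proves a variant that is qualitatively different near the boundary. Second, the claimed uniform bound $g_T(x)\le -m(\eta)$ on the whole old basin $[x_{\mathrm{sep}},1]$ is false, since $g_T(1)=0$ (the prefactor $x\(1-x\)$ kills the divergence of the entropy term); the bound holds only on compact subintervals $[x_{\mathrm{sep}},1-\delta]$, and in your additive-noise model $g_T+\varepsilon$ is in fact positive near $x=1$ for \emph{every} $T$, so a trajectory started too close to $x=1$ is not controlled by your upper envelope. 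Your phased construction survives this because the preceding sub-critical phase has already parked the state at an interior QRE bounded away from $1$, but that step must be made explicit; the paper's formulation avoids the issue entirely because its noise vanishes at the boundary.
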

In \Cref{sub:robust_a}, we explore utility functions that capture strong superadditive effects on network valuation. These models introduce significant difficulties as lie outside the typical framework of evolutionary game theory with multi-linear utility functions. Again, our main result is informally stated below.

\begin{theorem*}[Informal statement of \Cref{thm:unique}]
For positive network effects of arbitrary intensity, i.e., for arbitrary non-linear payoff functions that yield superadditive value in the degree of adoption, the proposed catastrophe mechanism yields a transition from the inefficient to the efficient equilibrium.
\end{theorem*}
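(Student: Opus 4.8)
The plan is to show that the superadditive, non-linear model still supports the mechanism of \Cref{cor:convergence}: it suffices to re-establish, \emph{without any appeal to multilinearity of the payoffs}, the qualitative bifurcation picture underlying \Cref{thm:main} that powers the catastrophe mechanism. Recall that the interior QRE at temperature $T$ are exactly the zeros in $(0,1)$ of $\Phi(x,T):=g(x)-T\,\ell(x)$, where $g:=u(W,\cdot)-u(S,\cdot)$ and $\ell(x):=\ln(x/(1-x))$, while $x\in\{0,1\}$ are fixed points for every $T$ and are asymptotically stable precisely when $T=0$ (for $T>0$ the entropy term repels trajectories from the pure states; cf.\ \Cref{thm:main}). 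The first step is to distil from the network-effect hypothesis the only structural facts about $g$ that the argument uses: since each technology's per-unit payoff is strictly increasing in the mass adopting it, $g$ is $C^1$ and strictly increasing on $(0,1)$; since $S$ is the individually efficient option in a coordination game, $g(0^+)<0<g(1^-)$; hence $g$ has a unique zero $x^{\dagger}\in(0,1)$ -- the mixed Nash equilibrium -- and we retain the standing assumption (inherent to \Cref{cor:convergence}) that the mixed equilibrium is $W$-heavy, $x^{\dagger}>1/2$, i.e.\ that $W$ prevails through history rather than risk-dominance. Linearity is used nowhere here; superadditivity enters only as a convenient sufficient condition for these facts.

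Next I would recover the QRE classification through the scalar branch map $\tau(x):=g(x)/\ell(x)$, which is finite and strictly positive exactly on $(0,1/2)\cup(x^{\dagger},1)$, the set where $g$ and $\ell$ share a sign; thus the interior QRE at $T>0$ are precisely the preimages $\tau^{-1}(T)$. Elementary limits give $\tau\to 0$ at $0^+$, at $1^-$ and as $x\downarrow x^{\dagger}$, while $\tau(1/2^-)=+\infty$. Consequently, on $(x^{\dagger},1)$ the map $\tau$ leaves $0$, stays positive and returns to $0$, so $T_c:=\sup_{(x^{\dagger},1)}\tau<\infty$ is attained at some $x_c$; for $T>T_c$ there is \emph{no} QRE in $[1/2,1)$ (indeed $\Phi(\cdot,T)<0$ on $[1/2,1)$, since $g\le 0,\ \ell\ge 0$ on $[1/2,x^{\dagger}]$ and $\Phi$ has no zero on $(x^{\dagger},1)$), whereas for $T$ just below $T_c$ a saddle-node pair sits on $(x^{\dagger},1)$: the ``prevailing'' wasteful branch near $x=1$, stable, and an unstable companion, which coalesce and annihilate at $T_c$ -- the catastrophe. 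On $(0,1/2)$, $\Phi(\cdot,T)$ takes value $+\infty$ at $0^+$ and $g(1/2)<0$ at $1/2^-$, hence always possesses a stable QRE $x_\ell(T)$; a sign analysis of $\Phi_x(x,T)=g'(x)-T/(x(1-x))$ as in \Cref{thm:main} yields that $x_\ell(T)$ is unique and globally attracting on $(0,1)$ once $T>T_c$, and that $x_\ell(0^+)=0$ because $g<0$ throughout $(0,1/2)$ forbids a non-zero accumulation point.

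The mechanism and its hysteresis are then assembled exactly as in \Cref{cor:convergence}. Starting from the $W$-lock-in ($x$ near $1$, $T=0$) and raising $T$ quasi-statically, the state tracks the wasteful branch, which exists and is stable on $[0,T_c)$ and is destroyed in the saddle-node at $T=T_c$ (``wait until convergence'' is always well posed because the one-dimensional dynamics are monotone on $(0,1)$ and hence converge to a fixed point). Overshooting to $T_c+\varepsilon$ leaves $x_\ell(T_c+\varepsilon)\in(0,1/2)$ as the only attractor, so the trajectory enters a neighbourhood of it; lowering $T$ back to $0$, the branch $x_\ell(T)$ persists and stays stable on $(0,T_c]$ and $x_\ell(0^+)=0$, so the state is carried continuously to $x=0$, i.e.\ full adoption of the efficient technology $S$. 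Since every step used only the first paragraph's structural facts and the second paragraph's shape analysis -- and never linearity of $g$ -- this is precisely the asserted robustness, so \Cref{thm:unique} follows.

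The main obstacle is the classification step, i.e.\ proving the \Cref{thm:main}-type statement for an \emph{arbitrary} increasing $g$ with $g(0^+)<0<g(1^-)$ rather than an affine one: without closed forms one must argue purely qualitatively that (a) the catastrophe on the $W$-side occurs at the finite level $T_c=\sup_{(x^{\dagger},1)}\tau$ and wipes out every wasteful QRE beyond it, and (b) a stable QRE persists on the efficient side $(0,1/2)$ for all $T$ and limits to $x=0$ as $T\to 0$. Point (b) is the delicate one: one would like $\Phi_x(x,T)=g'(x)-T/(x(1-x))$ to change sign at most twice -- equivalently $\tau$ unimodal on each component of its domain -- but $g$ merely convex does not force this, since $1/(x(1-x))$ is itself U-shaped, so the superadditive structure of $g$ must be pushed beyond bare monotonicity; alternatively one replaces the count by a fixed-point-index argument pinning down the parity and stability of the QRE set directly, plus a cascade-of-saddle-nodes argument showing that even if spurious QRE were to appear on $(0,1/2)$ at intermediate $T$, the monotone dynamics still slide the state down to $x=0$ as the temperature is released. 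Everything else -- the endpoint limits, the sign bookkeeping on $[1/2,1)$, and the continuity/hysteresis argument -- is then routine.
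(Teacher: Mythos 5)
Your overall strategy --- push $T$ past a critical level so that every QRE in $[1/2,1)$ is annihilated, park the state at a QRE in $(0,1/2)$, then release $T$ back to $0$ and ride the hysteresis down to $x=0$ --- is the same as the paper's, and your identification of the two load-bearing claims (a finite catastrophe level on the upper branch, and a well-behaved efficient branch on $(0,1/2)$ limiting to $0$) is accurate. But your point (b) is exactly where the argument stops being a proof, and you say so yourself: for a general increasing $g$, or even for $g(x)=x^{\alpha-1}-(1-x)^{\alpha-1}-\gamma$ with large $\alpha$, monotonicity and convexity do not force $\Phi_x(x,T)=g'(x)-T/(x(1-x))$ to change sign at most twice, so uniqueness and unimodality of the lower branch are left unestablished. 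This is not a hypothetical worry: the paper's \Cref{fig:alpha} shows that for large $\alpha$ the lower component of the QRE correspondence genuinely develops an S-shaped fold with multiple QRE in $(0,1/2)$ at intermediate $T$, so no argument asserting that ``the lower branch is always a single stable point for all $T$'' can succeed. Your claim that $x_\ell(T)$ is unique and globally attracting ``once $T>T_c$'', with $T_c$ defined only from the upper branch, is therefore unproven (and, for $T$ slightly above that $T_c$, can be false).

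The paper closes the gap differently and more cheaply on both fronts. First, it makes no attempt at a global classification: it proves the elementary inequality $x^{\alpha-1}(1-x)+x(1-x)^{\alpha-1}\le \tfrac{1}{2(\alpha-1)}$ (\Cref{lem:ln}), which is precisely the statement that $\sup_{x} x(1-x)g'(x)\le 1/2$, so that for the explicit, $\alpha$-independent overshoot level $T\ge 1/2$ the map $f_\alpha$ is monotone decreasing on $(0,1)$ and hence has a \emph{unique} zero, which the signs $f_\alpha(0^+)=+\infty$ and $f_\alpha(1/2)=-\gamma<0$ place in $(0,1/2)$. No knowledge of $T_c$ or of the shape of your $\tau$ is needed. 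Second, for the descent it uses only the $T=0$ picture: there $f_\alpha$ is strictly increasing with its unique interior root in $(1/2,1)$, so all of $[0,1/2]$ lies in the basin of $x=0$, and any multiplicity of QRE on $(0,1/2)$ at intermediate temperatures is simply irrelevant to the final outcome. Your ``cascade-of-saddle-nodes'' fallback is morally this same observation, but it is left as a sketch. If you want to salvage your more general route for arbitrary monotone $g$, the cleanest repair is to imitate the paper: exhibit any finite $T^{*}\ge\sup_{x\in(0,1)}x(1-x)g'(x)$ (finite for the power-law payoffs, an extra hypothesis in general), conclude uniqueness of the QRE at $T^{*}$ and its location in $(0,1/2)$ from $\Phi(1/2,T^{*})=g(1/2)<0$, and then prove the $T=0$ basin containment of $[0,1/2]$ separately.
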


The findings of the robustness analysis justify the selection of our baseline modeling assumptions with an eye towards simplicity primarily for expositional purposes. In short, the main takeaway from the two robustness theorems is that \emph{our results can be directly extended for a wide range of modeling designs and approaches}. Technically, while the number of equilibria and the stability properties of the dynamics may change under these perturbed conditions, the resulting control mechanism can be applied without any significant changes. %, even when there is uncertainty about the exact parameters.

\paragraph{Other related work.}
While applications of catastrophe theory in the field of economics have received mixed reactions in the past, not least due to their original hype \cite{Ros07}, the formal connections between bifurcations, hysteresis effects, and coordination games have started to gain increasing attention \cite{Tuy03,Kia12,Rom15}. The idea of leveraging these complex dynamics in the arsenal of mechanism design has been only recently proposed \cite{Wol12,Yan17}, yet its formal treatment remains unexplored. Applying these concepts in a real options setting and bringing forth theoretically provable properties concerning a market's network effects and the adoption rate of competing technologies, our work extends previous studies that identify and reason about these connections mainly via experiments \cite{Bec08,Mak10,Bra16,Sme20}. Further supporting the scarcity of a formal approach, \cite{Bat16} emphasizes the increasing complexity of economic and social interactions and argues about the need to join forces and develop tools from complexity theory, as a complement to existing economic modeling approaches. Focusing on physical applications, the main field where catastrophe theory is studied, \cite{Sch09} uses the fold catastrophe model to identify early warning signs of critical transitions in complex systems. In relation to mechanism design, such results can amplify the benefits of the catastrophe model as a tool to create, rather than prevent, critical transitions in the hands of policymakers who can, thus, timely predict the effects of their policies. 

\paragraph{Outline.} The rest of the paper is structured as follows. In \Cref{sec:model,sec:behavior}, we provide the formal definitions of the game-theoretical and behavior framework. Our working assumptions (which we relax in \Cref{sec:robustness}) are presented in \Cref{sub:elementary}. \Cref{sec:results} contains the technical work up to the design of the main catastrophe mechanism that is presented and explained in \Cref{sub:hysteresis}. In \Cref{sec:robustness}, we establish the robustness of the proposed mechanism under a broad set of assumptions. We conclude with an illustrative application in the context of blockchain mining in \Cref{sec:applications}. Technical proofs and accompanying materials are deferred to \Cref{app:appendix,app:b}.

%\subsection{Related Work}
%
%\begin{itemize}
%\item \cite{San10}: (Maynard Smith) When a game has more than one ESS, then, in order to account for the present state of the population, one has to allow for initial conditions—that is, for the state of the ancestral population. \\
%The equilibrium approach is a standard practice in applications of game theory, but as the preceding quotations emphasize, this approach is incomplete, and should be complemented by an analysis of dynamics. Further, Maynard Smith points out that local stability analysis, which checks whether equilibrium play will be restored after small disturbances in behavior, is only a first step, as it begs the question ofhowequilibrium is established in the first place. 
%(Beckmann, McGuire, and Winsten (1956, 70)) Besides this stability “in the small”, one may consider stability “in the large”—that is, the ability of the system to reach an equilibrium from any initial position.
%\end{itemize}

%Here $\alpha>0$ is a parameter that expresses the relationship between the value created by a technology and the degree of its adoption.
%Here $W$ and $S$ stand for Proof of Work and Proof of Stake, respectively, but the model applies to any similar setting. 
%(fairness property of prevailing blockchain reward schemes, \cite{Chen19,Fia19})

\section{Model: Population Game}\label{sec:model}
We consider a society or population\footnote{The theory of population games and revision protocols that we use here closely follows \cite{San10}.} $p$ of acting agents or investors who form a continuum of mass $K>0$. Here $K$ denotes the total available capital or resources that the agents are willing to invest and is expressed in monetary terms. The set of available actions or technologies is denoted by $A=\{W,S\}$, where $W$ is the costly (wasteful) technology and $S$ is the innovative (socially and individually preferable) technology. In particular, investing one monetary unit in technology $W$ incurs a cost of $\gamma>0$ to the investor, while the investment in technology $S$ incurs zero cost. The latter assumption ensures that (rational) agents may disregard potential alternatives and invest all available resources on either of the two technologies (there is no loss from doing so). Accordingly, the set of \emph{population states} is $X=\{\(x,1-x\): x\in[0,1]\}$, where $x\in[0,1]$ denotes the fraction of agents (in terms of capital or resources) in population $p$ that is choosing technology $W$. Thus, we may slightly abuse notation and refer to $x\in[0,1]$ as the \emph{population state}. \par
The \emph{payoff function}, $u:A\times X\to \mathbb R^2$, assigns to each population state a vector of payoffs, one for each strategy in $A$. We assume that the total value created by each technology $\{W,S\}$ in $A$ depends on the fraction $xK$ of capital that has been invested in that technology via a parameter $\alpha>0$ and that the total value is distributed evenly among all invested units. We assume that either technology can generate an aggregate value $V>K$, if fully adopted by the population. In particular, the values $V\(W,x\)$ and $V\(S,x\)$ created by technologies $W$  and $S$ depend on the population state $x\in [0,1]$ via the relationships 
\begin{equation}\label{eq:value}
V\(W,x\)=V\cdot\(xK\)^\alpha \qquad \text{and} \qquad V\(S,x\)=V\cdot\lt\(1-x\)K\rt^\alpha\,.
\end{equation}
Different values of $\alpha$ give rise to different \emph{network effects or externalities}.\footnote{This is discussed in more detail in \Cref{rem:alpha}.} In particular, $\alpha<1$ implies subadditive value (it is optimal for the population to split), $\alpha=1$ implies linear value, and $\alpha>1$ implies superadditive value, i.e., the population is better off if it fully adopts either of the two technologies. Combining the above, the payoff of each strategy $\{W,S\}\in A$ is given by 
\begin{subequations}
\label{eq:payoffs}
\begin{align}
u\(W,x\)&=V\(W,x\)\cdot\frac{1}{xK}-\gamma=V\cdot\frac{\(xK\)^\alpha}{xK}-\gamma=V\cdot\(xK\)^{\alpha-1}-\gamma\\
u\(S,x\)&=V\(S,x\)\cdot\frac{1}{\(1-x\)K}=V\cdot\frac{\lt\(1-x\)K\rt^\alpha}{xK}=V\cdot\lt\(1-x\)K\rt^{\alpha-1}
\end{align}
\end{subequations}
Hence, the \emph{average payoff} obtained by the members of the population at state $x\in[0,1]$ is equal to 
\begin{equation}\label{eq:aggregate}
\bar{u}\(x\)=xu\(W,x\)+\(1-x\)u\(S,x\)=VK^{\alpha-1}\lt x^\alpha-\frac{\gamma x}{VK^{\alpha-1}}+\(1-x\)^\alpha\rt
\end{equation}
and the \emph{aggregate payoff} that is achieved by the population as a whole is $u_A\(x\)=K\bar{u}\(x\)$. The cost $K\gamma x$ is paid by the population as a whole and hence, captures the negative externality (or cost) of the undesirable technology.

\subsection{Evolutionary Game and Nash Equilibria}\label{sub:elementary}
To study instances with positive network externalities (or direct network effects \cite{Mak10}) in which full adoption of one of the technologies is preferable, our main focus will be the case $\alpha\ge1$. For expositional purposes, we will restrict our attention to the case $\alpha=2$, but all arguments essentially carry over to any $\alpha>1$ (and to the trivial case, $\alpha=1$) as we show in \Cref{sub:robust_a}. Specifically, for $\alpha=2$, \cref{eq:payoffs,eq:aggregate} become linear in $x$,
\begin{equation}\label{eq:payoffs_2}
u\(W,x\)=VKx-\gamma, \qquad u\(S,x\)=VK\(1-x\)
\end{equation}
which allows for an equivalent --- yet more intuitive --- interpretation of the agents' interaction as a single population evolutionary game, cf. \cite{Hof98}. By substituting $x=1$ (action $W$ for the whole population) and $x=0$ (action $S$) in \eqref{eq:payoffs_2}, we can represent the game by the matrix
\begin{equation}\label{eq:game}
P = \;\;\;\bordermatrix{
~ & W & S \cr
W & VK-\gamma & -\gamma \cr
S & 0 & VK \cr}.
\tag{G1}
\end{equation}
Since $V>K> \gamma>0$, we may henceforth normalize $VK$ to $1$ and write $\gamma\to \gamma/VK$ with $\gamma \in \(0,1\)$. The equilibria of the resulting game are characterized next. The proof is standard and is presented for completeness in \Cref{app:appendix}.
\begin{proposition}\label{prop:evolutionary}
The payoff functions in \eqref{eq:payoffs_2} describe a single population, evolutionary game with three Nash equilibria: $x_1=0$, $x_2=\(1+\gamma\)/2$ and $x_3=1$ with average payoffs $\bar{u}\(x_1\)=1$, $\bar{u}\(x_2\)=\(1-\gamma\)/2$ and $\bar{u}\(x_3\)=1-\gamma$. The two pure equilibria, $x_1=0$ and $x_3=1$ are evolutionary stable, while the fully mixed equilibrium, $x_2$, is not. Equilibrium $x_1$ -- in which the desirable technology is fully adopted -- is strictly payoff and risk dominant.
\end{proposition}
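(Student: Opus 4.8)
The plan is to treat \eqref{eq:game} as a symmetric $2\times 2$ game and extract the equilibria directly from the linear payoff functions \eqref{eq:payoffs_2} after the normalization $VK=1$, for which $u(W,x)=x-\gamma$ and $u(S,x)=1-x$. First I would introduce the incentive function $g(x):=u(W,x)-u(S,x)=2x-1-\gamma$, the payoff advantage of adopting $W$ at population state $x$. Since $g$ is continuous and strictly increasing with a unique zero at $x_2=(1+\gamma)/2\in(0,1)$ (using $\gamma\in(0,1)$), the Nash equilibria are exactly: the interior rest point $x_2$ of $g$; the boundary state $x_3=1$, where $g(1)=1-\gamma>0$ so no $W$-player gains by switching to $S$; and $x_1=0$, where $g(0)=-1-\gamma<0$ so no $S$-player gains by switching. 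Substituting into $\bar u(x)=x\,u(W,x)+(1-x)\,u(S,x)$ gives the stated averages: $\bar u(x_1)=u(S,0)=1$, $\bar u(x_3)=u(W,1)=1-\gamma$, and, since at $x_2$ the two strategies are indifferent, $\bar u(x_2)=u(S,x_2)=(1-\gamma)/2$.

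For the stability claims I would argue through the single-population replicator dynamics $\dot x=x(1-x)\,g(x)$ (equivalently, via the strict/non-strict Nash distinction, which for two strategies coincides with the ESS/non-ESS distinction). The sign pattern of $g$ --- negative on $[0,x_2)$ and positive on $(x_2,1]$ --- makes $x_1=0$ and $x_3=1$ asymptotically stable, and each is moreover a \emph{strict} symmetric Nash equilibrium, since $1-\gamma>0$ and $1>-\gamma$ give a unique best reply to itself, hence evolutionarily stable; the same sign pattern makes $x_2$ a source, and $x_2$ fails the ESS inequality because, being completely mixed, $u(x_2,\sigma)-u(\sigma,\sigma)$ has the sign of $-g$ evaluated off $x_2$ and is therefore negative on both sides. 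This is the coordination-game phenomenon: both pure states are attracting and the interior state separates their basins.

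Finally, to rank the two pure equilibria I would check the two dominance notions directly. Payoff (Pareto) dominance is immediate: $\bar u(x_1)=1>1-\gamma=\bar u(x_3)$. For risk dominance (Harsanyi--Selten) I would compare deviation-loss products across the pure profiles $(S,S)$ and $(W,W)$: the loss from unilaterally deviating at $(S,S)$ is $1-(-\gamma)=1+\gamma$, strictly larger than the loss $(1-\gamma)-0=1-\gamma$ at $(W,W)$ precisely because $\gamma>0$; equivalently, $S$ is the unique best reply to the uniform belief, $u(S,\tfrac12)=\tfrac12>\tfrac12-\gamma=u(W,\tfrac12)$. Hence $x_1$ is both strictly payoff and strictly risk dominant. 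I do not expect a real obstacle --- every inequality reduces to $\gamma\in(0,1)$ --- and the only point deserving care is to invoke the equivalence between replicator-stability and ESS only where it is valid (two strategies) rather than in full generality.
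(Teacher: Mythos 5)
Your proposal is correct and follows essentially the same route as the paper: it identifies the three Nash equilibria of the normalized symmetric $2\times 2$ game, verifies the average payoffs, gets evolutionary stability of the pure states from their being strict equilibria, shows the mixed state fails the ESS inequality via the same quadratic computation $u(x_2,x)-u(x,x)=-2(x-x_2)^2<0$, and checks payoff and risk dominance by equivalent criteria (your deviation-loss products $1+\gamma>1-\gamma$ versus the paper's $0+1>1-2\gamma$). The only quibble is your parenthetical that $u(x_2,\sigma)-u(\sigma,\sigma)$ \enquote{has the sign of $-g$}: the correct identity is $(x_2-\sigma)g(\sigma)$, which is negative on both sides of $x_2$ as you conclude, but not literally of the sign of $-g$ for $\sigma<x_2$.
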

This formulation provides a theoretical explanation of the reasons why the more efficient technology may not succeed. In particular, starting from the prevailing equilibrium of the wasteful technology, evolutionary stability implies that even if a small part of the population adopts the new technology, the system will not move away from the current equilibrium. Hence, although payoff and risk-dominated by the more efficient technology, the $W$ (or equivalently $x=1$) equilibrium persists. This creates the lock-in at the wasteful technology. To formally reason about potential mechanisms to disrupt this situation, we first need to develop a proper adaptive learning model that accurately describes the updates of the population states.

\section{Behavioral Model}%Revision Protocols and Reinforcement Learning}
\label{sec:behavior}

The adoption of new technologies is a gradual process at which concerned individuals adjust their actions --- distribution of their investments between old and new technologies --- via repeated interaction with their environment. In the presence of network effects, each agent's payoff critically depends on the constantly evolving \emph{population state} or equivalently on the fractions $x,1-x$ of the population that adopts either of the two technologies, cf. equation \eqref{eq:payoffs}. In this context, \cite{San10} provides several alternative microfoundations of the standard \emph{replicator dynamics} under the term \emph{revision protocols}. \par
To improve upon the sub-optimal outcomes that are often reached by the greedy and myopic updates of replicator dynamics \cite{Fia19j} and, importantly, to more accurately model agent's strategic deviations from expected utility maximization in risky choices under uncertainty (as in the present context), more elaborate models of adaptive learning have been proposed \cite{Cam99,Cam03}. Specifically, the effects of exploration and exploitation in coordination games with path-dependence (history of play) and network effects, are commonly examined via a smooth variant of \emph{Q-learning dynamics}.\footnote{This variant of Q-learning has been extensively studied in the related literature under various names. A (highly) non-exhaustive list includes \cite{Les05,Tuy06,Kia12} and \cite{Wol12} who study the dynamics as \emph{Boltzmann Q-learning dynamics} or simply \emph{Q-learning dynamics} --- we will follow this convention --- \cite{Alo10,Rom15} and \cite{San18} as \emph{logit-response} and \cite{Cou15,Mer16} as exponential reinforcement learning dynamics.}. Its connection with game-theoretical settings has been recently elaborated  \cite{Tuy03,Sat03,Sat05,Wol12,Kia12} and provides the building block for our subsequent analysis. 

\subsection{Population States with Q-learning Dynamics}

We consider a homogeneous population in which each agent is adapting their strategy by repeatedly interacting with their environment (rest of the population). The critical parameter that we need to track (and update) is the fraction $x\in[0,1]$ of the population that invests in the costly (and currently prevailing) technology $W$. To do this, we focus on the standpoint of a single agent (or unit of investment) and describe the system via the $Q$-learning dynamics which are based on the principle of \emph{Q-learning} \cite{Wat92,Cam99}. The connection to population games that is described here closely follows \cite{Tuy03,Kia12} and \cite{Yan17} without further reference.

\paragraph{$Q$-values:} At each time $t\ge0$, the learning agent assigns a value $Q_{t}\(j\)$ to each strategy $j\in A=\{W,S\}$ via the update rule
\begin{equation}\label{eq:qvalues}
Q_{t+1}\(j\)=Q_t\(j\)+\delta\lt u\(j, x_t\)-Q_t\(j\)\rt
\end{equation}
where $\delta>0$ is the learning rate and $u\(j,x_t\)$ is the reward from selecting strategy $j \in \{W,S\}$ (as given by equations \eqref{eq:payoffs}) when the distribution of the population is $x_t\in[0,1]$.
%Intuitively, the new $Q$-value for strategy $j$ is equal to the old $Q$-value plus the relative advantage of strategy $j$ against the current state of the environment (population of investors) scaled by a learning factor. 

\paragraph{Strategies \& Population States:} Using the $Q$-values, the critical decision for each learning agent is the update of her choice distribution. To avoid suboptimal results by greedy updating, i.e., selection of the strategy with the highest $Q$-value, the agents incorporate in their decision problem an entropy term that rewards \emph{exploration} of the whole action space (both technologies). In particular, we assume that each agent selects their strategy $x_{t}\in[0,1]$ at time point $t\ge0$ as the (unique) solution of the convex optimization problem\footnote{For an explicit derivation of the objective function see \Cref{app:appendix}.}
\begin{align}\label{eq:protocol}
x_t=\argmax_{x\in\(0,1\)}{\left\{xQ_t\(W\)+\(1-x\)Q_t\(S\)-T\lt x\ln{x}+\(1-x\)\ln{\(1-x\)}\rt\right\}}\tag{S1}
\end{align}
where $T\ge0$ is the \emph{control parameter} that tunes the exploration rate. In particular, for $T=0$, the agent selects the action with the highest Q-value (pure exploitation), whereas for $T\to\infty$, the agent randomizes between the two available actions, i.e., investments in technologies $W$ and $S$ (pure exploration). The decision rule or \emph{revision protocol} \cite{San10} in (S1) yields the choice distribution
\begin{equation}\label{eq:update}
x_t=\frac{e^{Q_{t}\(W\)/T}}{e^{Q_{t}\(W\)/T}+e^{Q_{t}\(S\)/T}}
\end{equation}
which is known as the \emph{Boltzmann distribution}. With a slight abuse of notation, $x_t$ denotes both the learning agents' choice distribution and the state of the population. However, under the assumption that all agents are symmetric and that they are learning concurrently, both these notions are equivalent. 

\paragraph{Continuous-time dynamics:} The population game in \eqref{eq:game}, together with the revision protocol in \eqref{eq:protocol}, determine the evolutionary population dynamics. In particular, if we take the time interval to be infinitely small, this sequential joint learning process can be well approximated --- after rescaling the time horizon to $t\to\delta t/T$ --- by the continuous-time dynamics
\begin{equation}\label{eq:dynamics_pre}
\dot x = x\lt u\(W,x\)-\bar{u}\(x\) + T\sum_{j=W,S} x_j \ln{\(x_j/x\)} \rt
\end{equation}
(where $x_W:=x$ and $x_S:=1-x$), which is the desired expression of the dynamics in terms of population states $x\in[0,1]$ (rather than $Q$-values). 
%As mentioned above, the first term in the parenthesis (the difference $u\(W,x\)-\bar{u}\(x\)$) asserts that the probability of selecting technology $W$ increases with a rate proportional to the overall efficiency of that strategy over the average member (unit of resource) in the population, while the second term describes the agent’s tendency to explore alternative strategies (technologies). 

\paragraph{Quantal Response Equilibria (QRE):} For any given $T\ge0$, the steady states of the system in \eqref{eq:dynamics_pre} are the values of $x\in\(0,1\)$ for which the expression on the right side becomes zero. As shown in \cite{Kia12}, these are precisely the 
%solutions (if they exist) of the maximization problem
%\begin{equation*}
%x^* \in \argmax_x\left\{u\(W,x\)-\bar{u}\(x\)-T\sum_{j=W,S}x_j\ln{x_j}\right\}
%\end{equation*}
%for any given value of \emph{temperatures}, $T>0$ (the case $T=0$ corresponds to the replicator dynamics and is best treated separately). A direct calculation yields the solutions 
%\begin{equation}\label{eq:quantal}
%x^* =\frac{e^{u\(W,x^*\)/T}}{e^{u\(W,x^*\)/T}+e^{u\(S,x^*\)/T}}
%\end{equation}
%which is known as the Gibbs distribution. Points $x\in\(0,1\)$ that satisfy equation \eqref{eq:quantal} are known as 
\emph{Quantal Response Equilibria (QRE)} of the underlying population game \cite{Mck95}. Assuming the standard logit form, for any $T\ge0$, the QRE are defined as all points $x^*\(T\)\in[0,1]$ that satisfy the equation 
\begin{equation}\label{eq:quantal}
x^*\(T\) =\frac{e^{u\(W,x^*\)/T}}{e^{u\(W,x^*\)/T}+e^{u\(S,x^*\)/T}}.
\end{equation} 
Importantly, as shown in \cite{Yan17}, in coordination games with 2 strategies, starting from any interior point\footnote{The introduction of the exploitation term renders the choices $x=0$ and $x=1$ not admissible, since they are not in the domain of $\ln{\(x/\(1-x\)\)}$ for any $T>0$. However, for practical applications, this is a realistic assumption.} $x\in\(0,1\)$, the $Q$-learning dynamics converge to interior rest points (QRE) for any $T\ge0$. 

\section{Analysis: Steady Population States} % of the Main Model%Q-Learning Dynamics in a Homogeneous Population}
\label{sec:results}

Our main task is to understand how an external designer can influence the state of the population, i.e., the distribution of the aggregate investment capital across the two different technologies, by modifying the control parameter $T\ge0$. Before formally discussing the proposed mechanism in \Cref{sub:hysteresis}, we first need to reason about the solutions (steady states) and stability properties of the dynamics in equation \eqref{eq:dynamics_pre} for all different values of the control parameter $T\ge0$. We start with an observation that immediately follows from \eqref{eq:payoffs} and \eqref{eq:dynamics_pre}

\begin{lemma}\label{lem:explicit}
For a given cost parameter $\gamma\in\(0,1\)$, consider the game described by equation \eqref{eq:game}, and the revision protocol in \eqref{eq:protocol}. Then, the updates in the fraction of investments on the wasteful technology $W$, i.e., population state $x\in[0,1]$, are governed by the continuous-time dynamics 
\begin{equation}\label{eq:dynamics}
\dot x=x\(1-x\)\lt 2x-\(1+\gamma\)-T\ln{\(\frac{x}{1-x}\)}\rt.
\end{equation}
for any value $T\ge0$ of the control parameter.
\end{lemma}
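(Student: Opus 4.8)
The plan is to derive \eqref{eq:dynamics} by straightforward substitution into the general Q-learning dynamics \eqref{eq:dynamics_pre}, so the proof is essentially a verification computation. First I would recall that after the normalization $VK\to 1$, the payoffs in \eqref{eq:payoffs_2} read $u\(W,x\)=x-\gamma$ and $u\(S,x\)=1-x$, which gives the payoff difference $u\(W,x\)-u\(S,x\)=2x-\(1+\gamma\)$. Next I would compute the average payoff $\bar u\(x\)=x\,u\(W,x\)+\(1-x\)u\(S,x\)$ and note that $u\(W,x\)-\bar u\(x\)=\(1-x\)\lt u\(W,x\)-u\(S,x\)\rt=\(1-x\)\lt 2x-\(1+\gamma\)\rt$, which already produces the factor $x\(1-x\)$ in front once multiplied by the leading $x$ in \eqref{eq:dynamics_pre}.

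Then I would handle the entropy term. With $x_W=x$ and $x_S=1-x$, the sum $\sum_{j} x_j\ln\(x_j/x\)$ equals $x\ln\(x/x\)+\(1-x\)\ln\(\(1-x\)/x\)=\(1-x\)\ln\(\frac{1-x}{x}\)=-\(1-x\)\ln\(\frac{x}{1-x}\)$. Hence $T\sum_j x_j\ln\(x_j/x\)=-T\(1-x\)\ln\(\frac{x}{1-x}\)$. Substituting both pieces into \eqref{eq:dynamics_pre} yields
\begin{equation*}
\dot x = x\lt \(1-x\)\(2x-\(1+\gamma\)\) - T\(1-x\)\ln\(\tfrac{x}{1-x}\)\rt = x\(1-x\)\lt 2x-\(1+\gamma\)-T\ln\(\tfrac{x}{1-x}\)\rt,
\end{equation*}
which is exactly \eqref{eq:dynamics}. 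I would also note this holds for every $T\ge 0$, with the $T=0$ case recovering the replicator dynamics $\dot x = x\(1-x\)\(2x-\(1+\gamma\)\)$ whose interior zero is the mixed Nash equilibrium $x_2=\(1+\gamma\)/2$ from \Cref{prop:evolutionary}, providing a consistency check.

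There is no real obstacle here; the only thing requiring a sentence of care is the rescaling of time $t\to\delta t/T$ already performed in passing from \eqref{eq:qvalues}--\eqref{eq:update} to \eqref{eq:dynamics_pre}, so that \eqref{eq:dynamics} is understood up to this positive time-reparameterization (which does not affect the location of steady states or their stability type). I would close by remarking that the expression is manifestly smooth on $\(0,1\)$ and that the boundary points $x=0,1$ are stationary, consistent with the domain restriction imposed by the $\ln\(x/\(1-x\)\)$ term noted earlier in \Cref{sec:behavior}.
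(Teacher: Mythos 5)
Your proposal is correct and follows essentially the same route as the paper's proof: a direct substitution into \eqref{eq:dynamics_pre}, using the identity $u\(W,x\)-\bar{u}\(x\)=\(1-x\)\lt u\(W,x\)-u\(S,x\)\rt$ together with the simplification of the entropy term to $\(1-x\)\ln\(\frac{1-x}{x}\)$, and then factoring out $x\(1-x\)$. The added consistency checks (recovering the replicator dynamics at $T=0$ and the remark on time reparameterization) are harmless extras not present in the paper's one-line computation.
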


When determining the steady states (QRE) and stability properties of equation \eqref{eq:dynamics} for different $T\ge0$, it is more intuitive to treat the instance $T=0$ separately. In this case, the system reduces to the well-known replicator dynamics, and its steady states are precisely the Nash equilibria of the respective evolutionary game in \eqref{eq:game}.\par
Concerning stability, the one-dimensional differential equation in \eqref{eq:dynamics} describes a vector field on the line and specifies the velocity vector $\dot x$ at each $x\in[0,1]$. Accordingly, a fixed point, i.e., a $x_0\in[0,1]$ such that $\dot x_0=0$, is \emph{stable} if the flow points towards it, i.e., if $\dot x>0$ for any $x<x_0$ and $\dot x<0$ for any $x>x_0$ such that $|x-x_0|<\epsilon$ for some $\epsilon>0$, i.e., the property needs to be satisfied in a (strictly positive) neighborhood around $x_0$. The neighborhood of $x_0$ for which this holds is called its \emph{attracting region} \cite{Pan16}. If such an attracting region does not exist, the fixed point is called unstable \cite{Str00}. 

\begin{proposition}\label{prop:t_zero}
For $T=0$, the steady states of the Q-learning dynamics in equation \eqref{eq:dynamics} are $x_1=0,x_2=\(1+\gamma\)/2$, and $x_3=1$. The steady states on the boundary, i.e., $x_1$ and $x_3$, are stable, whereas $x_2$ is unstable.
\end{proposition}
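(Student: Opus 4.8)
The plan is to analyze the one-dimensional dynamics in \eqref{eq:dynamics} at $T=0$, where the equation collapses to the replicator dynamics $\dot x = x(1-x)[2x-(1+\gamma)]$. First I would locate the steady states by setting the right-hand side to zero: the factorized form immediately gives the three roots $x_1=0$, $x_3=1$ from the boundary factors $x$ and $(1-x)$, and $x_2=(1+\gamma)/2$ from the linear factor $2x-(1+\gamma)$; since $\gamma\in(0,1)$, we have $x_2\in(1/2,1)\subset(0,1)$, so all three are distinct and $x_2$ is genuinely interior. This part is purely routine and matches the Nash equilibria already identified in \Cref{prop:evolutionary}.

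Next I would determine stability using the sign-of-the-velocity criterion stated just above the proposition. The cleanest route is a sign analysis of $f(x) := x(1-x)[2x-(1+\gamma)]$ on $(0,1)$: the prefactor $x(1-x)$ is strictly positive there, so the sign of $\dot x$ equals the sign of $2x-(1+\gamma)$. Hence on $(0,x_2)$ we have $\dot x<0$ and on $(x_2,1)$ we have $\dot x>0$. For $x_1=0$: to its right (for small $x>0$) the flow is negative, i.e.\ pointing toward $x_1$, so $x_1$ is stable (on the relevant side; the domain boundary handles the other side). For $x_3=1$: to its left (for $x$ slightly below $1$) the flow is positive, pointing toward $x_3$, so $x_3$ is stable. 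For $x_2$: immediately to its left $\dot x<0$ (flow points away, toward $x_1$) and immediately to its right $\dot x>0$ (flow points away, toward $x_3$), so no attracting neighborhood exists and $x_2$ is unstable. Alternatively, one can compute $f'(x)$ and check the sign of $f'$ at each fixed point ($f'(x_1)>0$, $f'(x_3)>0$, $f'(x_2)<0$ would give the same conclusions via linearization), but the sign-chart argument is more elementary and aligns with the definition the paper actually uses.

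I do not expect any serious obstacle here: this is a textbook one-dimensional phase-line analysis. The only mild subtlety worth a sentence is the behavior at the endpoints $x_1=0$ and $x_3=1$, where the stability condition ``$\dot x>0$ for $x<x_0$ and $\dot x<0$ for $x>x_0$'' can only be checked on the side that lies inside $[0,1]$; since the state space is the compact interval $[0,1]$ and the boundary points are invariant (the factors $x$ and $1-x$ vanish there), it suffices that the flow points inward, which is exactly what the sign analysis shows. One should also note that this $T=0$ case is treated separately precisely because the logarithmic term $T\ln(x/(1-x))$ is absent, so the degeneracy at the boundary that arises for $T>0$ does not occur and the Nash equilibria $x_1,x_3$ remain genuine steady states.
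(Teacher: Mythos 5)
Your proposal is correct and follows essentially the same route as the paper: factor the $T=0$ dynamics as $x(1-x)(2x-1-\gamma)$, read off the three roots, and determine stability from the sign of $\dot x$ on the intervals $(0,(1+\gamma)/2)$ and $((1+\gamma)/2,1)$. The extra remarks on the one-sided check at the boundary points are a sensible (if not strictly necessary) refinement of the same argument.
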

\begin{proof}
For $T=0$, the dynamics become $\dot x=x\(1-x\)\(2x-1-\gamma\)$ and the first claim follows trivially. Concerning their stability, observe that $\dot x<0$ for any $x\in\(0,\(1+\gamma\)/2\)$ and $\dot x>0$ for $x\in\(\(1+\gamma\)/2,1\)$. Hence, starting from any point other than $x=\(1+\gamma\)/2$, the system will converge to the boundary steady states, i.e, to $x_1=0$ for any initial starting point $x<x_2$ and to $x_3=1$ for any initial starting point $x>x_2$, which completes the proof.
\end{proof}
The steady states and their stability properties for $T=0$ are illustrated in \Cref{fig:stability_a}. To proceed with the general case, $T>0$, we restrict to interior population states, i.e., $x\in\(0,1\)$, so that $\ln{\(x/\(1-x\)\)}$ is well defined. For $x\in\(0,1\)$, the term $x\(1-x\)$ is always strictly positive and hence, the fixed points (QRE) and velocity of the dynamics in equation \eqref{eq:dynamics} are fully determined by the term $\lt 2x-\(1+\gamma\)-T\ln{\(x/\(1-x\)\)}\rt$. Hence, it will be convenient to introduce the following notation.
\begin{notation}\label{def:f}
For given cost and control parameters, $\gamma\in\(0,1\)$ and $T\ge 0$, let
\begin{equation}\label{eq:f_sign}
f\(x;T,\gamma\):=2x-\(1+\gamma\)-T\ln{\(\frac{x}{1-x}\), \qquad \text{for } x\in\(0,1\)}.
\end{equation}
Whenever obvious from the context, we will simplify notation and write $f\(x\)$.
\end{notation}
Keeping in mind that $T$ is viewed as a control parameter that may vary over time (in response to the actions of some exogenous actor), a two-dimensional visualization of the geometric locus of all points $x^*\in\(0,1\)$ such that $f\(x;T,\gamma\)=0$ --- i.e., of the QRE correspondence --- for selected fixed values of $\gamma$ and variable $T\ge0$ is given in \Cref{fig:sub1}. \Cref{fig:sub2} shows the QRE correspondence in a three-dimensional space --- where both $\gamma$ and $T$ are treated as variables --- and \Cref{fig:sub3} shows its projection on the cost-control, $\(\gamma,T\)$, parameter space. 

\begin{figure}[!htb]
\centering
\begin{subfigure}{0.33\linewidth}
\centering
\includegraphics[width=\textwidth]{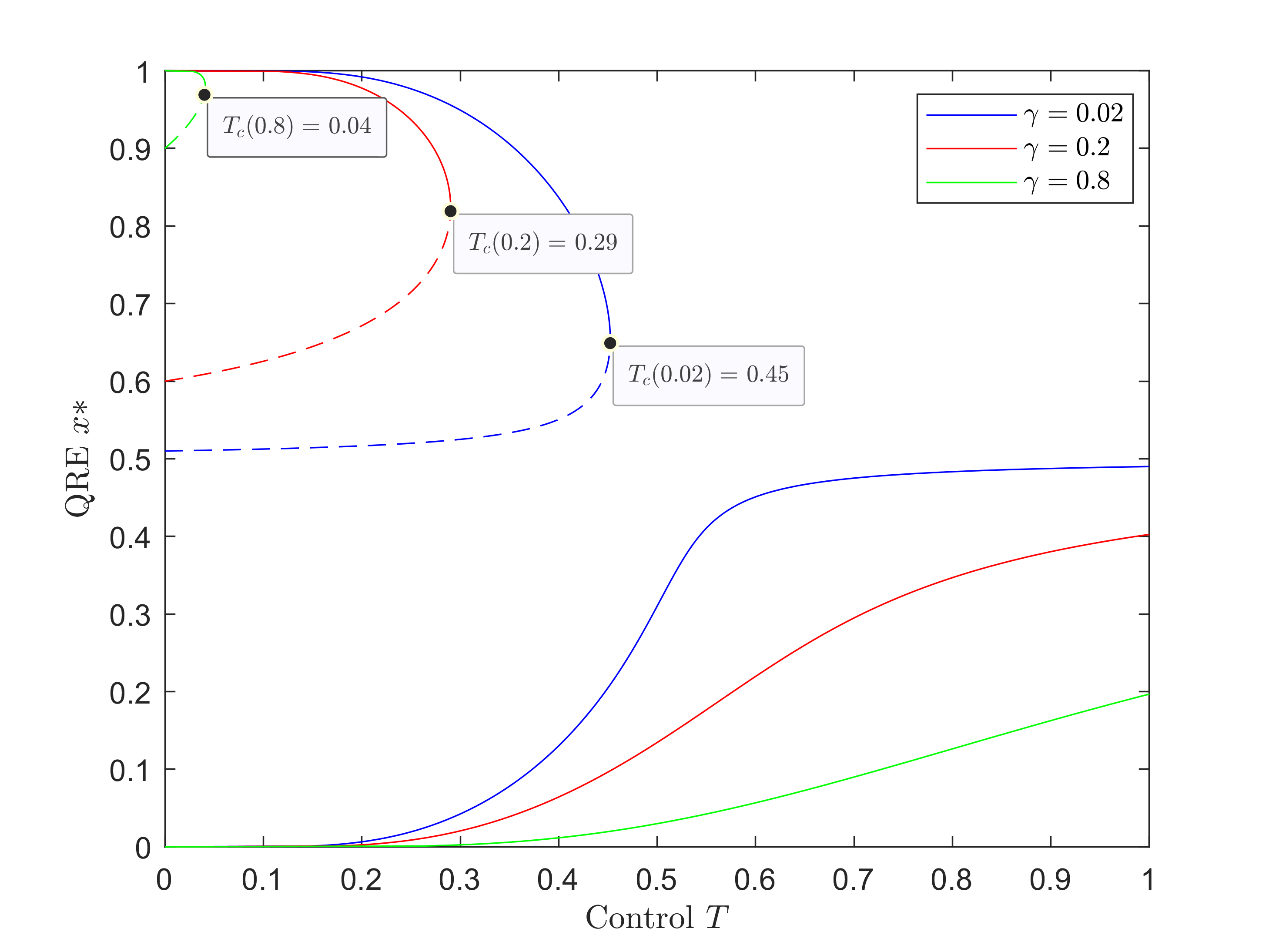}
\caption{}
\label{fig:sub1}
\end{subfigure}%
\begin{subfigure}{.33\linewidth}
\centering
\includegraphics[width=\textwidth]{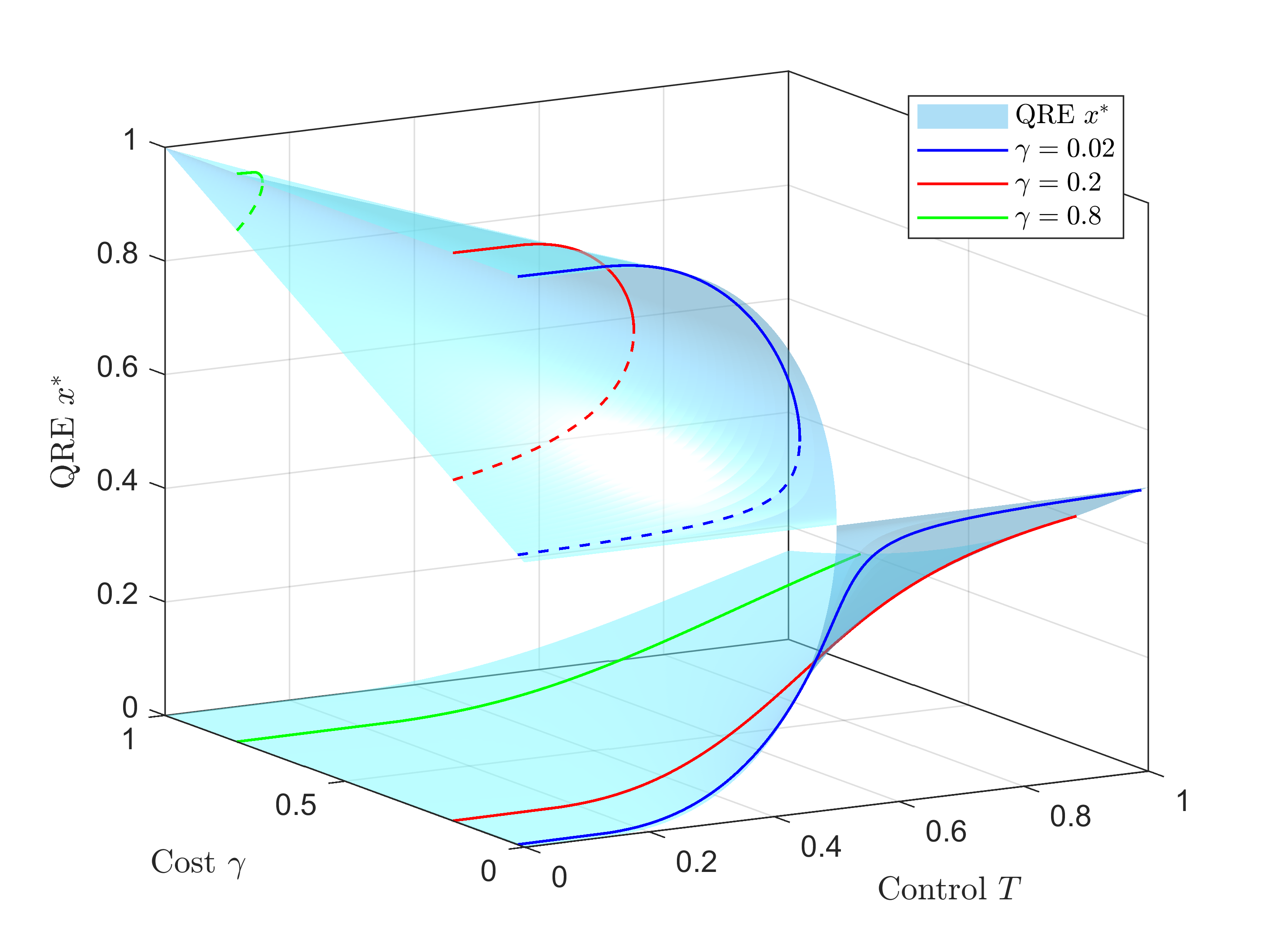}
\caption{}%QRE correspondence for all $\gamma\in\(0,1\)$.}
\label{fig:sub2}
\end{subfigure}%
\begin{subfigure}{.33\linewidth}
\centering
\includegraphics[width=\textwidth]{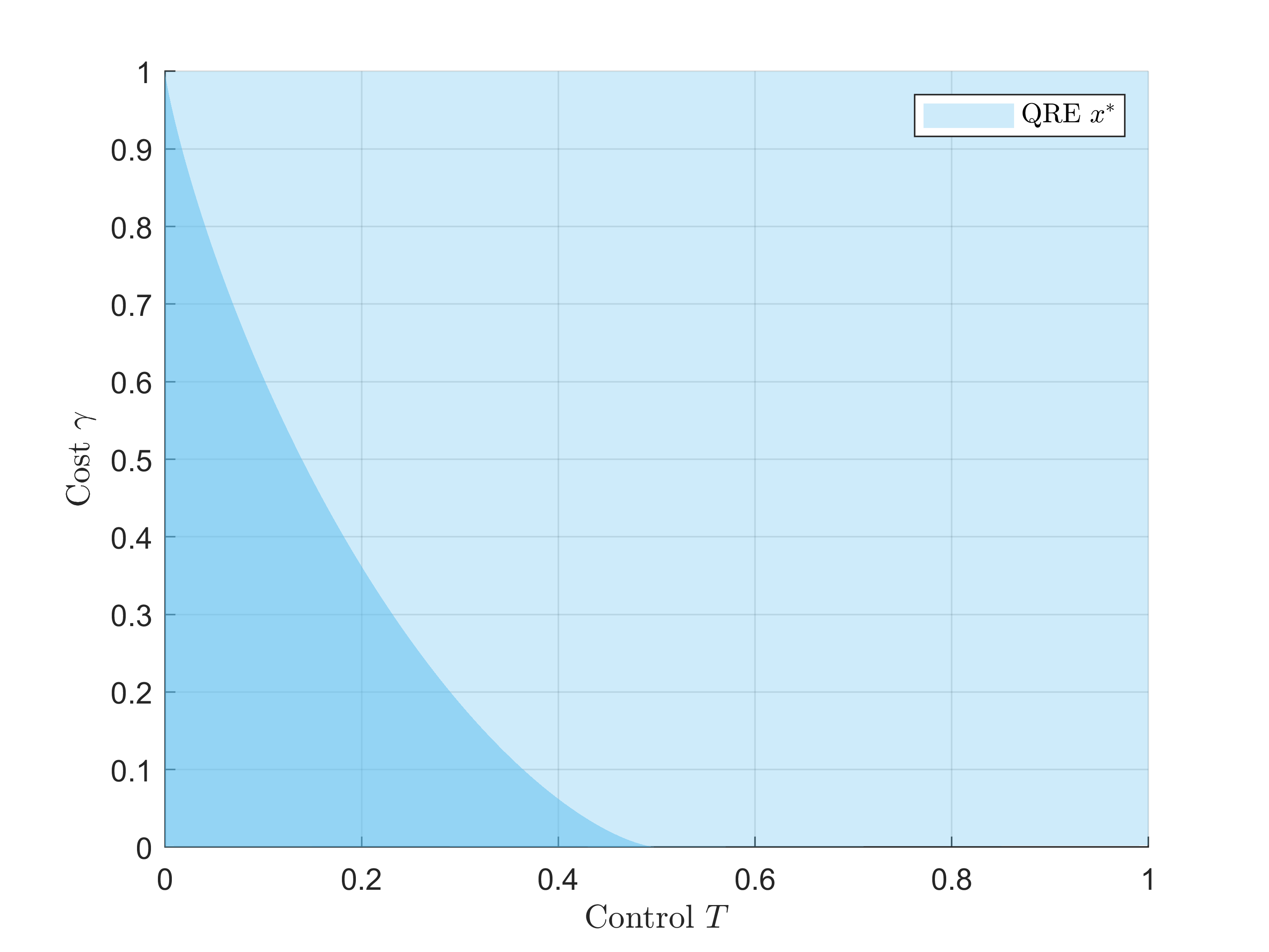}
\caption{}%Projection on the $\(\gamma,T\)$ plane.}
\label{fig:sub3}
\end{subfigure}
\caption{\Cref{fig:sub1} shows the QRE correspondence (geometric locus of QRE) for $\gamma=0.02$ (blue line), $\gamma=0.2$ (red line), and $\gamma=0.8$ (green line). In all cases, there exists a critical value $T_c\(\gamma\)$, which determines the range of values of $T\ge0$ for which there are three, two, or one QRE. Dashed lines indicate \emph{unstable} QRE; see \Cref{thm:main}.\\
\Cref{fig:sub2} shows the QRE correspondence for all possible values of $\gamma\in\(0,1\)$. The $\(T,x^*\)$-slices at the $\gamma=0.02,0.2$, and $0.8$ levels correspond precisely to the blue, red, and green lines in \Cref{fig:sub1}. \Cref{fig:sub3} shows the projection of the QRE correspondence on the $\(\gamma,T\)$ plane. Darker areas correspond to parameter values with multiple (three) QRE. The critical values $T_c\(\gamma\)$, at which the phase transitions from one to three QRE occur, correspond to the middle boundary that separates the dark and light regions.}
\label{fig:sub}\vspace*{-0.4cm}
\end{figure}

\subsection{Critical values of the control parameter}
\label{sub:critical}
The main observation from \Cref{fig:sub} is that for each $\gamma\in\(0,1\)$, there exists a unique critical value, $T_c\(\gamma\)$ of the control parameter such that the number of steady states (QRE) depends on whether $T$ is less than, equal to or larger than $T_c\(\gamma\)$. In particular, for $T<T_c\(\gamma\)$, there exist three QRE, which for $T=0$ are precisely the Nash equilibria of the underlying game, for $T=T_c\(\gamma\)$, i.e., at the transition point, there exist two QRE, and for $T>T_c\(\gamma\)$, there remains only one QRE. In all cases, the critical temperature $T_c$ lies in the interval $\(0,1/2\)$. These properties are formalized in \Cref{thm:main}. Before proceeding with its statement, however, it will be convenient to introduce the following notation. 

\begin{notation}\label{eq:notation}
For any $T\in[0,1/2]$, let $x_{u,l}\(T\):=\frac12\(1\pm\sqrt{1-2T}\)$. Whenever obvious from the context, we will omit the dependence on $T$ and write $x_{l,u}$.
\end{notation}
Using the above, we can now formulate \Cref{thm:main}. 
\begin{theorem}[Steady states and stability of the Q-learning dynamics]\label{thm:main}
For any value of the cost parameter $\gamma\in\(0,1\)$, there exists a unique critical value of the control parameter $T_c\(\gamma\)$ with $T_c\(\gamma\)\in\(0,1/2\)$, so that the steady states of the population or equivalently, the Quantal Response Equilibria (QRE), of the continuous time Q-learning dynamics in \eqref{eq:dynamics}
\[\dot x=x\(1-x\)\lt 2x-\(1+\gamma\)-T\ln\(\frac{x}{1-x}\)\rt, \qquad x\in\(0,1\),\]
and their stability properties are determined by the relative value of $T$ in comparison to $T_c\(\gamma\)$ as follows
\begin{itemize}[topsep=0pt]
\item for $T<T_c\(\gamma\)$ there are 3 steady states $x_1,x_2,x_3$, with $x_1\in \(0,x_l\)$, $x_2\in \(\(1+\gamma\)/2,x_u\)$ and $x_3\in \(x_u,1\)$, 
\item for $T=T_c\(\gamma\)$ there are 2 steady states $x_1,x_2$, with $x_1\in\(0,x_l\)$, and $x_2=x_u$, and
\item for $T>T_c\(\gamma\)$ there is  1 steady state $x_1$, with $x_1\in\(0,x_l\)$ when $T< 1/2$ and $x_1\in \(0,1/2\)$ when $T\ge1/2$,
\end{itemize}
where $x_{l,u}$ are given in \Cref{eq:notation} whenever they exist, with $x_l<1/2$ and $x_u>\(1+\gamma\)/2$. In all cases, the steady state $x_1\in\(0,x_l\)$ is stable. For $T<T_c\(\gamma\)$, steady states $x_1\in\(0,x_l\)$ and $x_3\in \(x_u,1\)$ are stable whereas $x_2$ is not. In particular, for $T=T_c\(\gamma\)$, steady state $x_2=x_u$ is unstable.
\end{theorem}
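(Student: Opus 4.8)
The plan is to reduce the whole statement to the sign pattern of the single function $f(x)=f(x;T,\gamma)=2x-(1+\gamma)-T\ln(x/(1-x))$ on $(0,1)$: by \Cref{lem:explicit} the QRE are exactly its zeros, and since $x(1-x)>0$ on $(0,1)$ the sign of $\dot x$ agrees with that of $f$, so stability is read off the sign changes of $f$. (The case $T=0$ is already \Cref{prop:t_zero}, so I only treat $T>0$.) First I would record the shape of $f$: one computes $f'(x)=2-T/(x(1-x))$, which for $T<1/2$ vanishes exactly at the two points $x_l(T)<x_u(T)$ of \Cref{eq:notation}, is strictly negative on $(0,1)$ for $T>1/2$, and is nonpositive with a single zero at $x=1/2$ for $T=1/2$. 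Hence $f$ is strictly decreasing when $T\ge1/2$, while for $T<1/2$ it decreases on $(0,x_l)$, increases on $(x_l,x_u)$, and decreases on $(x_u,1)$, so $x_l$ is its unique local minimum and $x_u$ its unique local maximum; together with $f(0^+)=+\infty$ and $f(1^-)=-\infty$, the number of zeros of $f$ is thus determined entirely by the signs of $f(x_l)$ and $f(x_u)$.

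Next I would fix those two signs. For the minimum: since $x_l<1/2<x_u$ and $f$ is increasing on $(x_l,x_u)$, we get $f(x_l)<f(1/2)=-\gamma<0$ for every $T\in(0,1/2)$ (and $f(1/2)=-\gamma<0$ directly at $T=1/2$); on the strictly decreasing branch $(0,x_l)$ this yields exactly one zero $x_1\in(0,x_l)$ and no further zero below $x_l$ — the always-present stable state. For the maximum I would study $F(T):=f(x_u(T);T,\gamma)$ on $(0,1/2)$ and use an envelope argument (note $x_u(\cdot)$ is $C^1$ there): since $f'(x_u(T))=0$ we get $F'(T)=\partial_Tf(x_u(T),T)=-\ln(x_u(T)/(1-x_u(T)))<0$ because $x_u(T)>1/2$, so $F$ is strictly decreasing; combined with $\lim_{T\to0^+}F(T)=1-\gamma>0$ and $\lim_{T\to(1/2)^-}F(T)=f(1/2)=-\gamma<0$ this produces a unique $T_c(\gamma)\in(0,1/2)$ with $F(T_c)=0$, $F>0$ on $(0,T_c)$ and $F<0$ on $(T_c,1/2)$.

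The three regimes then follow by counting zeros on the three monotone branches. For $T<T_c$ ($f(x_u)>0$): besides $x_1\in(0,x_l)$ the increasing branch gives one zero $x_2\in(x_l,x_u)$ and the last branch one zero $x_3\in(x_u,1)$; to place $x_2$ above $(1+\gamma)/2$ I use $f((1+\gamma)/2)=-T\ln((1+\gamma)/(1-\gamma))<0$ together with $(1+\gamma)/2\in(x_l,x_u)$, where $x_u>(1+\gamma)/2$ holds for $T\le T_c$ because $F(T_c)=0$ forces $2x_u(T_c)-(1+\gamma)=T_c\ln(x_u/(1-x_u))>0$, hence $T_c<(1-\gamma^2)/2$, and $x_u(\cdot)$ is decreasing. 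For $T=T_c$: $x_2=x_u$ is a double zero of $f$, merging the former $x_2$ and $x_3$, leaving two QRE. For $T\in(T_c,1/2)$ ($f(x_u)<0$) and for $T\ge1/2$ ($f$ strictly decreasing): $x_1$ is the only zero, lying in $(0,x_l)$ resp. in $(0,1/2)$ via $f(1/2)=-\gamma<0$. Stability is then immediate: at $x_1$ and, when present, $x_3$, $f$ goes from $+$ to $-$ on a decreasing branch, so the flow points inward and these are stable; at $x_2\in(x_l,x_u)$ with $T<T_c$, $f$ goes from $-$ to $+$ on the increasing branch, so $x_2$ is unstable; and at $T=T_c$, $x_2=x_u$ is a local maximum with $f(x_u)=0$, so $\dot x\le0$ on both sides near $x_u$, the flow leaves it on the left, and $x_2$ is unstable as stated.

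The only step that is not pure bookkeeping is the uniqueness of $T_c$, and the envelope identity $F'(T)=\partial_Tf(x_u(T),T)$ is exactly what makes it short; a small observation that further streamlines things is that $f$ is monotone increasing on $(x_l,x_u)\ni1/2$, which gives $f(x_l)<f(1/2)=-\gamma<0$ for free and avoids a transcendental inequality at the local minimum. The remaining work is keeping track of sign patterns and of the auxiliary facts $T_c<(1-\gamma^2)/2$ and $f((1+\gamma)/2)<0$ used to locate the interior equilibria.
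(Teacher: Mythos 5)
Your proposal is correct and follows essentially the same route as the paper: it reduces everything to the sign pattern of $f(x;T,\gamma)$, locates the critical points $x_{l,u}$ via $f'(x)=2-T/(x(1-x))$, defines $T_c(\gamma)$ by the vanishing of $f$ at the local maximum $x_u(T)$, and reads off existence, location and stability of the roots from the resulting monotone branches — exactly the content of \Cref{lem:critical}, \Cref{lem:f} and the paper's stability argument. The only (cosmetic) divergence is that you establish strict monotonicity of $T\mapsto f(x_u(T);T,\gamma)$ by the envelope identity $F'(T)=-\ln\bigl(x_u/(1-x_u)\bigr)<0$, whereas the paper substitutes $u=\sqrt{1-2T}$ and differentiates the same function as $g_\gamma(u)$ in $u$; likewise your shortcut $f(x_l)<f(1/2)=-\gamma<0$ replaces the paper's appeal to the monotonicity of $g_\gamma$, but both are reparametrizations of the identical argument.
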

The proof of \Cref{thm:main} relies on \Cref{lem:critical,lem:f}, for all of which the reader is deferred to \Cref{app:omitted}. The case $T=0$ which was separately treated in \Cref{prop:t_zero}, can be derived as a special case of the part $T<T_c\(\gamma\)$ in \Cref{thm:main}, for which $x_1=0, x_2=\(1+\gamma\)/2$ and $x_3=1$. The stability considerations remain the same for the general case $T>0$ and, as in \Cref{prop:t_zero}, they are fully determined by the sign of $\dot x$ or equivalently, of $f\(x\)$. Summing up, the statements of \Cref{thm:main} and \Cref{prop:t_zero} are illustrated in \Cref{fig:stability}.
\begin{figure}[!htb] 
\begin{subfigure}[b]{0.475\textwidth}
\centering
\resizebox{0.9\textwidth}{!}{%
\begin{tikzpicture}
\begin{axis}[axis x line=none, hide y axis, ymin=-0.001, ymax=0.001] 
\addplot[domain=0:2, samples=201, xtick=\empty]{0}
[every node/.style={yshift=8pt},black]
node[pos=0,xshift=2pt]{$0$}
node[pos=0,xshift=-5pt,yshift=-8pt]{$x$}
node[pos=0,yshift=-5pt,xshift=0.2pt,below] {$\downarrow$}
node[pos=0.02,yshift=-5pt,below] {$\underset{\textstyle x_1}{\phantom{\downarrow}}$}
node[pos=0.5] {$1/2$}
node[pos=0.56,yshift=-2.5pt,below] {$\underset{\textstyle x_2=\(1+\gamma\)/2}{\downarrow}$}
node[pos=1,yshift=-5pt,below] {$\downarrow$}
node[pos=0.98,yshift=-5pt,below] {$\underset{\textstyle x_3}{\phantom{\downarrow}}$}
node[pos=1]{$1$}; 
\addplot[only marks,forget plot, black, mark=|] coordinates {(0,0) (1,0) (2,0)};
\addplot[yshift=-1.5cm,domain=0:2, samples=201]{0}
[every node/.style={yshift=8pt},black]
node[pos=0.28]{$-$} 
node[pos=0.28,yshift=-19pt,below]{$\leftarrow$} 
node[pos=0.78]{$+$}
node[pos=0.78,yshift=-19pt,below]{$\rightarrow$} 
node[pos=0,xshift=2pt]{$0$}
node[pos=0,xshift=-5pt,yshift=-7pt]{$\dot x$}
node[pos=0.07,yshift=-16pt,below] {stable}
node[pos=0.56]{$0$}
node[pos=0.56,yshift=-16pt,below] {unstable}
node[pos=1]{$0$}
node[pos=0.93,yshift=-16pt,below] {stable};
\addplot[only marks,yshift=-1.5cm, forget plot, black, mark=|] coordinates {(0,0) (1.12,0) (2,0)};
\end{axis}
\end{tikzpicture}%
}
\caption{$T=0$.} 
\label{fig:stability_a} 
\vspace{1ex}
\end{subfigure}%%
\hfill
\begin{subfigure}[b]{0.475\textwidth}
\centering
\resizebox{0.9\textwidth}{!}{%
\begin{tikzpicture}
\begin{axis}[axis x line=none, axis y line=none, ymin=-0.001, ymax=0.001]
\addplot[domain=0:2, samples=201, xtick=\empty]{0}
[every node/.style={yshift=8pt},black]
node[pos=0,xshift=2pt]{$0$}
node[pos=0,xshift=-5pt,yshift=-8pt]{$x$}
node[pos=0.2,yshift=-6pt,below] {$\underset{\textstyle x_1}{\downarrow}$}
node[pos=0.3]{$x_l$} 
node[pos=0.5] {$1/2$}
%node[pos=0.55,yshift=6pt,above] {$\(1+\gamma\)/2$}
%node[pos=0.56,yshift=-2.5pt,below] {$\underset{\textstyle x_2>\(1+\gamma\)/2}{\downarrow}$}
node[pos=0.56,yshift=-6pt,below] {$\underset{\textstyle x_2}{\downarrow}$}
node[pos=0.75] {$x_u$}
node[pos=0.85,yshift=-6pt,below] {$\underset{\textstyle x_3}{\downarrow}$}
node[pos=1]{$1$}; 
\addplot[only marks,forget plot, black, mark=|] coordinates {(0,0) (0.6,0) (1,0) (1.5,0) (2,0)};
\addplot[yshift=-1.5cm,domain=0:2, samples=201]{0}
[every node/.style={yshift=8pt},black]
node[pos=0,xshift=-5pt,yshift=-7pt]{$\dot x$}
node[pos=0.05]{$+$} 
node[pos=0.05,yshift=-19pt,below]{$\rightarrow$} 
node[pos=0.375]{$-$} 
node[pos=0.375,yshift=-19pt,below]{$\leftarrow$} 
node[pos=0.72]{$+$}
node[pos=0.72,yshift=-19pt,below]{$\rightarrow$} 
node[pos=0.97]{$-$}
node[pos=0.97,yshift=-19pt,below]{$\leftarrow$} 
node[pos=0.2]{$0$} 
node[pos=0.2,yshift=-16pt,below] {stable}
node[pos=0.56]{$0$}
node[pos=0.56,yshift=-16pt,below] {unstable}
node[pos=0.85]{$0$}
node[pos=0.85,yshift=-16pt,below] {stable};
\addplot[only marks,yshift=-1.5cm, forget plot, black, mark=|] coordinates {(0,0) (0.4,0) (1.12,0) (1.7,0) (2,0)};
\end{axis}
\end{tikzpicture}%
}
\caption{$0<T<T_c\(\gamma\)$.} 
\label{fig:stability_b} 
\vspace{1ex}
\end{subfigure}
\vskip\baselineskip
\begin{subfigure}[b]{0.475\textwidth}
\centering
\resizebox{0.9\textwidth}{!}{%
\begin{tikzpicture}
\begin{axis}[axis x line=none, axis y line=none, ymin=-0.001, ymax=0.001]
\addplot[domain=0:2, samples=201, xtick=\empty]{0}
[every node/.style={yshift=8pt},black]
node[pos=0,xshift=2pt]{$0$}
node[pos=0,xshift=-5pt,yshift=-8pt]{$x$}
node[pos=0.2,yshift=-6pt,below] {$\underset{\textstyle x_1}{\downarrow}$}
node[pos=0.3]{$x_l$} 
node[pos=0.5] {$1/2$}
node[pos=0.75,yshift=-6pt,below] {$\underset{\textstyle x_2}{\downarrow}$}
node[pos=0.75] {$x_u$}
node[pos=1]{$1$}; 
\addplot[only marks,forget plot, black, mark=|] coordinates {(0,0) (0.6,0) (1,0) (1.5,0) (2,0)};
\addplot[yshift=-1.5cm,domain=0:2, samples=201]{0}
[every node/.style={yshift=8pt},black]
node[pos=0,xshift=-5pt,yshift=-7pt]{$\dot x$}
node[pos=0.05]{$+$} 
node[pos=0.05,yshift=-19pt,below]{$\rightarrow$} 
node[pos=0.5]{$-$} 
node[pos=0.5,yshift=-19pt,below]{$\leftarrow$} 
node[pos=0.95]{$-$}
node[pos=0.95,yshift=-19pt,below]{$\leftarrow$} 
node[pos=0.2]{$0$} 
node[pos=0.2,yshift=-16pt,below] {stable}
node[pos=0.75]{$0$}
node[pos=0.75,yshift=-16pt,below] {unstable};
\addplot[only marks,yshift=-1.5cm, forget plot, black, mark=|] coordinates {(0,0) (0.4,0) (1.5,0) (2,0)};
\end{axis}
\end{tikzpicture}%
}
\caption{$T=T_c\(\gamma\)$.}
\label{fig:stability_c}
\end{subfigure}
\hfill
\begin{subfigure}[b]{0.475\textwidth}
\centering
\resizebox{0.9\textwidth}{!}{%
\begin{tikzpicture}
\begin{axis}[axis x line=none, axis y line=none, ymin=-0.001, ymax=0.001]
\addplot[domain=0:2, samples=201, xtick=\empty]{0}
[every node/.style={yshift=8pt},black]
node[pos=0,xshift=2pt]{$0$}
node[pos=0,xshift=-5pt,yshift=-9pt]{$x$}
node[pos=0.35,yshift=-6pt,below] {$\underset{\textstyle x_1}{\downarrow}$}
node[pos=0.5] {$1/2$}
node[pos=1]{$1$}; 
\addplot[only marks,forget plot, black, mark=|] coordinates {(0,0) (1,0) (2,0)};
\addplot[yshift=-1.5cm,domain=0:2, samples=201]{0}
[every node/.style={yshift=8pt},black]
node[pos=0.12]{$+$} 
node[pos=0.12,yshift=-19pt,below]{$\rightarrow$} 
node[pos=0.65]{$-$} 
node[pos=0.65,yshift=-19pt,below]{$\leftarrow$} 
node[pos=0.35]{$0$} 
node[pos=0,xshift=-5pt,yshift=-7pt]{$\dot x$}
node[pos=0.35,yshift=-16pt,below] {stable};
\addplot[only marks,yshift=-1.5cm, forget plot, black, mark=|] coordinates {(0,0) (0.7,0) (2,0)};
\end{axis}
\end{tikzpicture}%
}
\caption{$T>T_c\(\gamma\)$}
\label{fig:stability_d}
\end{subfigure}%%
\caption{Stability properties of the Q-learning dynamics for all values of $T\ge0$. The stability of a steady state (QRE) $x$  (upper axis) is determined by the sign of $\dot x$ (lower axis). The number of QRE and hence, the stability properties change before, at, and after the critical temperature $T_c\(\gamma\)$, cf. \Cref{fig:sub}. The case $T=0$ (upper left panel) is a special case of $0<T<T_c\(\gamma\)$ (upper right panel), for any $\gamma\in\(0,1\)$. For $0<T\le T_c\(\gamma\)$ (panels (b) and (c)), it holds that $x_u,x_2>\(1+\gamma\)/2$. A typical instantiation is given in \Cref{fig:stability_app} in \Cref{app:omitted}.}
\label{fig:stability}\vspace{-0.4cm}
\end{figure}

\begin{remark}[Location of QRE.]\label{rem:location}
From \Cref{thm:main}, it follows that for any $\gamma\in\(0,1\)$, there are no QRE in $[1/2,\(1+\gamma\)/2)$. Also, for any $x\notin [1/2, \(1+\gamma\)/2)$, there exists precisely one $T\ge0$ such that this $x$ is a QRE of \eqref{eq:game}. To see this, we set $f\(x;T,\gamma\)=0$ in \eqref{eq:f_sign} and solve for $T$ as a function for $x$
\[T\(x;\gamma\)=\(2x-1-\gamma\)/\ln{\(\frac{x}{1-x}\)}.\]
For any $x\in[1/2,\(1+\gamma\)/2)$, the expression on the right is negative, which implies that there exists no $T\ge0$ so that this $x$ is a QRE of \eqref{eq:game}. For any other $x\in(0,1)\setminus[1/2,\(1+\gamma\)/2)$, the expression on the right is positive (or equal to zero for $x=\(1+\gamma\)/2$) and yields a unique $T\ge0$ for which this $x$ is a QRE. The points $x=0$ and $x=1$ are QRE (equivalently Nash equilibria) for $T=0$.
\end{remark}

%\begin{corollary}[Location of QRE]\label{cor:location}
%For given cost $\gamma\in\(0,1\)$, and any $T\ge0$, there does not exist a QRE of \eqref{eq:game} with $x\in[1/2,\(1+\gamma\)/2)$. For any $x\notin [1/2, \(1+\gamma\)/2]$, there exists a unique $T\ge0$ so that $x$ is a QRE for this $T$. 
%\end{corollary}

Finally, as can be seen in \Cref{fig:sub2}, the critical level $T_c\(\gamma\)$ is lower for larger values of $\gamma$. Intuitively, the control that needs to be exercised on the system to reach the tipping point at which the number of QRE of the population changes, is lower for larger differences --- as expressed by parameter $\gamma$ --- between the costs of the two technologies. This intuitive property is formally shown in \Cref{cor:monotonicity}.

\begin{corollary}\label{cor:monotonicity}
The critical level $T_c\(\gamma\)$ is decreasing in $\gamma \in \(0,1\)$.
\end{corollary}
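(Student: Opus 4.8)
The plan is to recast $T_c(\gamma)$ as the maximum of an explicit one-variable function and then compare across $\gamma$. Recall from \Cref{rem:location} that a point $x\in(0,1)\setminus[1/2,(1+\gamma)/2)$ is a QRE of \eqref{eq:game} at exactly one control level, namely $T(x;\gamma)=(2x-1-\gamma)/\ln(x/(1-x))$. Reading \Cref{thm:main} through this lens: on the lower branch $x\in(0,1/2)$ the map $T(\cdot\,;\gamma)$ is a continuous bijection onto $(0,\infty)$ (it supplies the always-present steady state $x_1$), whereas on the upper branch $x\in((1+\gamma)/2,1)$ it is continuous, strictly positive inside, and tends to $0$ at both endpoints, hence a unimodal ``bump'' that carries the pair $\{x_2,x_3\}$; those two QRE collide and disappear precisely when $T$ passes the peak of that bump. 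The first step is therefore to record the identity
\[
T_c(\gamma)=\max_{x\in((1+\gamma)/2,\,1)}T(x;\gamma),
\]
the maximum being attained at the interior point $x=x_u(T_c)$ (the double root, where $\partial_x f=0$).

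Given this representation, the monotonicity drops out by a pointwise comparison. Fix $0<\gamma_1<\gamma_2<1$ and set $I_j:=((1+\gamma_j)/2,\,1)$, so that $I_2\subsetneq I_1\subseteq(1/2,1)$. For every $x\in I_2$ we have $\ln(x/(1-x))>0$ and $2x-1-\gamma_2>0$, hence also $2x-1-\gamma_1>2x-1-\gamma_2>0$; dividing through by the positive denominator gives $T(x;\gamma_1)>T(x;\gamma_2)>0$ for all $x\in I_2$. Taking maxima and using $I_2\subseteq I_1$,
\[
T_c(\gamma_2)=\max_{x\in I_2}T(x;\gamma_2)\;<\;\max_{x\in I_2}T(x;\gamma_1)\;\le\;\max_{x\in I_1}T(x;\gamma_1)=T_c(\gamma_1),
\]
which is exactly the asserted strict monotonicity of $T_c$ in $\gamma$.

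As an infinitesimal cross-check one can instead differentiate: at $T=T_c(\gamma)$ the equations $f(x;T,\gamma)=0$ and $\partial_x f(x;T,\gamma)=0$ both hold at $x=x_u(T_c)$, so differentiating $f(x_u(\gamma),T_c(\gamma),\gamma)=0$ in $\gamma$ and using $\partial_x f=0$ at the critical point cancels the $x_u'$ term and leaves $-\ln(x_u/(1-x_u))\,T_c'(\gamma)-1=0$, i.e. $T_c'(\gamma)=-1/\ln(x_u/(1-x_u))<0$ since $x_u>1/2$ (differentiability of $x_u(\gamma)$ and $T_c(\gamma)$ being supplied by the implicit function theorem, as $\partial_{xx}f<0$ there). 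The only step that needs care is the first paragraph --- pinning down the $\max$-representation of $T_c(\gamma)$, i.e. that $T(\cdot\,;\gamma)$ is genuinely unimodal on $((1+\gamma)/2,1)$ so that its peak is the level at which three QRE collapse to one; this is not a deep point but it is the one that leans on the full strength of \Cref{thm:main} and \Cref{rem:location}. Everything afterwards is the two-line comparison above, and nothing beyond monotonicity of $T(x;\gamma)$ in $\gamma$ is used.
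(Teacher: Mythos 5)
Your proposal is correct, but it reaches the conclusion by a genuinely different route than the paper. The paper's proof works directly with the closed-form critical equation of \Cref{lem:critical}, $F\(\gamma,T\):=\sqrt{1-2T}-\gamma-T\ln\(\frac{1+\sqrt{1-2T}}{1-\sqrt{1-2T}}\)=0$, and implicitly differentiates it to get $\frac{dT_c}{d\gamma}=-\(\ln\(\frac{1+\sqrt{1-2T}}{1-\sqrt{1-2T}}\)\)^{-1}<0$; note that since $x_u=\frac12\(1+\sqrt{1-2T}\)$, this is literally the same formula as your ``infinitesimal cross-check'' $T_c'(\gamma)=-1/\ln\(x_u/\(1-x_u\)\)$, so your secondary argument \emph{is} the paper's proof in slightly different clothing. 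Your primary argument --- the variational characterization $T_c\(\gamma\)=\max_{x\in\(\(1+\gamma\)/2,1\)}T\(x;\gamma\)$ followed by the pointwise comparison $T\(x;\gamma_1\)>T\(x;\gamma_2\)$ on the nested intervals --- is new relative to the paper and is sound: given \Cref{thm:main}, the largest $T$ admitting an upper-branch QRE is exactly $T_c\(\gamma\)$, the max is attained since $T\(\cdot;\gamma\)$ is positive inside and vanishes at both endpoints, and strictness of the comparison of maxima follows by evaluating $T\(\cdot;\gamma_1\)$ at the argmax for $\gamma_2$. What this buys you is an argument with no differentiation at all, which would survive in settings where the critical equation has no closed form (e.g.\ the general-$\alpha$ dynamics of \Cref{sub:robust_a}), at the cost of leaning on the full bifurcation picture of \Cref{thm:main} rather than only on \Cref{lem:critical}. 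One small caveat: ``positive inside and tending to $0$ at both endpoints'' does not by itself imply the ``unimodal bump'' you describe; fortunately unimodality is never used --- only the identification of the maximum with $T_c\(\gamma\)$ is, and that you justify correctly via \Cref{thm:main}.
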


\Cref{thm:main} and \Cref{cor:monotonicity} provide a static view --- in terms of the control parameter $T$ --- of the QRE and stability properties of the dynamical system in equation \eqref{eq:dynamics}. From a mechanism design perspective, we are interested in the behavior of the system as $T$ varies according to the control of the central planner. This is discussed next.

\section{Catastrophe Design and Hysteresis Mechanism}\label{sub:hysteresis}

Assuming that the population faces a lock-in at the inefficient state, $x=1$, at which the wasteful and/or costly technology is fully adopted, we can now leverage the stability properties of the behavioral model (Q-learning dynamics) of \Cref{thm:main}, to design a mechanism that will allow the population to overcome this lock-in and adopt the efficient technology, i.e., converge to the $x=0$. The mechanism is described next and visualized in \Cref{fig:main} (cf. \Cref{fig:introduction_plot_2} in \Cref{sec:introduction} and \Cref{app:figure} in \Cref{app:omitted}). Along with the technical details that emerge along the way, the mechanism is formalized in \Cref{cor:convergence}.

\begin{figure}[!hbt]
\centering
\includegraphics[scale=0.58]{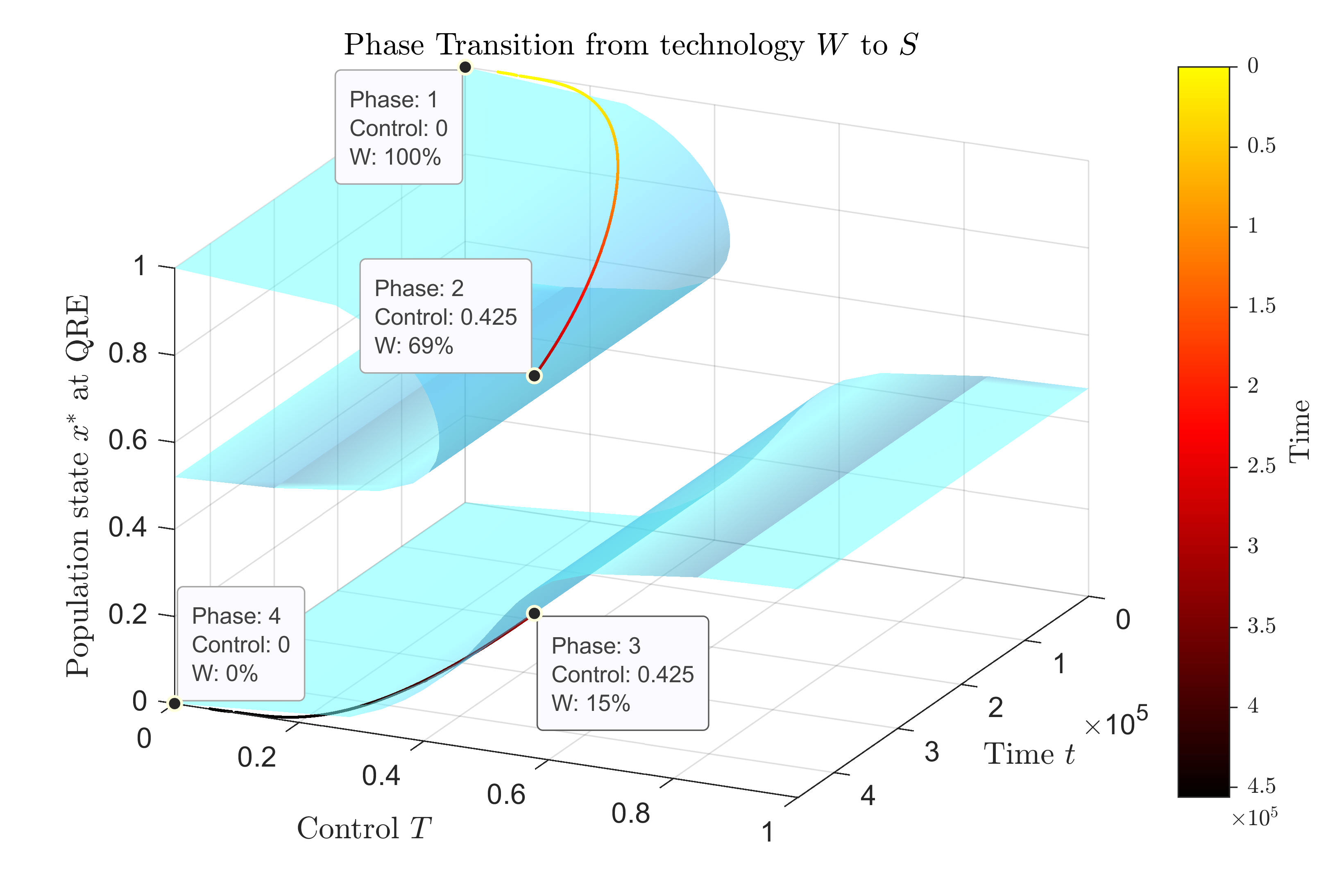}
\caption{Visualization of the combined catastrophe mechanism and hysteresis effect that move the population out of the lock-in on the inefficient technology $W$ to the efficient technology $S$. The QRE surface is projected in time, and the path of the population is depicted by starting from a lighter and gradually moving to darker tones. The phases are described in detail in \Cref{sub:hysteresis}.}
\label{fig:main}\vspace{-0.3cm}
\end{figure}

\begin{description}[leftmargin=0cm]
\item[\emph{Phase 1: Lock-in for $T=0$ and $T<T_c\(\gamma\)$}.] Initially, the population rests at (or close to) the $x=1$ equilibrium at which the whole population adopts the incumbent, wasteful technology. As the control parameter $T$ is gradually increased (but remains below the critical value $T_c\(\gamma\)$), the population stabilizes in the interior QRE on the upper branch of the QRE correspondence (light color line). This QRE is stable, and its attracting region is separated from the stable lower QRE by the unstable middle equilibrium (cf. \Cref{fig:stability_a,fig:stability_b}). \medskip
\item[\emph{Phase 2: Saddle-node bifurcation at $T=T_c\(\gamma\)$.}] As $T$ reaches (from below) the critical va\-lue $T_c\(\gamma\)$, the upper stable QRE and middle unstable QRE collide and amalgamate into a single, unstable equilibrium, cf. \Cref{fig:stability_c}. Intuitively, all this time, the mixed QRE served as a threshold between the attracting regions of the two stable QRE. Hence, although unstable, its existence was critical for the behavior of the system, and its amalgamation with the upper QRE has a critical impact on the dynamics of the system. Together with phase 3, at which the unstable equilibrium is annihilated, this creates a \emph{saddle-node bifurcation, fold bifurcation, or fold catastrophe} in the population dynamics \cite{Str00,Kuz04}.\medskip
\item[\emph{Phase 3: Annihilation of the inefficient QRE for $T>T_c\(\gamma\)$.}] The critical phase transition occurs when the control parameter exceeds (even marginally) the critical level $T_c\(\gamma\)$. The population moves from a QRE with fraction $x>\(1+\gamma\)/2$ of $W$ adopters prior to the saddle-node bifurcation to a QRE with $x<1/2$ immediately after the bifurcation. At this point, the unstable QRE that was generated by the amalgamation of the two upper QRE vanishes, and for any larger value of $T$, the system has only one QRE, which is the single attracting state of the population dynamics (cf. \Cref{fig:stability_d}). The bifurcation that annihilates the amalgamated QRE and the subsequent stabilization of the dynamics in the single remaining QRE is the critical change of the system. In response to a small change in the control parameter (from below to slightly above the critical level), the population moves (abruptly) out of a state (QRE) in the attracting region of the inefficient equilibrium $x=1$ to a QRE in the attracting region of the efficient equilibrium $x=0$ for $T=0$.
\medskip
\item[\emph{Phase 4: Hysteresis effect as $T\to0$.}] The (permanent) impact of the bifurcation on the population dynamics is that (the bifurcation) encodes a new \emph{memory} to the population. In particular, as explained in the previous phase, for any value of $T>T_c\(\gamma\)$, the system stabilizes at a QRE with fraction $x<1/2$ of $W$ adopters. Consequently, using this QRE as a starting point and reducing (gradually or even instantaneously) the control parameter $T$ back to $0$, the system provably converges to the efficient equilibrium, i.e., to the adoption of the efficient technology. This creates a \emph{hysteresis effect}: when the control parameter $T$ is back to its initial level, i.e., $T=0$, at which no exogenous control is exercised to the system, the population equilibrates at a different state than its initial one.
\end{description}

The above mechanism, termed \emph{catastrophe mechanism}, is formalized in \Cref{cor:convergence}, whose proof is given in \Cref{app:omitted}.  \Cref{cor:convergence} is stated for initial conditions $x_0\in[\(1+\gamma\)/2,1]$, i.e., for initial conditions that capture the lock-in at the inefficient technology. For $x_0\in[0,\(1+\gamma\)/2)$ convergence to the $x=0$ equilibrium is trivial.

\begin{theorem}[Catastrophe mechanism.]
\label{cor:convergence}
For a fixed cost parameter $\gamma \in \(0,1\)$ consider the population game \eqref{eq:game} and the revision protocol \eqref{eq:protocol}, which together lead to the population evolutionary dynamics $\dot x$ in \eqref{eq:dynamics}. Then, for any initial population state $x_0\in[\(1+\gamma\)/2,1]$, there exists a finite sequence $\langle T_0,T_1,\dots, T_n\rangle$ of control parameter values with $T_0=T_n=0$, and $\max_{i\le n}{T_i}>T_c\(\gamma\)$, so that the iterative procedure which, starting from phase $i=0$,
\begin{itemize}
\item scales the control parameter $T$ to $T_i$,
\item allows the system to converge to a QRE, $x^*\(T_i\)$, 
\end{itemize}
and reiterates for phase $i+1$, generates a sequence of population states (QRE) that converges to the risk-dominant equilibrium, $x^*\(T_n\)=0$, at which the population adopts the efficient technology.
%
%Also, for any $T\ge0$, let $x^*\(T\)$ denote the prevailing QRE in the population. Then, starting from an initial point at which $x_0=1$ and $T_0=0$ and assuming that $T$ increases smoothly to a value $T_1>T_c\(\gamma\)$, the path $x^*\(T\)$ of population states along the QRE correspondence satisfies the following
%\begin{itemize}
%\item $x^*\(T\)$ is continuous, decreasing in $T$, and $x^*\(T\)>x_u\(T\)$ for any $T<T_c\(\gamma\)$,
%\item $x^*\(T\)$ is left continuous at $T=T_c\(\gamma\)$, with $\lim_{T\nearrow T_c\(\gamma\)}x^*\(T\)=x_u\(T\)$,
%\item $x^*\(T\)$ is not right continuous at $T=T_c\(\gamma\)$, and for any $T>T_c\(\gamma\)$, it holds that $x^*\(T\)<x_l\(T\)$ if $T<1/2$ or $x^*\(T\)<1/2$ if $T\ge1/2$.
%\end{itemize}
%In particular, for $T>T_c\(\gamma\)$ the only surviving QRE, $x^*\(T\)$, lies in the attracting region of the risk dominant equilibrium, $x=0$, of the underlying game \eqref{eq:game}. Then, as $T$ decreases smoothly from $T_1$ to $T_0=0$, the path of the population states, $x^*\(T\)$, is continuous and decreasing in $T$ and converges to the $x=0$ equilibrium, i.e., $\lim_{T\to 0^+}x^*\(T\)=0$, which corresponds to the efficient technology.
\end{theorem}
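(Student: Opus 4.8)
The plan is to treat \Cref{thm:main} as a black box: it already pins down, for every $T\ge 0$, the number, location and stability of the steady states (QRE) of \eqref{eq:dynamics}, so the only work left is to choose the control schedule and verify convergence of the one-dimensional flow during each phase. I would use the shortest possible schedule, $n=2$, with $\langle T_0=0,\,T_1,\,T_2=0\rangle$, where $T_1$ is any level strictly above the critical value $T_c(\gamma)$; for concreteness take $T_1=1/2$, which exceeds $T_c(\gamma)$ since $T_c(\gamma)\in(0,1/2)$ by \Cref{thm:main}. A gradual increase of $T$ (tracking the upper stable branch of the QRE correspondence up to the saddle-node at $T_c(\gamma)$) followed by a gradual decrease back to $0$ along the lower branch would establish the statement equally well and is the version depicted in \Cref{fig:main}, but it is not needed.

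\textbf{Phase $1$ (control level $T_1>T_c(\gamma)$).} By \Cref{thm:main}, the dynamics \eqref{eq:dynamics} have a unique steady state $x^*(T_1)$, it is stable, and $x^*(T_1)<1/2$ in every case: it lies in $(0,x_l(T_1))$ with $x_l(T_1)<1/2$ when $T_1<1/2$, and in $(0,1/2)$ when $T_1\ge 1/2$ (see \Cref{eq:notation}). Since \eqref{eq:dynamics} is a scalar vector field on $(0,1)$ with a single rest point, the stability analysis of \Cref{thm:main} (cf.\ \Cref{fig:stability_d}) gives $f(x;T_1,\gamma)>0$ on $(0,x^*(T_1))$ and $f(x;T_1,\gamma)<0$ on $(x^*(T_1),1)$, hence $\dot x=x(1-x)f(x;T_1,\gamma)<0$ for all $x\in(x^*(T_1),1)$. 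Therefore any orbit started at $x_0\in[(1+\gamma)/2,1)$ decreases monotonically and converges to $x^*(T_1)$, so at the end of this phase the system rests at the QRE $x^*(T_1)<1/2$.

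\textbf{Phase $2$ (control level $T_2=0$).} The system now follows the replicator dynamics of \Cref{prop:t_zero}, whose unstable interior equilibrium is $x_2=(1+\gamma)/2$. Since $x^*(T_1)<1/2<(1+\gamma)/2$, the starting point of this phase lies in the attracting region of $x_1=0$, so the flow converges to the risk-dominant equilibrium $x=0$. The schedule thus generates the QRE sequence $x^*(T_0)\in[(1+\gamma)/2,1]$, $x^*(T_1)<1/2$, $x^*(T_2)=0$, with $T_0=T_2=0$ and $\max_{i\le 2}T_i=T_1>T_c(\gamma)$, which is exactly the claim.

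The one delicate point — and the only thing I expect to require care — is the boundary behaviour of \eqref{eq:dynamics} at $x=1$ (and the non-generic initial condition $x_0=(1+\gamma)/2$). For $T>0$, as $x\to 1^-$ the right-hand side of \eqref{eq:dynamics} is the product of $x(1-x)\to 0$ and $2x-(1+\gamma)-T\ln(x/(1-x))\to-\infty$; the product tends to $0$, so the continuous extension has a formal rest point at $x=1$, but $\dot x<0$ in a left-neighbourhood of $1$, so this point is not attracting and every interior orbit leaves it. Invoking the standing assumption that $x=0,1$ are inadmissible population states once $T>0$ (the footnote accompanying \eqref{eq:quantal}), or equivalently perturbing $x_0$ slightly into $(0,1)$, disposes of this edge case; and $x_0=(1+\gamma)/2$, which sits exactly on the unstable replicator equilibrium, is dislodged by the (arbitrarily small) motion already induced during Phase $1$. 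Beyond this bookkeeping, the result is a direct corollary of \Cref{thm:main,prop:t_zero}.
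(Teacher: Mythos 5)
Your proof is correct, and it fully establishes the theorem as stated: since the claim only asserts the \emph{existence} of some finite control sequence, the two-step schedule $\langle 0,\,T_1>T_c(\gamma),\,0\rangle$ together with the sign analysis of the scalar flow in each phase (from \Cref{thm:main} and \Cref{prop:t_zero}) is enough, and your treatment of the boundary rest point at $x=1$ and of $x_0=(1+\gamma)/2$ is actually more careful than the paper's. The difference from the paper's own proof is one of emphasis rather than correctness: the paper dispenses with the abrupt schedule in a single sentence (``the stability properties of \Cref{thm:main} directly imply the result'') and then spends essentially all of its effort on the \emph{gradual} version motivated by \Cref{rem:sequence} --- it shows via implicit differentiation of $f(x,T)=0$ that the lower QRE branch $x_1(T)$ is strictly increasing in $T$ on $(0,x_l(T))$ and, since $x_l(T)=\tfrac12(1-\sqrt{1-2T})\to 0$ as $T\to 0^+$, that $x_1(T)\to 0$, so the population tracks the lower branch continuously down to $x=0$ as $T$ is lowered in small steps. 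Your argument buys a shorter, more elementary proof of the stated existence claim; the paper's buys the stronger fact that \emph{any} sufficiently fine decreasing schedule also works, which underpins the hysteresis picture in \Cref{fig:main} and the practical implementability discussion. If you want your write-up to support \Cref{rem:sequence} as well, you would need to add the monotonicity-of-$x_1(T)$ step; for the theorem itself, nothing is missing.
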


\begin{remark}\label{rem:sequence}
 In theory, it suffices to use the sequence $\langle T_0=0, T_1>T_c\(\gamma\),T_n=0\rangle$. However, for practical applications, such abrupt changes of the control parameter $T$ may not be possible. For instance, if $T$ denotes a tax or temperature parameter, then only gradual or smooth changes in $T$ may be permissible. In such cases, the sequence $\langle T_0, T_1,\dots, T_n\rangle$ may require additional steps, but as long as it satisfies the requirements of \Cref{cor:convergence}, the final outcome of the catastrophe mechanism still obtains. 
\end{remark}

\paragraph{Properties and Practical Implementation}
From the above, it is immediate that the proposed mechanism has two desirable properties for practical applications. First, the critical population mass, $x^*\(T_c\(\gamma\)\)$, i.e., the fraction of adopters of the wasteful technology at the critical level $T_c\(\gamma\)$, is strictly larger than the mixed equilibrium $\(1+\gamma\)/2$ of the initial game. This means that, under the proposed mechanism, the bifurcation occurs when the wasteful technology is still adopted by the \emph{majority} of the population. The deviation to the new technology by a minority of the population is sufficient to trigger the adoption of the new technology by the whole population and become permanent. Indicatively, in the depicted scenario of \Cref{fig:main}, the population transitions from $69\%$ of $W$ adopters immediately before the saddle-node bifurcation (or catastrophe) to $15\%$ of $W$ adopters immediately after. \par
Second, and more importantly, from a social planner's perspective, the required influence on the control parameter needs to be only temporary. Yet, due to the resulting hysteresis, the effects of the mechanism are permanent. Assuming that the exercise of any influence on the system incurs a cost, this provides a desirable trade-off between the resources (expenses/time) of implementing such a mechanism and the duration of its impact. \par
Finally, related to the previous property is also the interpretation of the initial normalization of $VK$ to $1$, cf. equation \eqref{eq:dynamics}, in monetary terms. Since this normalization is equivalent to dividing equation \eqref{eq:dynamics_pre} with $VK$, all instances of $\gamma$ and $T$ in the previous analysis should be interpreted as $\gamma/VK$ and $T/VK$, respectively, for practical purposes. For instance, the threshold $T_c\(\gamma\)<1/2$ for any $\gamma\(0,1\)$ in \Cref{thm:main} implies that in applications with $VK\neq1$ (typically $VK\gg1$), the tipping point satisfies $T_c<VK/2$. An illustrative application of this mechanism at which $T$ is interpreted as taxation is presented in \Cref{sec:applications}.

\section{Conclusions}\label{sec:conclusions}
In this paper, we developed and studied a mechanism to destabilize lock-in at inefficient equilibria in (population) coordination games. While requiring only temporary influence to the population with limited implementation costs, the proposed mechanism achieves a permanent transition from a sub-optimal locked-in equilibrium state, e.g., the adoption of a status quo, yet wasteful technology, to a socially and individually preferable state, e.g., the adoption of an innovative and sustainable technology. As a first step, our methods expose the geometric features of the Q-learning dynamics in the widely studied model of population games. The transition from the inefficient to the efficient equilibrium is then achieved via an induced saddle-node bifurcation or controlled catastrophe in the learning dynamics. The resulting mechanism is both simple to understand and easy to implement, yet its provable guarantees make critical use of the complex underlying geometry.  
\par
The lock-in at inefficient equilibria is a long-standing problem with ample manifestations across the whole spectrum of economic, social, natural, and behavioral sciences. However, little is known on how to overcome the lock-in in practical applications. Most existing studies report experimental findings \cite{Bra06,Bra16,Mas17} (and references therein) with only a few recent exceptions \cite{Yan17,Mas19}. As economies and social interactions become increasingly complex and unpredictable, the design of effective mechanisms, even in purely adversarial conditions (limited resources and ability to control the population), is more relevant than ever. Aiming to make progress in this direction, our work provides a (first to our knowledge) provably efficient mechanism to overcome this lock-in and opens several directions for future work. Some of the most naturally arising questions concern the implementation of the current mechanism --- or modifications thereof --- in adversarial settings at which the population can control some parameter in response to changes in the control parameter $T$ (here that would be parameter $\gamma$) or under modified behavioral models. Testing the efficiency of the proposed mechanism in field or laboratory experiments is another promising direction for study.

\section*{Conflict of interest} The authors declare that they have no conflict of interest.

\bibliographystyle{splncs04}
\bibliography{bifurcation_bib,anor_bifurcation_bib,wine_ec_sagt_bib}  

\appendix
\section{Structural Robustness of the Model}\label{sec:robustness}
Thus far, our analysis and the proposed mechanism to move out of the inefficient lock-in rested on two (seemingly) restrictive assumptions. The first is that the updates in the population state are not subject to uncontrolled perturbations and the second is that $\alpha=2$, i.e., that the parameter which expresses the intensity of network effects is such that the system can be represented as an evolutionary game, cf. \Cref{sub:elementary}. For practical purposes and more generally for systems with constantly changing or unknown characteristics, it is important to show that the efficiency of the suggested mechanism is not tied to these particular assumptions. It turns out that this indeed turn and our goal in this section is to test and prove this claim in two directions. \par
First, we show that the proposed mechanism continues to apply under state dependent perturbations in the Q-learning dynamics without significant impact in its actual implementation, cf. \Cref{sub:robust_e}. This holds for any reasonably -- in the sense that the model parameters remain in their admissible regions -- bounded perturbations.\par
Second, we study the robustness of the model for network effects of different intensity, as expressed by different values of parameter $\alpha$, cf. \Cref{eq:value}. In this direction, we obtain a conservative (yet tight for extreme values of the parameters) theoretical bound on the critical value $T_c\(\gamma\)$, i.e., on the control that needs to be exercised to the system to trigger the desired transition out of the attracting region of the initial technology. In practical terms, as $\alpha$ increases, a network split becomes more damaging and the population reaches the tipping point for increasingly lower values of $T_c$. In this respect, the case $\alpha=2$ that we treated in the previous part is one of the costliest to destabilize. Finally, even though the number of QRE in the attracting region of the efficient technology may change for large values of $\alpha$, the mechanics of the proposed mechanism essentially remain unaffected. 

\subsection{State Dependent Perturbations}\label{sub:robust_e}
In general, small perturbations in a dynamical system can have major effects in its outcome, cf. \cite{Str00,Puu91} or \cite{Pal17} among many others. To study the behavior of the currently used Q-learning dynamics in equation \eqref{eq:dynamics} in response to such perturbations, we consider a state-dependent noise term $\epsilon\(x\)$
\begin{equation}\label{eq:perturbed}
\dot x =x\(1-x\)\lt 2x-1-\gamma-T\ln{\(\frac{x}{1-x}\)}+\epsilon\(x\)\rt, \qquad \text{for }x \in\(0,1\).
\end{equation}
Similarly to equation \eqref{eq:f_sign}, it will be convenient to introduce the following notation.

\begin{notation}\label{not:3}
For fixed $\gamma\in\(0,1\)$ and $T\ge0$, let 
\begin{equation}\label{eq:f_sign_e}
f_\epsilon\(x;T,\gamma\):= 2x-1-\gamma-T\ln{\(\frac{x}{1-x}\)}+\epsilon\(x\), \qquad \text{for }x\in\(0,1\).
\end{equation}
where $\epsilon\(x\):\(0,1\)\to \mathbb R$ is a noise-function that depends on the current population state and which for all $x\in\(0,1\)$ satisfies $|\epsilon\(x\)|\le \epsilon_0$ for some constant $\epsilon_0 \in \(0,\min{\{\gamma,1-\gamma\}}\)$. In particular, $f_\epsilon\(x;T,\gamma\):= f\(x;T,\gamma\)+\epsilon\(x\)$, where $f\(x;T,\gamma\)$ is defined in equation \eqref{eq:f_sign}. Whenever obvious from the context, we will omit the dependence on $T,\gamma$ and write $f_\epsilon\(x\)$. 
%\par
%Finally, for any $T>0$, let $x_m^*\(T\):=\min{\{x\in\(0,1\): f\(x;T,\gamma\) =0\}}$ and $x_m^*\(T;\epsilon_0\):=\min{\{x\in\(0,1\): f_\epsilon\(x;T,\gamma\) =0\}}$ denote the minimum QRE at level $T$ of the unperturbed and perturbed dynamics, respectively. 
\end{notation}
The bound $\epsilon_0$ is naturally imposed by the requirement that the game parameter, $\gamma$, should remain within its admissible region. In particular, the bound $\epsilon_0\in \(0,\min{\{\gamma,1-\gamma\}}\) $ on $|\epsilon\(x\)|$ ensures that the perturbation (noise) is not large enough to change the nature of the system in favor of the originally undesirable equilibrium.\par
Turning to the dynamics of the $\epsilon\(x\)$-perturbed system, the number of interior steady states of the dynamics, as expressed by the roots of $f_\epsilon\(x;T,\gamma\)$, may change significantly in comparison to the unperturbed system. To see this, assume that $x^*$ is a steady state of the unperturbed system for some $T>0$, so that $f\(x^*;T,\gamma\)=0$. Then
\[f_\epsilon\(x^*;T,\gamma\)=\epsilon\(x^*\)\]
and hence, there exists a local neighborhood around $x^*$ for which the behavior of the system is essentially determined by the noise term. Accordingly, $\dot x$ may have an arbitrary number of sign-changes in this neighborhood depending on the exact shape of $\epsilon\(x\)$ and hence, it is not possible to argue about the exact state of the system within this region. However, we can still reason about the stability of the system \emph{outside} these (bounded) neighborhoods in the same fashion as we did for the unperturbed system. In fact, the stability analysis, cf. \Cref{thm:main} and \Cref{fig:stability}, carry over with the only difference that the system will now stabilize within a \emph{neighborhood} of the initial QRE, i.e., of the QRE of the unperturbed system. This allows us to establish equivalent convergence guarantees and retain the desirable effect of the proposed catastrophe mechanism. This is formalized in \Cref{thm:stability_e} whose proof can be found in \Cref{app:section6}.

\begin{theorem}[Catastrophe mechanism under bounded perturbations.]\label{thm:stability_e}
For a fixed cost parameter $\gamma \in \(0,1\)$ consider the population game \eqref{eq:game}, the revision protocol \eqref{eq:protocol} and the perturbed resulting dynamics in equation \eqref{eq:perturbed}, where $|\epsilon\(x\)|\le \epsilon_0\in\(0,\min{\{\gamma,1-\gamma\}}\)$ is a state-dependent noise term for all $x\in\(0,1\)$. Then, the catastrophe mechanism of \Cref{cor:convergence} still applies. \par
In particular, for any initial population state $x_0\in[\(1+\gamma\)/2,1]$, there exists a finite sequence $\langle T_0,T_1,\dots, T_n\rangle$ of control parameter values with $T_0=T_n=0$, and $\max_{i\le n}{T_i}>T_c\(\gamma-\epsilon_0\)$, so that the iterative procedure which, starting from phase $i=0$,
\begin{itemize}
\item scales the control parameter $T$ to $T_i$,
\item allows the system to converge to a neighborhood of the QRE, $x^*\(T_i\)$, of the unperturbed system,
\end{itemize}
and reiterates for phase $i+1$, generates a sequence of population states (QRE) that converges to the risk-dominant equilibrium, $x^*\(T_n\)=0$, at which the population adopts the efficient technology. 
%Specifically, if $x_m^*\(T_i\)$ and $x_m^*\(T_i;\epsilon_0\)$ denote the minimum QRE for $T=T_i$ of the unperturbed and perturbed dynamics respectively, cf. \Cref{not:3}, then 
%\[|x^*_m\(T_i;\epsilon_0\)-x^*_m\(T_i\)|\le \min{\left\{\frac{\epsilon_0}{|1-2T_i|},\frac12\right\}}.\] 
\end{theorem}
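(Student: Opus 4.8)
The plan is to reduce everything to the unperturbed analysis of \Cref{thm:main} by \emph{sandwiching} the perturbed sign function between two unperturbed ones. Since $f_\epsilon\(x;T,\gamma\)=f\(x;T,\gamma\)+\epsilon\(x\)$ (cf.\ \Cref{not:3} and \eqref{eq:f_sign_e}) and $f\(x;T,\gamma\)\pm\epsilon_0=f\(x;T,\gamma\mp\epsilon_0\)$ by \eqref{eq:f_sign}, the bound $|\epsilon\(x\)|\le\epsilon_0$ yields, for every $x\in\(0,1\)$ and every $T\ge0$,
\[
f\(x;T,\gamma+\epsilon_0\)\;\le\;f_\epsilon\(x;T,\gamma\)\;\le\;f\(x;T,\gamma-\epsilon_0\),
\]
where the two bounding functions are unperturbed sign functions with cost parameters $\gamma+\epsilon_0$ and $\gamma-\epsilon_0$, both lying in $\(0,1\)$ precisely because $\epsilon_0\in\(0,\min\{\gamma,1-\gamma\}\)$. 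Hence \Cref{thm:main} and \Cref{cor:monotonicity} apply verbatim to each bounding system, and in particular $T_c\(\gamma+\epsilon_0\)<T_c\(\gamma\)<T_c\(\gamma-\epsilon_0\)<1/2$. The resulting comparison principle is: wherever the two bounds share a sign, $f_\epsilon$ inherits it, so the perturbed flow in \eqref{eq:perturbed} points the same way as the unperturbed one; the only region where $f_\epsilon$ may change sign is the (bounded) band between the corresponding QRE of the $\gamma+\epsilon_0$ and the $\gamma-\epsilon_0$ systems, and that band --- of width $O\(\epsilon_0\)$ --- is exactly the ``neighborhood of the unperturbed QRE'' in the statement.

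Next I would run the minimal construction with control levels $\langle T_0=0,\,T_1,\,T_2=0\rangle$, where $T_1>T_c\(\gamma-\epsilon_0\)$ (for the gradual variant of \Cref{rem:sequence} one inserts extra steps, each taken either below $T_c\(\gamma+\epsilon_0\)$ or above $T_c\(\gamma-\epsilon_0\)$, which reproduces phases~1--4 of \Cref{cor:convergence} with $O\(\epsilon_0\)$-neighborhoods replacing exact QRE). At level $T_1$: \Cref{thm:main} applied to $\gamma-\epsilon_0$ gives a single root of $f\(\cdot;T_1,\gamma-\epsilon_0\)$, say $x_1^{-}$, with $x_1^{-}<1/2$ and $f\(\cdot;T_1,\gamma-\epsilon_0\)<0$ above it; applied to $\gamma+\epsilon_0$ it gives a single root $x_1^{+}<x_1^{-}$ with $f\(\cdot;T_1,\gamma+\epsilon_0\)>0$ below it. By the sandwich, $f_\epsilon<0$ on $\(x_1^{-},1\)\supseteq[1/2,1)$ and $f_\epsilon>0$ on $\(0,x_1^{+}\)$, so the perturbed dynamics funnel \emph{every} initial state --- in particular every $x_0\in[\(1+\gamma\)/2,1]$ --- into the band $[x_1^{+},x_1^{-}]\subset\(0,1/2\)$, which contains the unique unperturbed QRE $x^*\(T_1\)$. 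At level $T_2=0$: the dynamics are $\dot x=x\(1-x\)\(2x-1-\gamma+\epsilon\(x\)\)$, and since $\epsilon_0<\gamma$ we have $2x-1-\gamma+\epsilon\(x\)\le 2x-(1+\gamma-\epsilon_0)<0$ for all $x<(1+\gamma-\epsilon_0)/2$, an interval covering all of $\(0,1/2\]$; hence from its post-catastrophe position in $\(0,1/2\)$ the state decreases monotonically, and as $\dot x\approx-(1+\gamma-\epsilon(0))\,x<0$ near $0$, it converges to $x=0$ \emph{exactly}, as claimed.

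The step I expect to be the main obstacle is the window $\(T_c\(\gamma+\epsilon_0\),T_c\(\gamma-\epsilon_0\)\)$, where the two bounding systems disagree on the QRE count (one versus three): there the sandwich controls neither the number nor the location of the perturbed rest points, and --- because no bound on $\epsilon'\(x\)$ is assumed --- the perturbation can even manufacture additional \emph{stable} rest points in $\(1/2,1\)$. The way around this is to never linger in the window: the control schedule is only required to satisfy $\max_i T_i>T_c\(\gamma-\epsilon_0\)$, and the moment $T$ clears that threshold the inequality $f_\epsilon<0$ on $\(x_1^{-},1\)$ holds \emph{uniformly over all admissible} $\epsilon$, so any spurious rest point the population may have settled at inside the window is destroyed and the state is flushed below $1/2$. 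That this threshold works uniformly rests on just two earlier facts: \Cref{cor:monotonicity}, which makes $T_c\(\gamma-\epsilon_0\)$ a finite value (still $<1/2$) dominating the true perturbed bifurcation point for every admissible $\epsilon$; and the admissibility condition $\epsilon_0<\min\{\gamma,1-\gamma\}$, which keeps the $T=0$ zero of $2x-1-\gamma+\epsilon\(x\)$ strictly to the right of $1/2$ (cf.\ \Cref{rem:location}), so that the hysteresis step at level $T_2$ remains valid.
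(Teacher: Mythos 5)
Your proposal is correct and follows essentially the same route as the paper: the sandwich $f\(x;T,\gamma+\epsilon_0\)\le f_\epsilon\(x;T,\gamma\)\le f\(x;T,\gamma-\epsilon_0\)$ is precisely the paper's \Cref{lem:bound}, and the rest of the argument (sign control of $f_\epsilon$ outside neighborhoods of the unperturbed QRE, exceeding $T_c\(\gamma-\epsilon_0\)$, then resetting $T$ to $0$) mirrors the paper's proof. If anything, your explicit funneling argument into the band $[x_1^{+},x_1^{-}]$ and your direct sign computation at $T=0$ are more concrete than the paper's first-order Taylor estimate of the neighborhood radius $\delta\(x^*\)$, and your remark about spurious rest points in the window $\(T_c\(\gamma+\epsilon_0\),T_c\(\gamma-\epsilon_0\)\)$ correctly identifies why the threshold in the statement must be $T_c\(\gamma-\epsilon_0\)$ rather than $T_c\(\gamma\)$.
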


\Cref{thm:stability_e} suggests that the results of the unperturbed case, largely carry over also to the perturbed case. Intuitively, while the exact number of steady states in the perturbed system cannot be determined, the new steady states are all located in some bounded neighborhood of the old steady states. \footnote{The exact size of these neighborhoods depends on the value of $\epsilon\(x\)$ at $x\in\(0,1\)$. However, it can be shown that these neighborhoods shrink for values of $x$ close to the boundary or for values of $x$ close to $1/2$, cf. Proof of \Cref{thm:stability_e} in \Cref{app:omitted} and \Cref{fig:perturbation} for an illustration.} Outside these regions, the sign of $\dot x$ remains the same as in the unperturbed case and hence, the dynamics converge to these regions in the same fashion that they converged to the exact steady states in the unperturbed case. Two illustrations --- of bounded and unbounded state dependent perturbations --- are given in \Cref{fig:perturbation}.

\begin{figure}[!htb]
\centering
\begin{minipage}[t]{0.49\textwidth}
\centering
\includegraphics[width=\textwidth]{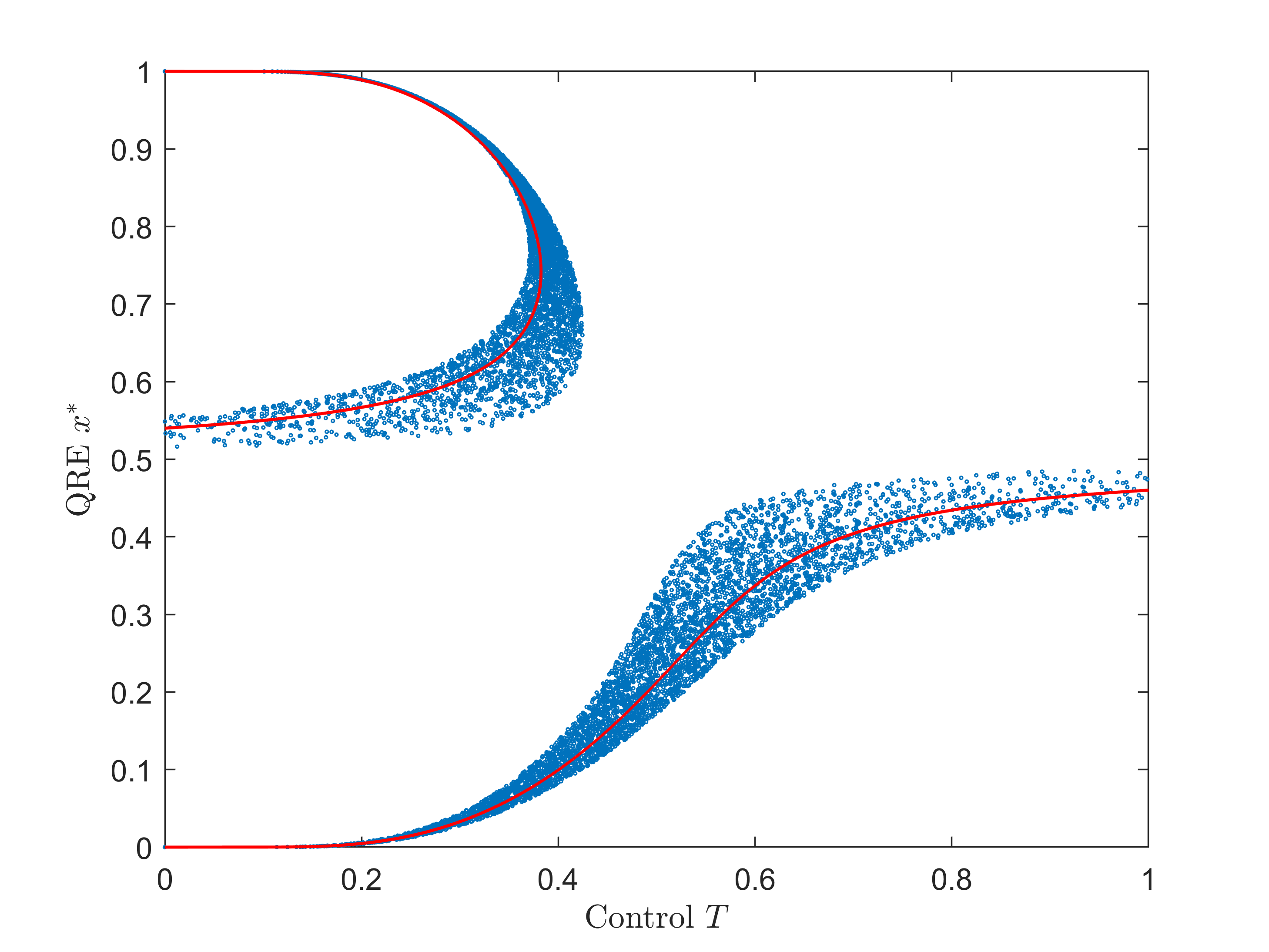}
\end{minipage}\hfill
\begin{minipage}[t]{0.49\textwidth}
\centering
\includegraphics[width=\textwidth]{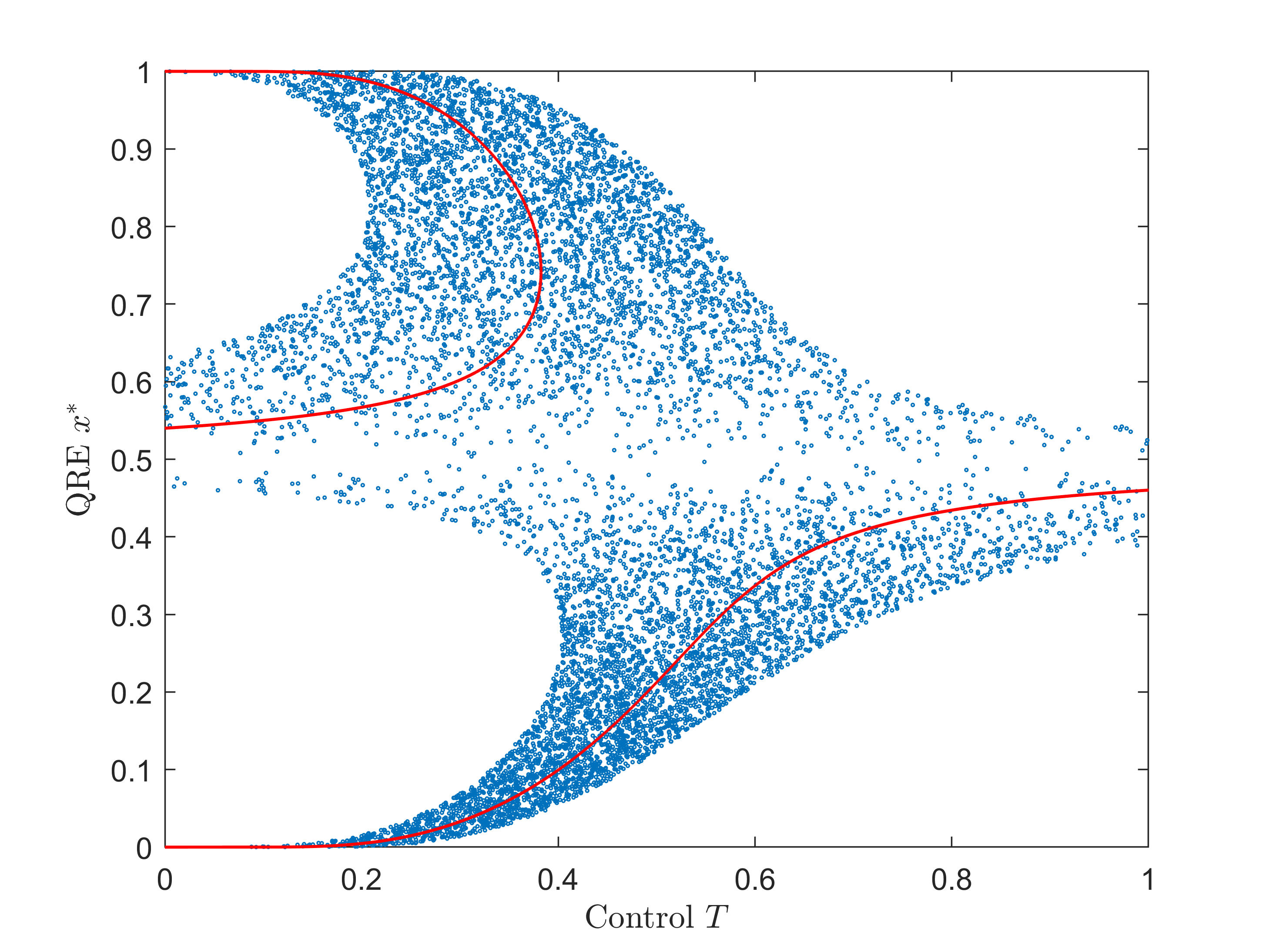}
\end{minipage}
\caption{QRE (red lines) of the unperturbed system for $\gamma=0.08$ and QRE (blue dots) under state-dependent perturbations $\epsilon\(x\)=0.01+\text{rand}\cdot x^{1/6}\(1-x\)$ (left panel) and $\epsilon\(x\)=-\text{rand}\cdot\ln(x-x^{0.5})$ (right panel), where $\text{rand}$ denotes a random number in $[0,0.1]$. In both panels, the largest deviations occur for population states close to $1/2$, cf. equation \eqref{eq:bound}, whereas for values of $x$ close to the boundaries, the perturbed systems reduce to the original system (red line) as the perturbations get dominated by the entropy term. In the left panel, which satisfies the assumptions of \Cref{thm:stability_e}, the proposed mechanism can be implemented with the only difference that, now, the dynamics can converge to any blue dot (instead of the red line) that lies in a neighborhood of the original QRE (in the unperturbed system). By contrast, the perturbation in the right panel does not satisfy the assumptions and the outcome of the mechanism is ambiguous as the population may not transition to states below $1/2$ even for large values of the control parameter $T$.}
\label{fig:perturbation}
\end{figure}

\begin{remark}\label{rem:mechanism}
From the perspective of mechanism design, it is also meaningful to study the effects of the perturbation on the implementation of the proposed mechanism and in particular, on the control that needs to be exercised on $T$. To quantify this change, we solve equation $f\(x\)=0$ for $T$, cf. \Cref{rem:location}, and recover the control parameter as a function of $x$ on the geometric locus of all QRE of the unperturbed system
\[T\(x;\gamma\)=\(2x-1-\gamma\)/\ln{\(\frac{x}{1-x}\)}.\]
Similarly, if $T_\epsilon\(x;\gamma\)$ denotes the value of the control parameter on the geometric locus of all QRE of the perturbed system, then
\begin{align}\label{eq:t_pert}
T_{\epsilon}\(x;\gamma\)&=\frac{2x-1-\gamma+\epsilon\(x\)}{\ln{\(\frac{x}{1-x}\)}}=T\(x;\gamma\)+\frac{\epsilon\(x\)}{\ln{\(\frac{x}{1-x}\)}}
\end{align}
which implies that if some $x\in\(0,1\)$ is a QRE of both systems, then
\begin{equation}\label{eq:bound}
|T_{\epsilon}\(x;\gamma\)-T\(x;\gamma\)|\le \epsilon_0\cdot\left|\ln{\(\frac{1}{x}-1\)}\right|^{-1}.
\end{equation}
Equation \eqref{eq:bound} highlights an important property that comes from the inclusion of the entropy term in the dynamics. Namely, as $x$ approaches the boundary for decreasing $T>0$, the effect of any bounded perturbation on the control variable gets increasingly dominated by the entropy term. At $T=0$, the system stabilizes again at $x=0$. An illustration is given in \Cref{fig:perturbation}. Finally, for an alternative bound on the maximum additional control that needs to be exercised on $T$ to trigger the critical phase transition, equation \eqref{eq:t_pert} and the monotonicity of the critical level $T_c\(\gamma\)$ in $\gamma$, cf. \Cref{cor:monotonicity}, imply that $T_c\(\gamma;\epsilon\)\le T_c\(\gamma-\epsilon_0\)$, where $T_c\(\gamma;\epsilon\)$ denotes the critical value in the perturbed system (cf. \Cref{thm:stability_e}).
\end{remark}

\subsection{Intensity of Network Effects}\label{sub:robust_a}
Parameter $\alpha$ in \Cref{eq:value} captures the intensity of the network effects, i.e., the value that can be created by each technology as a function of its adoption by the population of investors. Values of $\alpha<1$ indicate subadditive value, i.e., that the population payoff is maximized when the network splits between the two technologies, and fall out of the present scope. Similarly, when $\alpha=1$, the degree of adoption does not affect the generated value and the efficient technology constitutes a dominant strategy which trivializes the resolution of the game. \par
In the present context, we are interested in cases with superadditive value --- or direct positive network effects --- that are expressed by values of $\alpha>1$ (an illustration is provided in \Cref{rem:alpha}). Thus far, we have assumed that $\alpha=2$ mainly for expositional purposes, however, in this part, we show that the results generalize essentially unaltered to any $\alpha>1$. \par
Specifically, let $\alpha>1$. Then by equations \eqref{eq:payoffs} and \eqref{eq:dynamics_pre}, we otain the relationship
\[\dot x =x\(1-x\)\lt VK^{\alpha-1}x^{\alpha-1}-\gamma-VK^{\alpha-1}\(1-x\)^{\alpha-1}-T\ln{\(\frac{x}{1-x}\)}\rt,\]
which after normalizing $VK^{\alpha-1}$ to $1$ --- or equivalently after dividing the equation with $VK^{\alpha-1}$ and setting $\gamma\to \gamma/VK^{\alpha-1}$ and $T\to T/VK^{\alpha-1}$ --- yields the dynamics
\begin{equation}\label{eq:all}
\dot x =x\(1-x\)\lt x^{\alpha-1}-\gamma-\(1-x\)^{\alpha-1}-T\ln{\(\frac{x}{1-x}\)}\rt.
\end{equation}
The geometric locus of the steady states (QRE) of the dynamics in equation \eqref{eq:all} is illustrated for various values of $\alpha>1$ and $T\ge0$ in \Cref{fig:alpha}. As network effects become more intense, i.e., as $\alpha$ increases, splits of the population become increasingly detrimental for both aggregate and individual welfare. As a consequence, the system becomes more sensitive to the control $T$ and the inefficient equilibrium is easier to destabilize. This can be seen from the thinning upper component in the QRE surface in the left panel of \Cref{fig:alpha}. Moreover, for higher values of $\alpha$, the lower component of the QRE correspondence develops an unstable part in the $[0,1/2]$ region (dashed part in the red and green lines in the left panel and horn-shaped dark area in the right panel) at which the population may oscillate between three or marginally two QRE. However, all these QRE lie in the attracting region of $x=0$ for $T=0$ and hence, this instability does not compromise the outcome of the proposed catastrophe mechanism. In particular, whenever $T\ge1/2$, there exists a unique QRE $x^*\in \(0,1/2\)$ (light area in the right panel). This implies that by increasing $T$ to (at most) $1/2$, the system can be stabilized in a state $x^*$ which lies in the attracting region of the desired equilibrium $x=0$. Subsequently, the last phase --- at which the control parameter is reset back to $0$ --- of the proposed catastrophe mechanism in \Cref{cor:convergence} can be applied unaltered (e.g., by reducing $T$ back to $0$) and still yield the desired outcome of convergence to the $x=0$ equilibrium, independently of the intensity of the network effects, i.e., of the value of $\alpha>1$. This is formalized in \Cref{thm:unique}.

\begin{figure}[!htb]
\centering
\begin{minipage}[t]{0.51\textwidth}
\centering
\includegraphics[width=\textwidth]{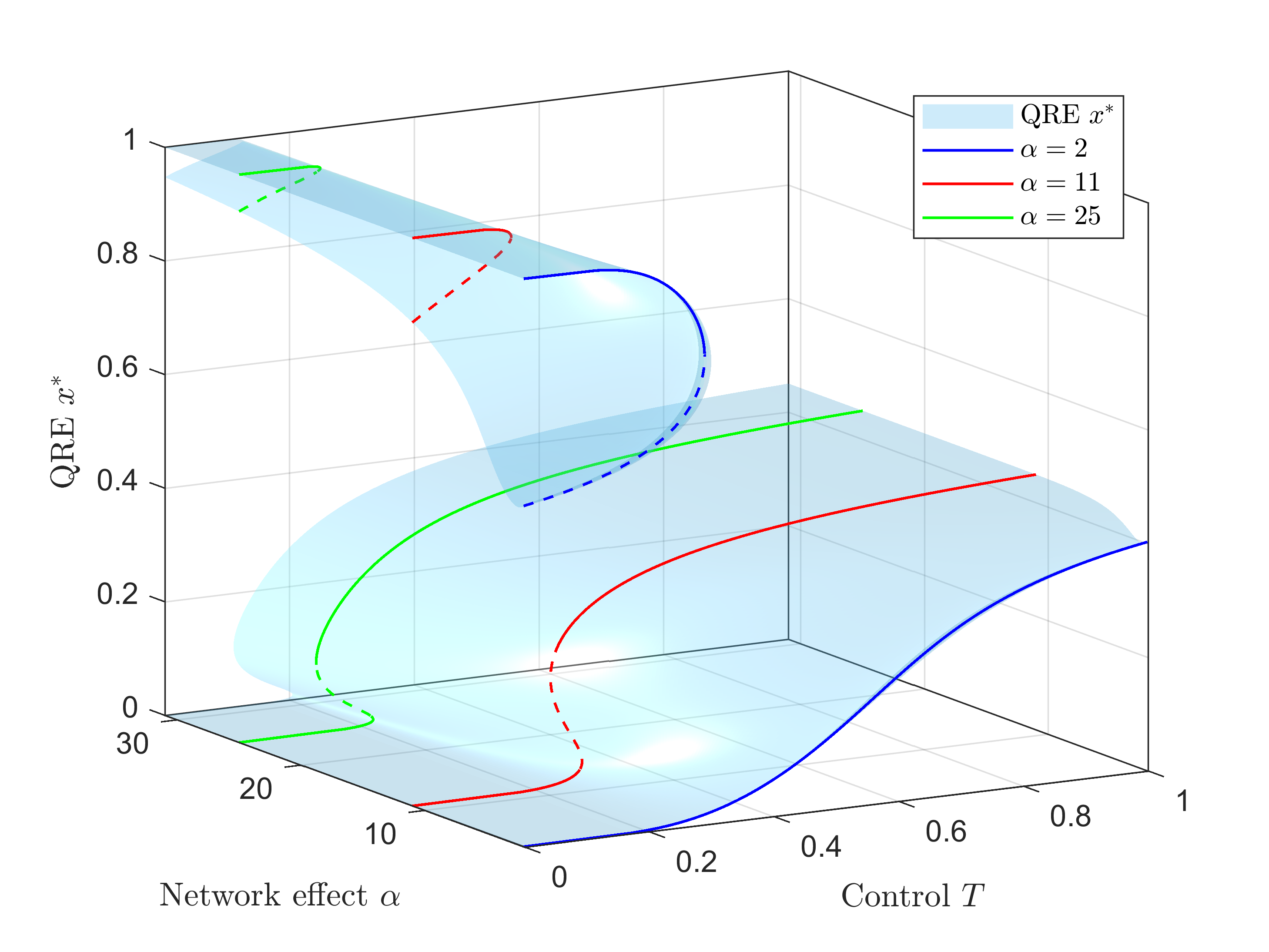}
\end{minipage}\hfill
\begin{minipage}[t]{0.49\textwidth}
\centering
\includegraphics[width=\textwidth]{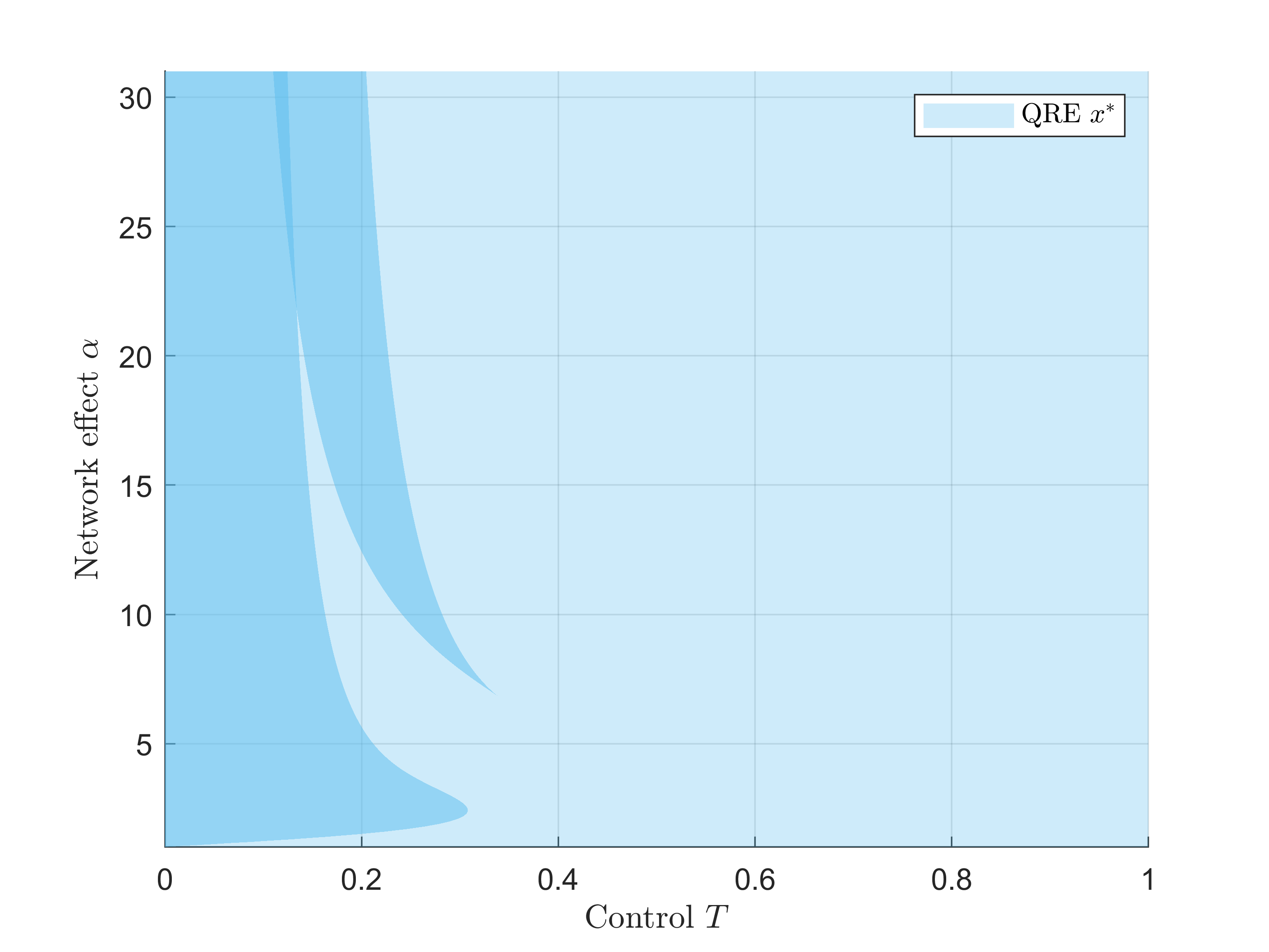}
\end{minipage}
\caption{Left panel: the QRE correspondence for various values of $\alpha$ ($\gamma=0.2$ is fixed). The case $\alpha=2$ (blue slice) has been treated in \Cref{sec:results}. As $\alpha$ increases, the lower component of the QRE correspondence develops an unstable region which can be seen as an s-shaped dashed part in the green slice (left panel). Right panel: the projection of the QRE surface on the $\(\alpha,T\)$ plane. Darker areas correspond to parameter values with more QRE. In particular, the bifurcations between one and three QRE occur at the boundary between dark and light areas at which the system has two QRE. The parameter range with three QRE that develops for larger values of $\alpha$ can be seen as a dark horn-shaped area in the upper/upper left part of the graph. For certain parameter values, there exist more QRE both in the upper and in the lower component of the QRE surface (darker area at which the two initially dark areas overlap). This unstable area is inconsequential in the transition to the efficient technology since it lies entirely in the attracting region of the $x=0$ equilibrium, cf. \Cref{thm:unique}. Finally, it can be seen that $T_c\(\gamma\)$ is decreasing in $\alpha$ for values of alpha larger than approximately 2.5 and that the bound $T\ge1/2$ is (in all cases) particularly conservative.}
\label{fig:alpha}
\end{figure}

\begin{theorem}[Catastrophe mechanism for arbitrary network effects]\label{thm:unique}
%Let $\gamma\in\(0,1\)$. Then, for any $\alpha>1$, the Q-learning dynamics in equation \eqref{eq:all} that describe the updates in the population state, $x\in[0,1]$, have a unique steady state (QRE), $x^*\in\(0,1/2\)$ whenever $T\ge1/2$. For $T=0$, and any $\alpha>1$, there are $3$ equlibria\footnote{For $\alpha=1$, there are two equilibria and the socially desirable one trivially dominates the socially undesirable.}, and $[0,1/2]$ lies in the attracting region of the $x=0$ equilibrium, which corresponds to the efficient technology. 
For a fixed cost parameter $\gamma \in \(0,1\)$ consider the population game defined by the payoff functions in equation \eqref{eq:payoffs} and the revision protocol \eqref{eq:protocol} which together lead to the population evolutionary dynamics $\dot x$ in equation \eqref{eq:all}. Then, for any $\alpha>1$, the catastrophe mechanism of \Cref{cor:convergence} still applies. \par
In particular, for any initial population state $x_0\in[\(1+\gamma\)/2,1]$, there exists a finite sequence $\langle T_0,T_1,\dots, T_n\rangle$ of control parameter values with $T_0=T_n=0$, and $\max_{i\le n}{T_i}\ge1/2(>T_c\(\gamma\))$, so that the iterative procedure which, starting from phase $i=0$,
\begin{itemize}
\item scales the control parameter $T$ to $T_i$,
\item allows the system to converge to a QRE, $x^*\(T_i\)$,
\end{itemize}
and reiterates for phase $i+1$, generates a sequence of population states (QRE) that converges to the risk-dominant equilibrium, $x^*\(T_n\)=0$, at which the population adopts the efficient technology. 
\end{theorem}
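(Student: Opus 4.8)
The plan is to run the argument behind \Cref{cor:convergence} almost verbatim, having first reduced the $\alpha>1$ dynamics \eqref{eq:all} to the sign of its bracket. On $(0,1)$ the factor $x(1-x)$ is positive, so the QRE are exactly the zeros of
\[
g_\alpha(x;T,\gamma):=x^{\alpha-1}-\gamma-(1-x)^{\alpha-1}-T\ln\!\left(\tfrac{x}{1-x}\right),
\]
and, since $\dot x=x(1-x)\,g_\alpha(x;T,\gamma)$, a zero is stable iff $g_\alpha$ changes sign there from $+$ to $-$. Crucially, unlike in \Cref{thm:main}, I would \emph{not} attempt a full classification of the QRE for every $T$ (this is genuinely harder for general $\alpha$, since for large $\alpha$ the lower branch develops extra folds, the \emph{horn} region of \Cref{fig:alpha}); the mechanism only needs two regimes --- $T=0$ and $T$ large --- plus a gluing step.

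\emph{Regime $T=0$.} Here $g_\alpha(x;0,\gamma)=h_\alpha(x)-\gamma$ with $h_\alpha(x):=x^{\alpha-1}-(1-x)^{\alpha-1}$, and $h_\alpha'(x)=(\alpha-1)\bigl(x^{\alpha-2}+(1-x)^{\alpha-2}\bigr)>0$, so $h_\alpha$ is a strictly increasing bijection of $(0,1)$ onto $(-1,1)$. Hence $h_\alpha(x)=\gamma$ has a unique root $\tilde x=\tilde x(\alpha,\gamma)$, and $h_\alpha(1/2)=0<\gamma$ forces $\tilde x>1/2$. Thus, exactly as in \Cref{prop:t_zero}, the $T=0$ steady states are $\{0,\tilde x,1\}$, with $\dot x<0$ on $(0,\tilde x)$ and $\dot x>0$ on $(\tilde x,1)$; so $x=0$ and $x=1$ are stable, $\tilde x$ is unstable, and the basin of the efficient equilibrium $x=0$ equals $[0,\tilde x)\supset[0,1/2]$.

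\emph{Regime $T$ large, and gluing.} I would show that for all $T$ past a critical level (the paper takes $T\ge 1/2$ as a conservative sufficient level) every QRE lies in $(0,1/2)$ and, moreover, $g_\alpha(x;T,\gamma)<0$ for all $x\in[1/2,1)$, so that the dynamics \eqref{eq:all} push every $x_0\in[(1+\gamma)/2,1]$ strictly leftward and stabilise it at some QRE $x^*(T)\in(0,1/2)$. Granting this, pick the schedule $\langle T_0=0,\,T_1,\,T_2=0\rangle$ with $T_1$ in this regime (refined into small steps as in \Cref{rem:sequence} if only gradual changes of $T$ are permitted). Phase $T_1$ drives the population into $(0,1/2)\subset[0,\tilde x)$; phase $T_2=0$ then places it in the basin of $x=0$ from Regime $T=0$, so it converges to the risk-dominant equilibrium $x=0$, independently of $\alpha>1$. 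The extra QRE that the lower branch may carry for large $\alpha$ all lie in $(0,1/2)$, hence in the basin of $x=0$ at $T=0$; they only change which interior QRE the population momentarily rests at in the intermediate phase, not the terminal state.

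\emph{Main obstacle.} The entire argument collapses onto the quantitative claim that, at control level $T=1/2$, the inefficient (\emph{upper}) component of the QRE correspondence has already vanished --- equivalently, writing (as in \Cref{rem:location}) $T_\alpha(x;\gamma):=\bigl(x^{\alpha-1}-(1-x)^{\alpha-1}-\gamma\bigr)\big/\ln(x/(1-x))$ for the control level that makes $x$ a QRE, that $\max_{x\in(\tilde x,1)}T_\alpha(x;\gamma)<\tfrac12$. This is the delicate step and is \emph{not} a one-line monotonicity estimate: $g_\alpha(1/2;1/2,\gamma)=-\gamma<0$ but $\partial_x g_\alpha(1/2;1/2,\gamma)=(\alpha-1)2^{3-\alpha}-2$, which is positive precisely for $2<\alpha<3$, so $g_\alpha$ can turn upward just past $1/2$. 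The bound must therefore pit the payoff-difference term $h_\alpha$ against the entropy term $\tfrac12\ln(x/(1-x))$ on $[1/2,1)$ while genuinely using $\gamma>0$ --- with $\gamma$ set to $0$ the inequality fails on that band, since $\lim_{x\to1/2^{+}}h_\alpha(x)/\ln(x/(1-x))=(\alpha-1)2^{1-\alpha}$ exceeds $\tfrac12$ for $2<\alpha<3$ --- and it must hold uniformly as $\alpha\to1^{+}$ and $\alpha\to\infty$, where the peak of $h_\alpha$ drifts to the boundary. I would package it as a separate lemma --- $T_\alpha(\cdot;\gamma)$ rises from $0$ at $\tilde x$ to a unique interior maximum on $(\tilde x,1)$ and then returns to $0$, together with the estimate that this maximum stays below $\tfrac12$ --- or, failing a clean universal constant, simply retain the $\alpha$-dependent threshold $T_c(\gamma,\alpha):=\max_{x\in(\tilde x,1)}T_\alpha(x;\gamma)$, which is finite for every $\alpha>1$ and still furnishes a valid schedule with $\max_i T_i>T_c(\gamma,\alpha)$. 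Everything else --- the $T=0$ basin picture, the reduction to $g_\alpha$, and the gluing --- is routine.
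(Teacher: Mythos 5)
Your architecture is the same as the paper's: reduce to the sign of the bracket, establish the $T=0$ picture (a unique interior rest point $\tilde x>1/2$, so $[0,1/2]$ lies in the basin of $x=0$), find a control level at which the upper branch of QRE has vanished and the flow on $[1/2,1)$ points left, then glue. Your $T=0$ regime is correct and is essentially the closing paragraph of the paper's proof. The gap is exactly the step you flag yourself: you never establish that $T=1/2$ (or any explicit level) annihilates the upper branch. The paper closes this for $\alpha\ge3$ not by bounding $\max_x T_\alpha(x;\gamma)$ on the upper branch but by a global monotonicity argument: $\frac{d}{dx}f_\alpha(x)=(\alpha-1)\,(x^{\alpha-2}+(1-x)^{\alpha-2})-T/(x(1-x))\le 0$ for all $x\in(0,1)$ once $T\ge 1/2$, which after multiplying through by $x(1-x)$ reduces to the elementary inequality $x^{\alpha-1}(1-x)+x(1-x)^{\alpha-1}\le\frac{1}{2(\alpha-1)}$ of \Cref{lem:ln}; monotone decrease together with $f_\alpha(1/2)=-\gamma<0$ and $f_\alpha\to+\infty$ as $x\to0^+$ then yields a unique QRE, necessarily in $(0,1/2)$. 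Note that this route uses $\gamma$ only through $f_\alpha(1/2)<0$, so the bound does not need to ``pit $h_\alpha$ against the entropy term using $\gamma>0$'' when $\alpha\ge3$. Your diagnosis of where monotonicity breaks is nevertheless exactly right: $\partial_xf_\alpha(1/2)=(\alpha-1)2^{3-\alpha}-2>0$ for $2<\alpha<3$ at $T=1/2$, and the paper concedes precisely this, restricting the written proof to (integer) $\alpha\ge3$ and omitting the $\alpha\in(2,3)$ band with only an assertion that the statement still holds there.

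Your fallback --- replacing $1/2$ by the $\alpha$-dependent threshold $T_c(\gamma,\alpha):=\max_{x\in(\tilde x,1)}T_\alpha(x;\gamma)$, finite because $T_\alpha$ is continuous on $(\tilde x,1)$ and tends to $0$ at both endpoints, while $T_\alpha\le0$ on $[1/2,\tilde x]$ so no positive $T$ admits a QRE there --- is sound: for $T>T_c(\gamma,\alpha)$ one gets $f_\alpha<0$ on all of $[1/2,1)$, hence leftward flow into some QRE in $(0,1/2)$, and resetting $T=0$ finishes via your basin computation. This proves the theorem's existential claim for every $\alpha>1$, including the $(2,3)$ band the paper leaves open, but it forfeits the uniform level $1/2$ that the theorem advertises and that the paper actually derives (for $\alpha\ge3$) from \Cref{lem:ln}. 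As submitted, the proposal is therefore a correct skeleton with the central quantitative estimate left as an acknowledged to-do; to match the paper you should supply the inequality of \Cref{lem:ln}, which is a one-variable calculus exercise (the maximum of $x^{\alpha-2}(1-x)$ over $[0,1]$ is $(\frac{\alpha-2}{\alpha-1})^{\alpha-2}\frac{1}{\alpha-1}$, and $(\frac{\beta-1}{\beta})^{\beta-1}\le\frac12$ for $\beta\ge2$).
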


%\stef{can we compute  here the exact value at which the bifurcation occurs}.

\begin{remark}
As mentioned above, the only difference in comparison to \Cref{cor:convergence} is that there exists a critical value $\alpha_c$, so that for values $\alpha>\alpha_c$, there exist multiple QRE in the interval $x\in\(0,1/2\)$ for certain values of $T<1/2$, cf. \Cref{fig:alpha}.\footnote{Technically, this is another bifurcation in the number of QRE, this time caused by parameter $\alpha$. However, we do not further explore this bifurcation since this deviates from the scope of the current section.} In practice, the bound $T=1/2$ --- to obtain a unique QRE, $x^*\in(0,1/2)$ --- is extremely conservative, and essentially, it is only tight for the absolutely extreme cases of $\gamma$ close to $0$ and $\alpha=1$ (not depicted here). Moreover, since convergence to any QRE $x^*\in \(0,1/2\)$ suffices to move the system out of the inefficient lock-in, \Cref{thm:unique} can be equivalently stated for $\max_{i\le n}T_i\ge T_c\(\gamma\)$ without compromising its outcome. \par
From the perspective of policy-making, it is also worth noting that the case $\alpha=2$ that was used up to now corresponds to one of the costliest cases to treat in practice. As can be seen in the right panel of \Cref{fig:alpha}, the critical level $T_c$ at which the bifurcation in the upper part of the QRE surface occurs --- i.e., the point at which the two upper QRE merge and immediately thereafter, disappear --- is decreasing in $\alpha$ for values of $\alpha$ larger than approximately $2.5$ (protrusion of the dark blue area in the bottom left corner). In particular, larger values of $\alpha$ capture technologies with more intense (positive) network effects. The more intense effects translate to smaller values of the tipping point $T_c\(\gamma\)$, which, in turn, implies a lower cost to implement the proposed mechanism. 
%As stated in \Cref{thm:unique}, this observation can be formally shown to hold for the relationship between the critical level $T_c\(\gamma\)$ of the control parameter $T$ at which the bifurcation at the upper component of the QRE surface occurs and parameter $\alpha$. The formulation in \Cref{thm:unique} takes into account that there may be more critical levels in the lower component of the QRE surface.
\end{remark}
\section{Application: Proof of Work Blockchain Mining and Taxation}\label{sec:applications}

Since the launch of Bitcoin (BTC) by the pseudonymous Satoshi Nakamoto, \cite{Na08}, Proof of Work (PoW) blockchains and their applications --- most notably cryptocurrencies --- have taken the world by storm. Widely considered as a revolutionary technology, blockchains have attracted the attention of institutions, technology-corporations, investors and academics. In these networks, self-interested miners ensure the proper functionality of the supported applications by exerting effort and receiving monetary incentives in return \cite{Leo20}. To participate, and without any centralized authority to grant permission, miners are required to provide evidence of some predefined, and typically scarce, resource, e.g., computational power that utilizes electricity in Proof of Work (PoW) protocols, \cite{Fia19,Gor19}, or units of the native cryptocurrency in Proof of Stake (PoS) protocols \cite{Bro19}. \par
\begin{figure}[!hbt]
\centering
\includegraphics{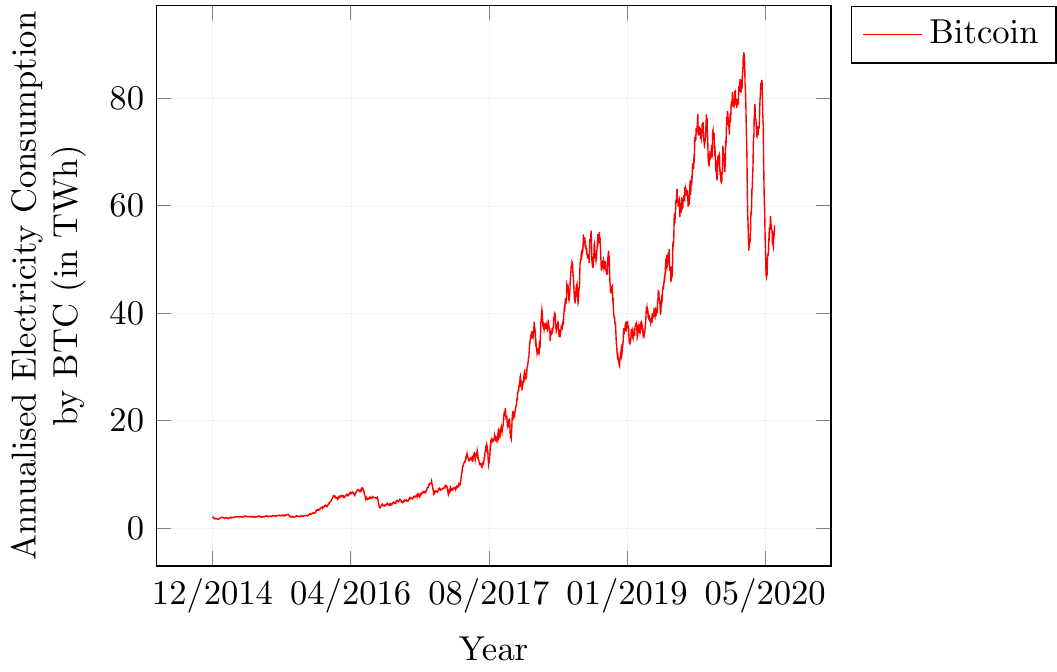}
%\begin{tikzpicture}
%\begin{axis}[date coordinates in = x, xticklabel=\month/\year, table/col sep= comma, cycle list name=color list, xmajorgrids=true, ymajorgrids=true, grid style={line width=.1pt, draw=gray!10}, legend entries = {Bitcoin}, legend pos={outer north east},xlabel=Year,ylabel style={align=center}, ylabel={Annualised Electricity Consumption \\ by BTC (in TWh)},]
%\addplot table[x=Date,y=Value]{export_2.csv};
%\end{axis}
%\end{tikzpicture}
\caption{The Cambridge Bitcoin Electricity Consumption Index (CEBCI) showing the estimate of the mean energy consumption of the Bitcoin (BTC) network in TWh (annualized). From less than 20 TWh in August of 2017, the total network consumption exceeded 90 TWh during the first days into 2020.\protect\footnotemark Subsequent fluctuations can be To understand the extent to which these figures potentially underestimate future energy consumption, they still reflect a time point in which BTC has not been widely endorsed by governments and public institutions and they do not account for the numerous PoW cryptocurrencies beyond BTC. Source: \href{https://www.cbeci.org/}{CBECI by University of Cambridge, Judge Business School}.}
\label{fig:waste}
\end{figure}

\footnotetext{Interestingly, the subsequent (extreme) fluctuations of the CEBCI index within 2020 co-occur with the global COVID-19 disease outbreak.}

Yet, PoW blockchains networks grow orthogonal to their design philosophy in terms of their environmental footprint. Currently, one BTC transaction wastes as much energy --- in terms of carbon footprint --- as 775.818 VISA transactions \cite{Dig20}. More alarming than its current levels --- which rank the BTC network above Finland and Pakistan in terms of electricity consumption --- is the consumption's increasing trend: the electricity used by the BTC network approximately doubles every year \cite{Cbe20}. The total figures only get worse, if we account for all other PoW blockchains such as Ethereum \cite{But19}.\par

These alarming figures call for mechanisms to accelerate the development and, more crucially, the adoption of alternative technologies such as (PoS) \cite{Bro19,Haz19}, also known as \emph{virtual mining} \cite{Ben16}.\footnote{Both from theoretical and practical perspective, PoS technologies have shown to offer strong guarantees analogous to those of PoW \cite{Kia17,Gar15,But19}.} Yet, an important barrier is that the value of a cryptocurrency --- or in general the reliability of its applications --- depends on the size of its mining network (with larger network implying higher safety). More mining power implies that it is more costly for a potential attacker to gather the required resources and compromise the functionality of the blockchain, \cite{Kia16,Bro19}. Hence, when the rest of the population mines a specific PoW cryptocurrency, then it is individually rational (preferable) for any single miner to also mine that cryptocurrency. \par
To foster the transition to the existing sustainable alternatives, like PoS, using the currently developed framework, the control parameter $T\ge0$ can be interpreted as \emph{taxation}\footnote{Earlier interpretations of $T$ as taxation parameter can be found in \cite{Wol12} and \cite{Yan17} among others.}. As mentioned in \Cref{sec:introduction}, changes in $T$ are essentially equivalent to rescaling agents' utilities. Importantly, taxation does not have to be neither preferential, i.e., to be imposed only on PoW networks, nor permanent. By taxing transactions between (all) cryptocurrencies and fiat currencies, central authorities --- especially in countries like China where most mining rigs are concentrated --- can influence the transition of the growing PoW networks to the PoS technology (or any other desirable alternative with equivalent technological guarantees). Importantly, our work shows that it suffices to (i) exercise a temporary influence to the system and (ii) to only reach a critical mass of adopters of PoS that is well below the simple majority before the critical phase transition becomes permanent. \par
Finally, our auxiliary results concerning the functional relationship between the critical value, $T_c\(\gamma\)$, and the cost parameter, $\gamma$, cf. \Cref{cor:monotonicity}, suggest that temporary shocks in the electricity price would accelerate the effects of the transition mechanism. Specifically, since $T_c\(\gamma\)$ decreases in $\gamma$, an increase in the electricity price for PoW mining would sooner prompt the critical phase transition of the population to the alternative technology. These results, extrapolated to similar technological dilemmas --- CO$_2$ emitting vs electric (or generally environmentally friendly) vehicles, e.g., cars or aircrafts or sustainable urban planning --- provide an enhanced, and provably effective, toolbox in the hands of contemporary policymakers.
\section{Appendix}\label{app:appendix}

\subsection{Additional Material and Omitted Proofs: \Cref{sec:model}}\label{app:omitted}

\begin{remark}[Derivation of revision protocol \eqref{eq:protocol}]For games with general action sets $A=\{1,2,\dots,n\}$, i.e., with possibly more than two strategies, the Q-learning agents update their strategies according to the following rule
\begin{alignat*}{3}
\max &&\quad \sum_{j=1}^nx_j&Q_t\(j\)-T\sum_{j=1}^nx_j\ln{x_j}\\
\text{subject to:} &&\quad \sum_{j=1}^nx_j&=1 \tag{S1'}\\
&&x_j&\ge 0, \qquad\text{for } j=1,2,\dots,n.
\end{alignat*}
In the current setting, at which each agent has two strategies, this conveniently reduces to the one-dimensional maximization problem in \Cref{eq:protocol}. To intuitively explain the objective function in (S1'), observe that the first term, i.e., $\sum_{j=1}^nx_j Q_t\(j\)$ enforces maximization of the $Q$-values. Since it is linear in the $x_j$'s, it would simply choose (put full probability on) the strategy with the highest $Q$-value, if the second term (entropy) was missing. However, the introduction of the entropy term, i.e., of $-\sum_{j=1}^nx_j\ln{x_j}$, essentially requires from the agent to choose the distribution $x$ with maximum entropy for every given weighted sum of the $Q$-values, and hence, to explore (assign positive probability to) sub-optimal strategies. \\
The relative importance between maximization of $Q$-values and exploration of the strategy space is controlled by parameter $T\ge0$. Termed \emph{temperature} in physics, $T$ can be interpreted as a tuning parameter: as $T\to0$, the agent always acts greedily and chooses the strategy corresponding to the maximum $Q$–value (pure exploitation), whereas as $T\to \infty$, the agent chooses a strategy completely at random (pure exploration). In particular, for $T=0$, the system reduces to the well-known replicator (best response) dynamics which (under standard regularity assumptions that are met in the present model) recover the Nash equilibria of the underlying evolutionary game, cf. \Cref{sub:elementary}. For different values of $T>0$, the resting points of the system change --- sometimes in an abrupt way in response to smooth changes in $T$ --- and this is precisely the intuition that we exploit here to design a mechanism that will stabilize the system and move it out of a lock-in to a desirable state. In fact, as shown in \cite{Yan17}, the temperature can be considered as a control parameter in the arsenal of a system designer. From the objective function, one discerns that parameter $T$ essentially rescales all the Q-values in a multiplicative way. Hence, as shown in \cite{Yan17}, $T$ can be treated as a taxation parameter in economic systems or as a medically controlled substance in health related settings. The interpretation of $T$ in a concrete application is further elaborated in \Cref{sec:applications}.
\end{remark}

\begin{proof}[Proof of \Cref{prop:evolutionary}]
The resulting game 

\begin{equation*}
P = \;\;\;\bordermatrix{
~ & W & S \cr
W & 1-\gamma & -\gamma \cr
S & 0 & 1 \cr}.
\end{equation*}
with $\gamma\in\(0,1\)$ has three Nash equilibria: two pure, $\(W,W\)$ with payoffs $\(1-\gamma,1-\gamma\)$ and $\(S,S\)$ with payoffs $\(1,1\)$, and one (fully) mixed $\(\(\frac{1+\gamma}2,\frac{1-\gamma}2\),\(\frac{1+\gamma}2,\frac{1-\gamma}2\)\)$ with payoffs $\(\frac{1-\gamma}2,\frac{1-\gamma}2\)$. These correspond to population states $x_1=0$, $x_2=\frac{1+\gamma}2$, and $x_3=1$. The corresponding payoffs can be also derived by substituting in the average payoff function, \eqref{eq:aggregate}, which here becomes
\[\bar{u}\(x\)= 2x^2-\(2+\gamma\)x+1.\]
All three Nash equilibria are symmetric. The $\(S,S\)$ (bottom right) Nash equilibrium is (strictly) payoff dominant, i.e., it \emph{Pareto-dominates} the other two (recall that $\gamma \in \(0,1\)$ is the cost from the current costly technology) and is also (strictly) \emph{risk} dominant, since $0+1>1-2\gamma$ for any $\gamma\in \(0,1\)$. Additionally, both pure strategy equilibria are evolutionary stable, since $u\(W,W\)=1-\gamma>0=u\(S,W\)$, and $u\(S,S\)=1>-\gamma=u\(W,S\)$. A symmetric mixed Nash equilibrium $\(x^*,x^*\)$ is evolutionary stable if $u\(x^*,x\)>u\(x,x\)$ for all other mixed strategies $x\neq x^*$. Hence, the mixed Nash equilibrium is \emph{not} evolutionary stable, since for any other $x\in\(0,1\)$ with $x\neq\frac{1+\gamma}2$
\begin{align*}
u\(\frac{1+\gamma}2,x\)-u\(x,x\)&=\(\frac{1+\gamma}{2}-x\)\lt \(1-\gamma\) x+\(-\gamma\)\(1-x\)\rt+\(\frac{1-\gamma}{2}-\(1-x\)\)\(1-x\)\\&=-\frac12\(2x-\(1+\gamma\)\)^2<0. 
\end{align*}
\end{proof}

\subsection{Omitted Proofs: \Cref{sec:results}}

\begin{proof}[Proof of \Cref{lem:explicit}]
From \eqref{eq:payoffs} and \eqref{eq:dynamics_pre}, it follows directly that
\begin{align*}
\dot x& =x\lt u\(W,x\)-\bar{u}\(x\)+T\(x\ln{\(\frac xx\)}+\(1-x\)\ln{\(\frac{1-x}{x}\)}\)\rt\\
&=x\lt \(1-x\)\lt u\(W,x\)-u\(S,x\)\rt+T\(1-x\)\ln{\(\frac{1-x}{x}\)}\rt\\
&=x\(1-x\)\lt  u\(W,x\)-u\(S,x\)+T\ln{\(\frac{1-x}{x}\)}\rt\\
&=x\(1-x\)\lt 2x-\(1+\gamma\)-T\ln{\(\frac{x}{1-x}\)}\rt.
\end{align*}
\end{proof}

\begin{lemma}\label{lem:critical}
Let $T\lt 0,1/2\rt$. Then, for any $\gamma\in\(0,1\)$, the equation
\begin{equation}\label{eq:t_critical}
\sqrt{1-2T}-\gamma-T\cdot\ln{\(\frac{1+\sqrt{1-2T}}{1-\sqrt{1-2T}}\)}=0
\end{equation}
has a unique solution $T_c\(\gamma\)$ or simply $T_c\in\(0,1/2\)$.
\end{lemma}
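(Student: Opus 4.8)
The plan is to remove the square roots by substituting $s := \sqrt{1-2T}$, which defines a strictly decreasing bijection from $T\in(0,1/2)$ onto $s\in(0,1)$ with inverse $T=(1-s^2)/2$. Under this change of variable, equation \eqref{eq:t_critical} becomes
\[
g(s) := s - \frac{1-s^2}{2}\ln\!\left(\frac{1+s}{1-s}\right) = \gamma ,
\]
so it suffices to show that $g$ attains the value $\gamma$ exactly once on $(0,1)$; uniqueness of the solution $T_c\in(0,1/2)$ then follows because $T\mapsto s$ is a bijection, and the corresponding $T_c(\gamma)=(1-s_\gamma^2)/2$ automatically lies in $(0,1/2)$ whenever $s_\gamma\in(0,1)$.

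The main step is to compute $g'$. Writing $h(s):=\ln\!\left(\frac{1+s}{1-s}\right)$ we have $h'(s)=\frac{1}{1+s}+\frac{1}{1-s}=\frac{2}{1-s^2}$, and hence
\[
g'(s) = 1 - \left( (-s)\,h(s) + \frac{1-s^2}{2}\cdot\frac{2}{1-s^2} \right) = s\,h(s) = s\,\ln\!\left(\frac{1+s}{1-s}\right).
\]
For every $s\in(0,1)$ both $s$ and $\ln\!\bigl(\tfrac{1+s}{1-s}\bigr)$ are strictly positive, so $g$ is strictly increasing on $(0,1)$.

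It remains to identify the endpoint behaviour. As $s\to 0^+$ the logarithmic term vanishes, so $g(0^+)=0$. As $s\to 1^-$, writing
\[
\frac{1-s^2}{2}\ln\!\left(\frac{1+s}{1-s}\right) = \frac{1+s}{2}\Bigl( (1-s)\ln(1+s) - (1-s)\ln(1-s) \Bigr)
\]
and using $(1-s)\ln(1-s)\to 0$, both terms in the bracket tend to $0$, so $g(1^-)=1$. By continuity, strict monotonicity and the intermediate value theorem, $g$ is a strictly increasing bijection $(0,1)\to(0,1)$; hence for each $\gamma\in(0,1)$ there is a unique $s_\gamma\in(0,1)$ with $g(s_\gamma)=\gamma$, and $T_c(\gamma)=(1-s_\gamma^2)/2$ is the unique solution of \eqref{eq:t_critical} in $(0,1/2)$.

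The only genuinely delicate point is the cancellation producing $g'(s)=s\ln\!\bigl(\tfrac{1+s}{1-s}\bigr)$: differentiating \eqref{eq:t_critical} directly in the variable $T$ yields an expression whose sign is far from transparent, whereas after the substitution everything collapses to a manifestly positive quantity. The limit $g(1^-)=1$ is the routine fact that $(1-s)\ln(1-s)\to 0$.
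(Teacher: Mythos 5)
Your proof is correct and takes essentially the same route as the paper: the substitution $s=\sqrt{1-2T}$, the computation $g'(s)=s\ln\bigl(\tfrac{1+s}{1-s}\bigr)>0$, and the endpoint limits $g(0^+)=0$, $g(1^-)=1$ are exactly the steps in the paper's argument (the paper merely keeps the $-\gamma$ inside the function and works on the slightly larger domain $(-1,1)$). No gaps.
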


\begin{proof}[Proof of \Cref{lem:critical}]
Let $u:=\sqrt{1-2T}$. Then, $T\in\lt0,1/2\rt$ implies that $u\in\lt 0,1\)$ and the transformation is one to one with inverse $T=\frac12\(1-u^2\)$. Hence, we need to show that the function 
\[\g:=u-\gamma-\frac{1-u^2}{2}\ln{\(\frac{1+u}{1-u}\)}\]
has a unique root $u_c$ in $\(0,1\)$, so that $T_c\(\gamma\)=\frac12\(1-u_c^2\)$. Note that $\g$ is defined for any $u\in\(-1,1\)$. The derivative of $\g$ with respect to $u$ is 
\[\frac{d}{du}\g=u\ln{\(\frac{1+u}{1-u}\)}\ge0\]
for all $u\in\(-1,1\)$ with $g\(u\)=0$ only for $u=0$. Hence, $\g$ is strictly increasing with $\lim_{u\to -1^+}\g=-1+\gamma<0$, $g_\gamma\(0\)=0-\gamma<0$ and $\lim_{u\to1^-}\g=1-\gamma>0$. Accordingly $\g$ has precisely one root, $u_c\in\(0,1\)$ which yields the unique solution of the equation in the statement of Lemma \Cref{lem:critical}, given by $T_c\(\gamma\)=\frac12\(1-u_c^2\)\in\(0,1/2\)$.
\end{proof}

\begin{lemma}\label{lem:f}
Let $\gamma\in\(0,1\)$ and let $T_c\(\gamma\)\in\(0,1/2\)$ as given by \Cref{lem:critical}. Then, the number of the solutions of the equation $f\(x;T,\gamma\)=0$ with \[f\(x;T,\gamma\)=2x-\(1+\gamma\)-T\ln{\(\frac{x}{1-x}\)}\] depends on the relative value of $T>0$ to $T_c\(\gamma\)$ as follows
\begin{itemize}
\item $0<T<T_c\(\gamma\)$: there are 3 solutions $x_1,x_2,x_3$, with $x_1 \in\(0,x_l\)$, $x_2\in\(\(1+\gamma\)/2,x_u\)$ and $x_3\in\(x_u,1\)$, with $x_{u,l}$ as in \eqref{eq:notation}. Moreover, it holds that $0<x_l<1/2<\(1+\gamma\)/2<x_u<1$. 
\item $T=T_c\(\gamma\)$: there are 2 solutions $x_1,x_2$, with $x_1\in\(0,x_l\)$ and $x_2=x_u$, with $x_l<1/2$ and $x_u>\(1+\gamma\)/2$.
\item $T>T_c\(\gamma\)$: there is 1 solution $x_1$, with $x_1\in\(0,x_l\)$, with $x_l<1/2$ when $T< 1/2$ and $x_1\in \(0,1/2\)$ when $T\ge1/2$.
\end{itemize}
\end{lemma}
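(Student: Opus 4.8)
The plan is to fix $T$ and $\gamma$ and analyze $f(\,\cdot\,;T,\gamma)$ as a one-dimensional function on $(0,1)$, reading off the root count from its monotonicity pattern together with the intermediate value theorem. First I would compute the derivative,
\[
f'(x) \;=\; 2 - \frac{T}{x(1-x)},
\]
which vanishes exactly when $x(1-x)=T/2$, i.e. at $x=x_{l,u}=\tfrac12\bigl(1\mp\sqrt{1-2T}\bigr)$, the quantities of \Cref{eq:notation}. Since $x(1-x)$ increases from $0$ to $\tfrac14$ on $(0,\tfrac12)$ and decreases back to $0$ on $(\tfrac12,1)$, for $T<1/2$ the sign of $f'$ is $-,+,-$ on $(0,x_l),(x_l,x_u),(x_u,1)$ — so $x_l$ is a strict local minimum and $x_u$ a strict local maximum — whereas for $T\ge 1/2$ one has $f'\le 0$ throughout and $f$ is strictly decreasing on $(0,1)$. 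Together with the boundary limits $f(x)\to+\infty$ as $x\to0^+$ and $f(x)\to-\infty$ as $x\to1^-$ (both valid for every $T>0$, as the logarithm dominates the bounded affine part), the number of roots of $f$ is then determined entirely by the signs of $f(x_l)$ and $f(x_u)$.

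The core computation is the evaluation of $f$ at the two critical points. Using $2x_{l,u}-1=\mp\sqrt{1-2T}$ and $x_u/(1-x_u)=(1+\sqrt{1-2T})/(1-\sqrt{1-2T})$ (its reciprocal at $x_l$), one obtains
\[
f(x_u;T,\gamma)=\sqrt{1-2T}-\gamma-T\ln\!\Bigl(\tfrac{1+\sqrt{1-2T}}{1-\sqrt{1-2T}}\Bigr),
\qquad
f(x_l;T,\gamma)=-\sqrt{1-2T}-\gamma+T\ln\!\Bigl(\tfrac{1+\sqrt{1-2T}}{1-\sqrt{1-2T}}\Bigr),
\]
so in particular $f(x_l)+f(x_u)=-2\gamma<0$. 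Since $f$ is strictly increasing on $(x_l,x_u)$ we also have $f(x_l)<f(x_u)$; the two facts combined force $f(x_l)<0$ for every $T\in(0,1/2)$ — this is the step that rules out the spurious configuration in which $f$ would remain positive up to $x_u$ and leave the wrong root pattern. For $f(x_u)$, the substitution $u=\sqrt{1-2T}$ (equivalently $T=\tfrac12(1-u^2)$) turns $f(x_u;T,\gamma)$ into exactly the function $g_\gamma(u)=u-\gamma-\tfrac{1-u^2}{2}\ln\!\bigl(\tfrac{1+u}{1-u}\bigr)$ of \Cref{lem:critical}; that lemma shows $g_\gamma$ is strictly increasing with a unique root $u_c$ and $T_c(\gamma)=\tfrac12(1-u_c^2)$. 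As $T\mapsto\sqrt{1-2T}$ is a strictly decreasing bijection of $(0,1/2)$ onto $(0,1)$, it follows that $f(x_u;T,\gamma)$ is strictly decreasing in $T$, is positive for $T<T_c$, vanishes at $T=T_c$, and is negative for $T_c<T<1/2$.

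With the signs of $f(x_l),f(x_u)$ pinned down, the case analysis is routine bookkeeping with the sign of a vector field on the line. For $T<T_c$ we have $f(x_l)<0<f(x_u)$, so the intermediate value theorem on the three monotone pieces yields one root in each of $(0,x_l)$, $(x_l,x_u)$, $(x_u,1)$; at $T=T_c$ the second and third roots collide at $x_u$ (where $f(x_u)=f'(x_u)=0$), leaving only the roots in $(0,x_l)$ and $\{x_u\}$; for $T_c<T<1/2$ both critical values are negative, leaving only the root in $(0,x_l)$; and for $T\ge1/2$ strict monotonicity gives a single root, which lies in $(0,1/2)$ because $f(1/2)=-\gamma<0$. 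To sharpen the middle root to $\bigl((1+\gamma)/2,x_u\bigr)$ I would note $f\bigl((1+\gamma)/2\bigr)=-T\ln\!\bigl(\tfrac{1+\gamma}{1-\gamma}\bigr)<0$, so it suffices to check $x_l<(1+\gamma)/2<x_u$; the right inequality is $\sqrt{1-2T}>\gamma$, i.e. $T<\tfrac{1-\gamma^2}{2}$, which holds on $[0,T_c]$ because $g_\gamma(\gamma)=-\tfrac{1-\gamma^2}{2}\ln\!\bigl(\tfrac{1+\gamma}{1-\gamma}\bigr)<0$ forces $u_c>\gamma$ and hence $T_c<\tfrac{1-\gamma^2}{2}$, while $x_l<1/2<(1+\gamma)/2$ is immediate, and the remaining orderings $0<x_l<1/2<(1+\gamma)/2<x_u<1$ follow at once from $0<\sqrt{1-2T}<1$. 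The main obstacle is the middle paragraph: extracting $f(x_l)<0$ uniformly in $T$ from the identity $f(x_l)+f(x_u)=-2\gamma$, and transporting the monotonicity of $g_\gamma$ established in \Cref{lem:critical} so as to obtain the sign behaviour of $f(x_u)$ as $T$ varies; once those are in place, everything else is elementary.
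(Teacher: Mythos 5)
Your proposal is correct and follows essentially the same route as the paper's proof: sign analysis of $f'(x)=2-T/(x(1-x))$ to locate the critical points $x_{l,u}$, evaluation of $f$ there via the function $g_\gamma$ of \Cref{lem:critical}, and the intermediate value theorem on the monotone pieces together with $f(1/2)=-\gamma<0$ and $f((1+\gamma)/2)<0$. The only (harmless) variations are cosmetic: you obtain $f(x_l)<0$ from the identity $f(x_l)+f(x_u)=-2\gamma$ rather than from $g_\gamma(-\sqrt{1-2T})<g_\gamma(0)=-\gamma$, and you derive $x_u>(1+\gamma)/2$ from $u_c>\gamma$ instead of reading it off the defining equation of $T_c(\gamma)$.
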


\begin{proof}[Proof of \Cref{lem:f}]
For any $T>0$ and $\gamma \in (0, 1)$, the function $f\(x; T, \gamma\)$ is continuous in $x\in\(0,1\)$ with
\begin{align}\label{eq:signs}
\lim_{x\to0^+}f\(x\)=+\infty,\qquad & \lim_{x\to1^-}f\(x;T,\gamma\)=-\infty,\nonumber\\
f\(1/2\)=-\gamma,\qquad & f\(\(1+\gamma\)/2\)=-T\ln{\(\frac{1+\gamma}{1-\gamma}\)<0}.
\end{align}
This implies that $f\(x;T,\gamma\)$ starts positive and ends up negative. The derivative of $f\(x;T,\gamma\)$ with respect to $x\in\(0,1\)$ is $f'\(x;T,\gamma\)=2-\frac{T}{x\(1-x\)}$, with $f'\(x;T,\gamma\)=0$, if and only if $-2x^2+2x-T=0$. If $T>1/2$, then $f'\(x;T,\gamma\)>0$ for any $x\in\(0,1\)$. For $T<1/2$, there exist two critical points 
\[x_{l,u}\(T\)=\frac12\(1\pm\sqrt{1-2T}\).\]
and for $T=1/2$ only one, i.e., $x_l=x_u=1/2$.
%with $f'_T\(x_l\)=-\sqrt{1-2T}-\gamma-T\ln{\(\frac{1-\sqrt{1-2T}}{1+\sqrt{1-2T}}\)}$ and $f'_T\(x_u\)=\sqrt{1-2T}-\gamma-T\ln{\(\frac{1+\sqrt{1-2T}}{1-\sqrt{1-2T}}\)}$. Moreover, $f'_T\(x\)$ is maximized at $x=1/2$, i.e., \[\max_{x\in\(0,1\)}f'_T\(x\)=f'_T\(1/2\)=2\(1-2T\)\]
Depending on the value of $T$ relative to $T_c\(\gamma\)$, there are three cases for the number of the solutions of the equation $f\(x;T,\gamma\)=0$ for $x\in\(0,1\)$. 
\begin{itemize}
\item $0<T<T_c\(\gamma\)$. Since $T_c\(\gamma\)<1/2$ for any $\gamma\in\(0,1\)$ as shown in \Cref{lem:critical}, we have for any $T\in\(0,T_c\(\gamma\)\)$ by equation \eqref{eq:t_critical} that
\begin{align*}
x_u\(T\)&=\frac{1}{2}\(1+\sqrt{1-2T}\)>\frac{1}{2}\(1+\sqrt{1-2T_c\(\gamma\)}\)\\&
=\frac12\(1+\gamma+T_c\(\gamma\)\ln{\(\frac{1+\sqrt{1-2T_c\(\gamma\)}}{1-\sqrt{1-2T_c\(\gamma\)}}\)}\)>\frac12\(1+\gamma\).
\end{align*}
Moreover, it also immediate that $1-2T>0$ and, hence that $0<x_l<1/2$ which shows that $0<x_l<1/2<\(1+\gamma\)/<x_u<1$ as claimed. Also, by the strict monotonicity of $\g \in \(-1,1\)$ that was shown in the proof of \Cref{lem:critical}, we have that 
\begin{align*}
f\(x_l;T,\gamma\)&=g_\gamma\(-\sqrt{1-2T}\)<g_\gamma\(0\)<0,\\
f\(x_u;T,\gamma\)&=g_\gamma\(\sqrt{1-2T}\)>g_\gamma\(\sqrt{1-2T_c\(\gamma\)}\)=0.
\end{align*}
Hence, using equations \eqref{eq:signs} and the sign of $f'\(x;T,\gamma\)$, we obtain that $f\(x;T,\gamma\)$ has precisely one root $x_1\in\(0,x_l\)$, one root $x_2\in\(\(1+\gamma\)/2,x_u\)$ and one root $x_3\in\(x_u,1\)$.
\item $T=T_c\(\gamma\)$. As in the previous case, $T_c\(\gamma\)<1/2$ which by the same reasoning implies that $0<x_l<\(1+\gamma\)/2<x_u<1$. However, in this case, $f\(x_u;T,\gamma\)=g_\gamma\(\sqrt{1-2T_c\(\gamma\)}\)=0$ and hence, using again equations \eqref{eq:signs} and the sign of $f'\(x;T,\gamma\)$, it follows that $f$ has one root in $\(0,x_1\)$, and a second root in $\(\(1+\gamma\)/2,1\)$ which is precisely $x_u$.
\item $T>T_c\(\gamma\)$. In this case, 
\[f\(x_u;T,\gamma\)=g_\gamma\(\sqrt{1-2T}\)<g_\gamma\(\sqrt{1-2T_c\(\gamma\)}\)=0\]
which implies that $f\(x;T,\gamma\)$ turns negative at some point $x_1<x_l$ when $T<1/2$ or $x_1<1/2$ when $T>1/2$ (at which case $x_l$ does not exist) and remains negative thereafter since $x_2$ is a local maximum. Hence, this $x_1\in\(0,x_l\)$ or $x_1\in\(0,1/2\)$ is the unique root of $f\(x;T,\gamma\)$ in $\(0,1\)$. 
\end{itemize}
\end{proof}
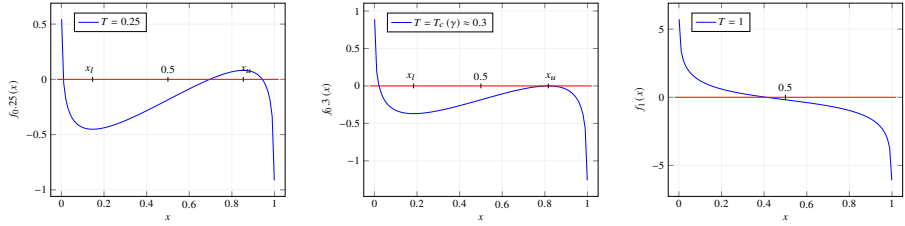
\begin{figure}[!htb]
\centering
\begin{tikzpicture}[scale=0.45]
\begin{axis}[xlabel=$x$, ylabel=$f_0.25\(x\)$, cycle list name=color list, xmajorgrids=true, ymajorgrids=true, grid style={line width=.1pt, draw=gray!10}, xmin=-0.05, xmax=1.05, legend entries = {$T=0.25$}, legend cell align={left}, legend style={at={(0.1,0.95)},anchor=north west}]
\addplot[blue,domain=0.001:0.999,samples=102]{2*x-1-0.185-0.25*ln(x/(1-x))};
\addplot[red,domain=-0.02:1.02, samples=10]{0}
[every node/.style={yshift=8pt}]
node[black,pos=0.1464]{$x_l$} 
node[pos=0.5,black] {$0.5$}
node[pos=0.8536,black] {$x_u$};
\addplot+[only marks,forget plot, mark=|] coordinates {(0.1464,0) (0.5,0) (0.8536,0)};
\end{axis}
\end{tikzpicture}\hspace{10pt}
\begin{tikzpicture}[scale=0.45]
\begin{axis}[xlabel=$x$, ylabel=$f_0.3\(x\)$, cycle list name=color list, xmajorgrids=true, ymajorgrids=true, grid style={line width=.1pt, draw=gray!10}, xmin=-0.05, xmax=1.05, legend entries = {$T=T_c\(\gamma\)\approx0.3$}, legend cell align={left}, legend style={at={(0.1,0.95)},anchor=north west}]
\addplot[blue,domain=0.001:0.999,samples=102]{2*x-1-0.185-0.3001*ln(x/(1-x))};
\addplot[red,domain=-0.02:1.02, samples=10]{0}
[every node/.style={yshift=8pt}]
node[black,pos=0.1838]{$x_l$} 
node[pos=0.5,black] {$0.5$}
node[pos=0.8162,black] {$x_u$};
\addplot+[only marks,forget plot, mark=|] coordinates {(0.1838,0) (0.5,0) (0.8162,0)};
\end{axis}
\end{tikzpicture}\hspace{10pt}
\begin{tikzpicture}[scale=0.45]
\begin{axis}[xlabel=$x$, ylabel=$f_1\(x\)$, cycle list name=color list, xmajorgrids=true, ymajorgrids=true, grid style={line width=.1pt, draw=gray!10}, xmin=-0.05, xmax=1.05, legend entries = {$T=1$}, legend cell align={left}, legend style={at={(0.1,0.95)},anchor=north west}]
\addplot[blue,domain=0.001:0.999,samples=102]{2*x-1-0.185-ln(x/(1-x))};
\addplot[red,domain=-0.02:1.02, samples=10]{0}
[every node/.style={yshift=8pt}]
node[pos=0.5,black] {$0.5$};
\addplot+[only marks,forget plot, mark=|] coordinates {(0.5,0)};
\end{axis}
\end{tikzpicture}
\caption{The function $f\(x;T,\gamma\)$ for $\gamma=0.185$ and $T=0.25, T=T_c\(\gamma\)\approx 0.3$ and $T=1$. For $T<T_c\(\gamma\)$, there are three steady states (roots of $f\(x;T,\gamma\)$), for $T=T_c\(\gamma\)$, there are precisely $2$, and for $T>T_c\(\gamma\)$, only one. The smallest root is always less than $1/2$ (in particular less than $x_l=\frac12\(1-\sqrt{1-2T}\)$, if $x_l$ exists), whereas the remaining ones (if any) are larger than $\(1+\gamma\)/2$.}
\label{fig:critical}
\end{figure}

The three cases of \Cref{lem:f} are illustrated in \Cref{fig:critical}. In the depicted instantiation, $\gamma=0.185$ which yields $T_c\(\gamma\)\approx0.3$. The three curves correspond to $T=0.25, T=T_c\(\gamma\)$ and $T=1$ which, in agreement with \Cref{lem:f}, yield $3,2$ and $1$ solutions respectively to the equation $f\(x;T,\gamma\)=0, x\in\(0,1\)$. 

Using \Cref{lem:critical,lem:f}, we can now determine the convergence properties of the Q-learning dynamical system $\dot x$ and hence, prove \Cref{thm:main}. 
\begin{proof}[Proof of \Cref{thm:main}]
The existence of the steady states in the three cases has been established in \Cref{lem:f}. Hence, it remains to prove the claims about their stability. The dynamics defined by $\dot x$ are 1-dimensional and hence their convergence properties and the stability of their steady states can be fully determined by the sign of $\dot x$. Since $x\(1-x\)>0$ for any $x\in\(0,1\)$, the sign of $\dot x$ fully depends on $f\(x;T,\gamma\)$. In turn, the sign of $f\(x;T,\gamma\)$ for any $x\in\(0,1\)$ has already been determined by the calculations in the proof of \Cref{lem:f} and in particular by equation \eqref{eq:signs}. Formally, we have the following cases
\begin{itemize}
\item $T<T_c\(\gamma\)$. In this case, the dynamics $\dot x$ have three steady states $x_1,x_2,x_3\in\(0,1\)$ that correspond to the respective roots of function $f\(x;T,\gamma\)$. In particular, it holds that $0<x_1<x_l<1/2<\(1+\gamma\)/2<x_2<x_u<x_3<1$ and the sign of $\dot x$ starts positive and alternates accordingly. This gives the stability results in \Cref{fig:stability_b}.
\item $T=T_c\(\gamma\)$. At this point, the two roots that are larger than $1/2$, namely $x_2$ and $x_3$, merge to one root precisely at $x_u$. The critical observation is that this new steady state is unstable, since the dynamics $\dot x$ have a negative sign at both sides of the root $x_u$. Moreover, it holds that $x_u>\(1+\gamma\)/2$. This is shown in \Cref{fig:stability_c}.
\item $T>T_c\(\gamma\)$. In this case, $f'\(x;T,\gamma\)<0$ which implies that the dynamics $\dot x$ are decreasing for any $x\(0,1\)$. Since $f\(x;T,\gamma\)$ starts positive and ends up negative, there remains only one root (steady state), $x_1$ of $f\(x;T,\gamma\)$, which is strictly less than $1/2$. An illustration is given in \Cref{fig:stability_d}.
\end{itemize}
\end{proof}
\begin{figure}[!htb]
\centering
\includegraphics[width=0.5\textwidth]{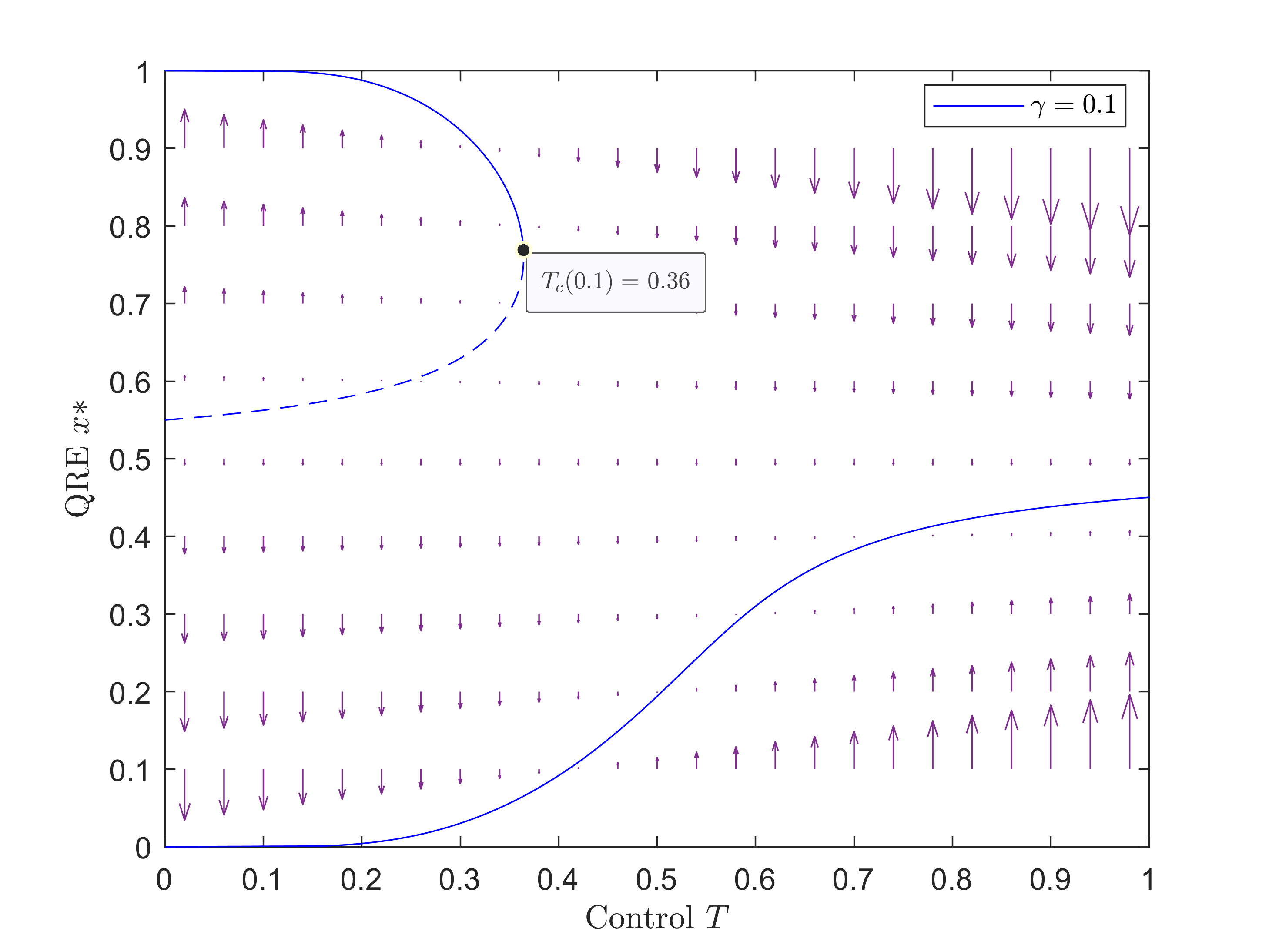}
\caption{An instantiation of the stability analysis of \Cref{thm:main} for $\gamma=0.1$ and all $T\in[0,1]$ (for values of $T\ge1$, the results are similar). The blue line represents the geometric locus of the QRE. Solid segments correspond to stable and dashed segments to unstable equilibria. This illustration can be compared to the stability analysis in \Cref{fig:stability}.}
\label{fig:stability_app}
\end{figure}
%\begin{lemma}\label{lem:decreasing}
%The critical temperature $T_c\(\gamma\)$, which is the unique solution of the equation
%\begin{equation*}
%\sqrt{1-2T}-\gamma-T\cdot\ln{\(\frac{1+\sqrt{1-2T}}{1-\sqrt{1-2T}}\)}=0
%\end{equation*}
%in $\(0,1/2\)$, cf. \Cref{lem:critical}, is decreasing in the cost parameter $\gamma\in\(0,1\)$. 
%\end{lemma}
\begin{proof}[Proof of \Cref{cor:monotonicity}]
For $\gamma\in\(0,1\)$, the critical temperature $T_c\(\gamma\)$ is the unique solution of the equation 
\[\sqrt{1-2T}-\gamma-T\cdot\ln{\(\frac{1+\sqrt{1-2T}}{1-\sqrt{1-2T}}\)}=0,\]
in $(0,1/2)$, cf. \Cref{lem:critical}. To proceed, let $F\(\gamma,T\):=\sqrt{1-2T}-\gamma-T\cdot\ln{\(\frac{1+\sqrt{1-2T}}{1-\sqrt{1-2T}}\)}$. Implicit differentiation of the function $F\(\gamma,T\)=0$ with respect to $\gamma$, yields
\[0=\frac{\partial F\(\gamma,T\)}{\partial \gamma}+\frac{\partial F\(\gamma,T\)}{\partial T}\cdot \frac{d T}{d\gamma}=-1-\ln{\(\frac{1+\sqrt{1-2T}}{1-\sqrt{1-2T}}\)}\cdot\frac{d T}{d\gamma}\]
Hence 
\[\frac{d T}{d\gamma}=-\(\ln{\(\frac{1+\sqrt{1-2T}}{1-\sqrt{1-2T}}\)}\)^{-1} \]
which is negative, since the argument of the $\ln$ is larger than $1$ for all $T\in (0,1/2]$.
\end{proof}

\subsection{Omitted Materials: \Cref{sub:hysteresis}}

\begin{proof}[Proof of \Cref{cor:convergence}]
As mentioned in \Cref{rem:sequence}, in theory it suffices to select $\langle T_0=0, T_1>T_c\(\gamma\), T_2=0\rangle$. Then, the stability properties of the dynamics that were established in \Cref{thm:main} directly imply the result. For practical purposes though, only gradual changes in $T$ may be possible. However, we can show that the mechanism still results in the same outcome. Specifically, starting from any initial point $x_0\in[\(1+\gamma\)/2,1]$ and increasing $T$ from $T_0=0$ to a $T_1\in\(0,T_c\(\gamma\)\)$, the dynamics will stabilize at QRE $x_3$. Further increasing $T$ above $T_c\(\gamma\)$ will lead the dynamics to stabilize at the unique remaining QRE, $x_1^*\(T\)<1/2$. It remains to show that we can reduce $T$ back to $0$ and converge to the desired equilibrium $x=0$. By \Cref{lem:f}, the function $f\(x;T,\gamma\)=2x-1-\gamma-T\ln{\(\frac{x}{1-x}\)}$ has precisely one root $x_1\(T\)$ in $\(0,x_l\(T\)\)$ for any value of $0<T\le T_c\(\gamma\)<1/2$, where $x_l\(T\)=\frac12\(1-\sqrt{1-2T}\)$. Moreover, implicit differentiation of the function $f\(x,T\)=0$, with $f\(x,T\):=2x\(T\)-1-\gamma-T\ln{\(\frac{x\(T\)}{1-x\(T\)}\)}$, shows that $x\(T\)$ is strictly increasing in $T$ for $x<1/2$, since 
\[\frac{\partial f\(x,T\)}{\partial T}\frac{d x}{d T}+\frac{\partial f\(x,T\)}{\partial x}\frac{d x}{d T}=-\ln{\(\frac{x}{1-x}\)}+\(2-\frac{x}{1-x}\)\frac{d x}{d T}=0\]
which yields 
\[\frac{d x}{d T}=\frac{\ln{\(\frac{x}{1-x}\)}}{2-\frac{T}{x\(1-x\)}}>0\]
for $x<x_l\(T\)$ and $T<T_c\(\gamma\)<1/2$. Hence, $0<x_1\(T\)<x_l\(T\)$ implies that for any $\epsilon>0$, there exists a $\delta>0$ such that $x_1\(T\)<\epsilon$ for any $T<\delta$, since
\[\lim_{T\to0^+}x_l=\lim_{T\to0^+}\frac12\(1-\sqrt{1-2T}\)=0\]
This implies, that $x_1\(T\)\to 0$ as $T\to 0^+$, which concludes the proof. 
\end{proof}

An alternative visualization of the control catastrophe mechanism described in \Cref{sub:hysteresis} is provided in \Cref{app:figure}.
\begin{figure}[!htb]
\centering
\begin{minipage}[t]{0.499\textwidth}
\centering
\includegraphics[width=\textwidth,trim=0.9cm 5.5cm 1cm 0.3cm, clip]{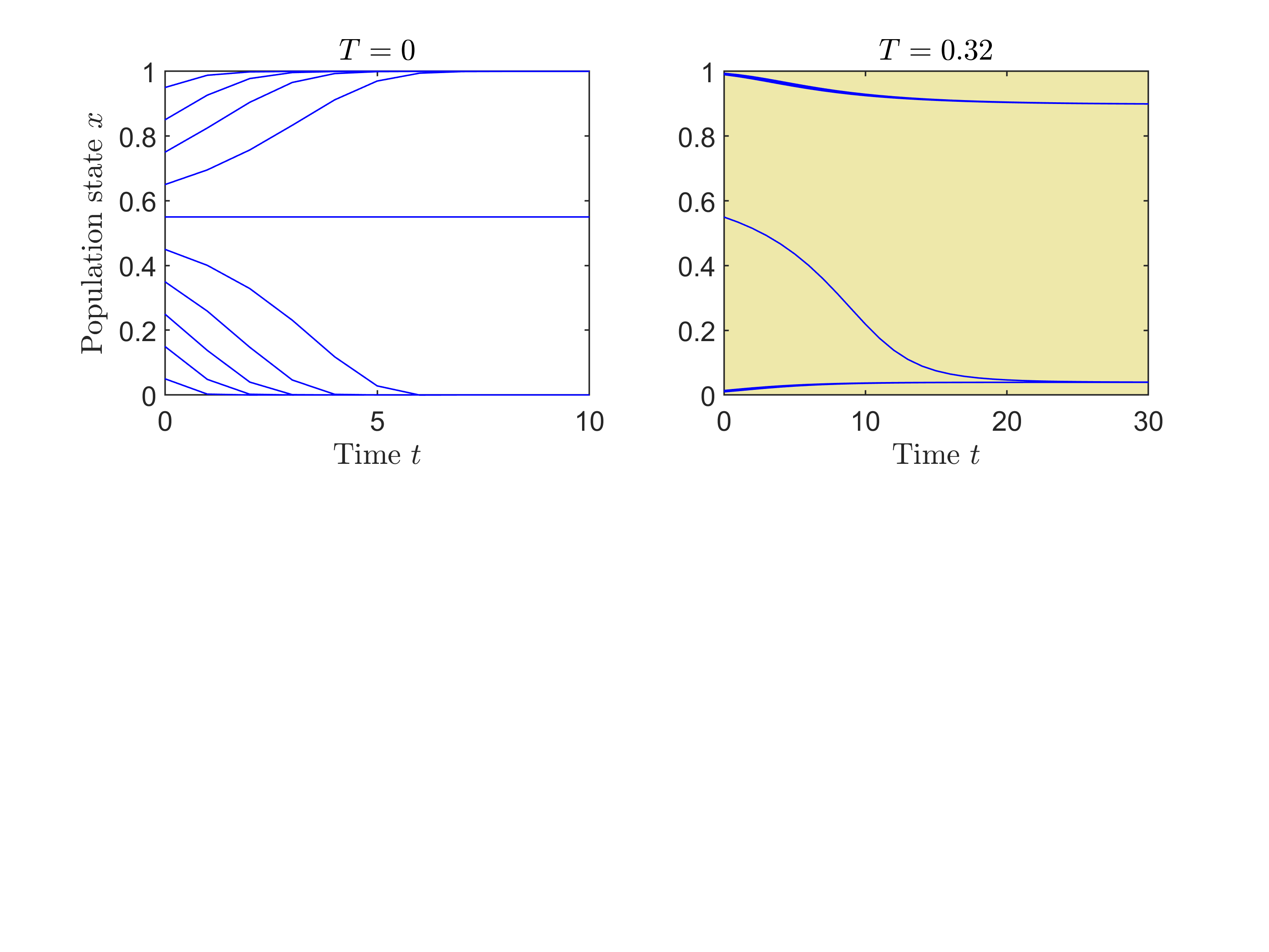}
\end{minipage}\hfill
\begin{minipage}[t]{0.499\textwidth}
\centering
\includegraphics[width=\textwidth,trim=0.9cm 5.5cm 1cm 0.3cm, clip]{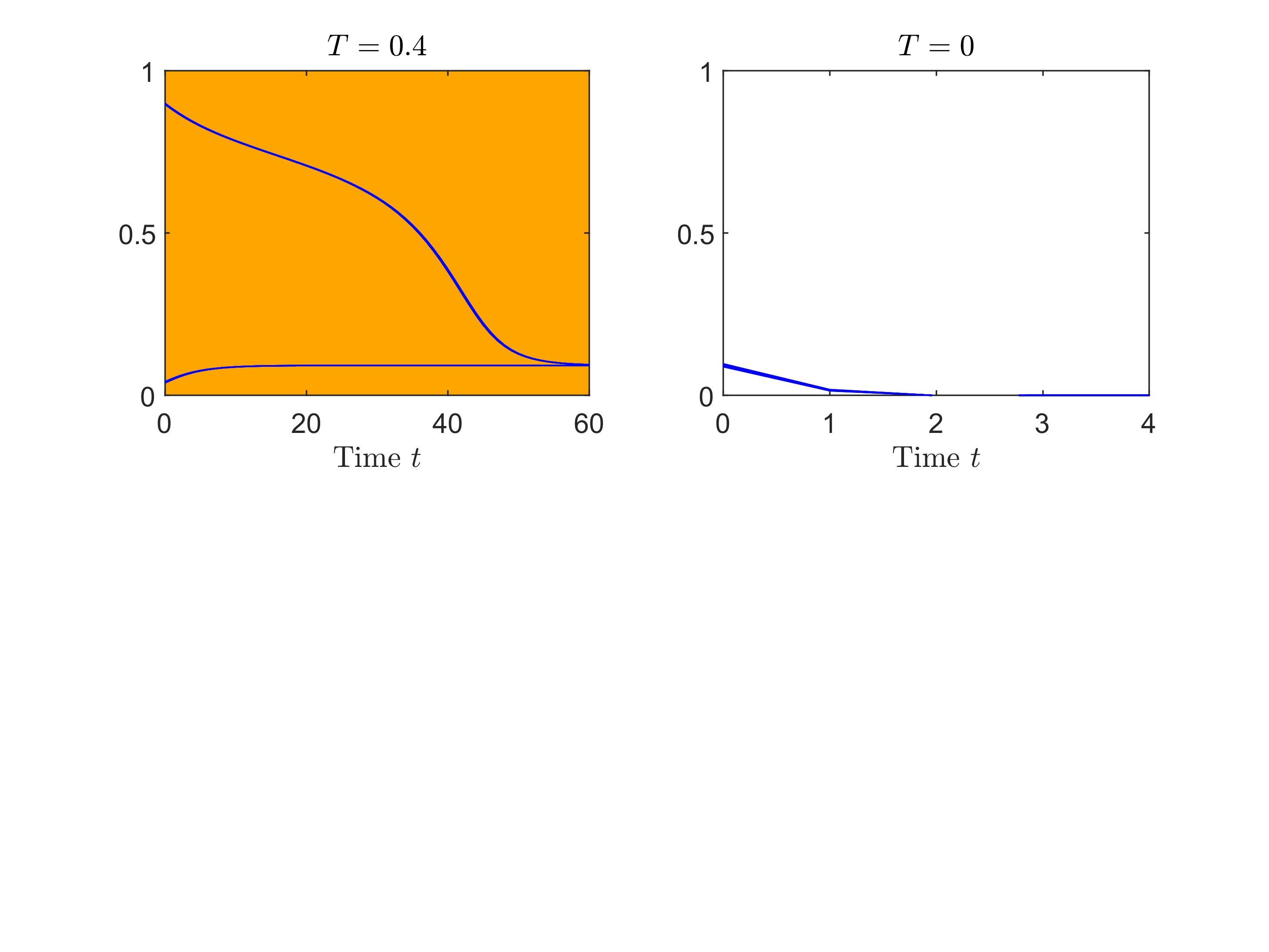}
\end{minipage}
\caption{The process of destabilizing an equilibrium via a controlled catastrophe and hysteresis mechanism in the game \eqref{eq:game}. The first panel shows the evolution of the population state for 10 initial starting points and control parameter $T=0$. The Q-learning dynamics -- in this case, equivalent to the replicator dynamics -- converge to the two stable equilibria of the system. The only trajectory that converges to the mixed, unstable equilibrium is the one that precisely starts from it (horizontal line at $\(1+\gamma\)/2$, i.e., slightly above $1/2$). In the second panel, the control parameter increases (but remains below the critical level). The starting points are the possible steady states for $T=0$. The two stable equilibria (QRE) are in the interior of the admissible region, $x\in\(0,1\)$. There is a third unstable equilibrium that separates the attracting regions of the two equilibria (not visible since no trajectory converges to it). In the third panel, the control parameter is increased above the critical level. Starting again from the endpoints of the previous panel, the stability of the system changes abruptly: the upper and middle equilibria merge and annihilate each other. There is only one remaining equilibrium and the population converges to it independently of the starting point. This equilibrium lies in the attracting region of the efficient technology (risk-dominant equilibrium), which can be now recovered by resetting the system back to its initial control value, $T=0$ (last panel).}
\label{app:figure}
\end{figure}

\section{Appendix}
\label{app:b}

\subsection{Omitted Materials and Proofs: \Cref{sub:robust_e}}
\label{app:section6}

\begin{lemma}\label{lem:bound}
Let $f\(x;\gamma,T\)$ as in \Cref{def:f} and let $f_\epsilon\(x;T,\gamma\):=f\(x;T,\gamma\)+\epsilon\(x\)$ where $\epsilon\(x\)$ denotes a noise term defined for $x\(0,1\)$ such that $|\epsilon\(x\)|\le \epsilon_0$ for some $\epsilon_0\in \(0,\min{\{\gamma,1-\gamma\}}\)$. Then, it holds that 
\[f\(x;T,\gamma+\epsilon_0\)< f_\epsilon\(x;T,\gamma\)<f\(x;T,\gamma-\epsilon_0\)\]
\end{lemma}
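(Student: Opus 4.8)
The plan is to reduce the lemma to the elementary fact that $f\(x;T,\gamma\)$, as defined in \Cref{def:f}, depends on the cost parameter $\gamma$ only through the additive term $-\(1+\gamma\)$ and is therefore affine — indeed strictly decreasing — in $\gamma$ for every fixed $x\in\(0,1\)$ and $T\ge0$. The first step is to record the shift identity
\[f\(x;T,\gamma\pm\epsilon_0\)=2x-\(1+\gamma\)-T\ln{\(\frac{x}{1-x}\)}\mp\epsilon_0=f\(x;T,\gamma\)\mp\epsilon_0,\]
valid for all $x\in\(0,1\)$; here I would note that $\epsilon_0\in\(0,\min\{\gamma,1-\gamma\}\)$ guarantees $\gamma\pm\epsilon_0\in\(0,1\)$, so that both sides are bona fide instances of the function in \Cref{def:f}.

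The second step is a sandwich argument. From the defining bound in \Cref{not:3}, $-\epsilon_0\le\epsilon\(x\)\le\epsilon_0$ for all $x\in\(0,1\)$; adding $f\(x;T,\gamma\)$ to each part and using $f_\epsilon\(x;T,\gamma\)=f\(x;T,\gamma\)+\epsilon\(x\)$ yields
\[f\(x;T,\gamma\)-\epsilon_0\le f_\epsilon\(x;T,\gamma\)\le f\(x;T,\gamma\)+\epsilon_0,\]
and substituting the shift identity turns the outer terms into $f\(x;T,\gamma+\epsilon_0\)$ and $f\(x;T,\gamma-\epsilon_0\)$, which is exactly the asserted chain of inequalities.

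I do not anticipate a genuine obstacle: the statement is an immediate consequence of linearity of $f$ in $\gamma$ together with the a priori bound on the noise. The one point worth a remark is strictness — the conclusion is stated with strict inequalities, whereas $|\epsilon\(x\)|\le\epsilon_0$ only delivers weak ones. I would dispose of this either by observing that the admissibility margin $\epsilon_0<\min\{\gamma,1-\gamma\}$ lets one pick $\epsilon_0'$ with $\epsilon_0<\epsilon_0'<\min\{\gamma,1-\gamma\}$, so that $f\(x;T,\gamma+\epsilon_0'\)<f_\epsilon\(x;T,\gamma\)<f\(x;T,\gamma-\epsilon_0'\)$ holds strictly, or simply by replacing $<$ with $\le$, which is all that the downstream proof of \Cref{thm:stability_e} requires. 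In either reading the payoff is the same: the perturbed sign function $f_\epsilon\(\cdot\,;T,\gamma\)$ is trapped between the two unperturbed ones with cost parameters $\gamma\pm\epsilon_0$, and this is what will let us carry the steady-state count and stability classification of \Cref{thm:main} over to the perturbed dynamics \eqref{eq:perturbed}.
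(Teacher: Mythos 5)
Your proposal is correct and follows essentially the same route as the paper, which likewise reduces the claim to the fact that $\partial f/\partial\gamma=-1$ (i.e., the shift identity $f\(x;T,\gamma\pm\epsilon_0\)=f\(x;T,\gamma\)\mp\epsilon_0$) and then sandwiches $f_\epsilon$ using $|\epsilon\(x\)|\le\epsilon_0$. Your remark about strictness is a fair catch --- the paper's one-line proof glosses over the boundary case $\epsilon\(x\)=\pm\epsilon_0$ exactly as you describe, and either of your two fixes is adequate since only the weak inequalities are used downstream.
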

\begin{proof}
Since $\frac{\partial}{\partial \gamma} f\(x;T,\gamma\)=-1<0$, $f\(x;T,\gamma\)$ is decreasing in $\gamma$ which implies that 
\[f\(x;T,\gamma+\epsilon_0\)<f\(x;T,\gamma\)+\epsilon\(x\)=f_\epsilon\(x;T,\gamma\)<f\(x;T,\gamma-\epsilon_0\).\]
\end{proof}

\begin{notation}\label{not:min}
For given $\gamma\in\(0,1\)$ and any $T>0$, let $x_m^*\(T\):=\min{\{x\in\(0,1\): f\(x;T,\gamma\) =0\}}$ and $x_m^*\(T;\epsilon_0\):=\min{\{x\in\(0,1\): f_\epsilon\(x;T,\gamma\) =0\}}$ denote the minimum QRE at level $T$ of the unperturbed and perturbed dynamics, respectively. Note that by \Cref{thm:main}, $x^*\(T;\gamma\)<1/2$ for any pair $\(T,\gamma\)$.
\end{notation}

\begin{proof}[Proof of \Cref{thm:stability_e}]
The stability analysis of the dynamics in the perturbed case is derived in the same way as in \Cref{thm:main} and follows directly from \Cref{lem:bound} and the fact that $|\epsilon\(x\)|\le \epsilon_0$. In particular, let $x^*\in\(0,1\)$ be a QRE of the unperturbed system. Then, since $|\epsilon\(x\)|\le \epsilon_0$, we can found a $\delta\(x^*\)>0$, so that $|f_\epsilon\(x^*\pm\delta\(x^*\);\gamma,T\)|>\epsilon_0$. Hence, outside the $\delta\(x^*\)$-neighborhood of $x^*$, the function $f_\epsilon\(x;\gamma,T\)$ retains the sign of the unperturbed function $f\(x;T,\gamma\)$ and the stability property of the dynamics directly follow from \Cref{thm:main}.\par
To obtain an estimate of $\delta\(x^*\)$, observe that for any $x\in\(0,1\)$, by first order Taylor's theorem, we have that
\[f\(x^*\pm\delta;T,\gamma\)=f\(x^*;T,\gamma\)\pm\(2-T\frac{1}{\xi\(1-\xi\)}\)\delta=\pm\(2-T\frac{1}{\xi\(1-\xi\)}\)\delta,\]
where $\xi \in \(x^*-\delta,x^*\)$ or $\xi \in \(x^*,x^*+\delta\)$, respectively and $\delta>0$. This equation can then be solved to find a minimum value $\delta\(x^*\)$ so that $|f\(x^*\pm\delta\(x^*\);T,\gamma\)\ge\epsilon_0$. As claimed in \Cref{sec:robustness}, the steep increase, $\frac{1}{x\(1-x\)}$, of the entropy term as $x^*$ approaches the boundary, dominates the (bounded) perturbation term $\epsilon\(x^*\)$ and yields lower values of $\delta\(x^*\)$. This implies that the distance between $x_m^*\(T\)$ and $x_m^*\(T;\epsilon_0\)$ as defined in \Cref{not:min} is decreasing for values of $x^*_m\(T\)$ close to $0$. Thus, convergence to $x=0$ (which corresponds to the efficient technology) is still achieved. 
% Finally, since
%\[f'\(x;T,\gamma\)=2-T\cdot\frac{1}{x\(1-x\)}\le 2-4T,\] 
%where the maximum on the right side is attained for $x=1/2$, we have for values of $T>1/2$ that 
\end{proof}
\subsection{Omitted Materials and Proofs: \Cref{sub:robust_a}}

\begin{remark}\label{rem:alpha}
Parameter $\alpha$ expresses the value that is created by the technology in response to its adoption. In particular, there are three interesting cases, depending on whether $\alpha$ is smaller, larger than or equal to $1$.
\begin{itemize}
\item $\alpha<1$: Subadditive value. In this case, the total value generated from the two technologies is subadditive implying that a split of the network is more beneficial for the society. 
\item $\alpha=1$: Linear value. In this case, the aggregate value that is generated is increasing in the rate of adoption of the innovative (less costly) technology. Resolution of this case is trivial from a mathematical perspective, since the less costly technology constitutes a strictly dominant strategy for the population for any $\gamma\in\(0,1\)$.
\item $\alpha>1$: Superadditive value. In this case, the aggregate value is (locally) maximized when either technology is fully adopted. This case corresponds to direct (positive) network effects and is the one that is discussed in the current paper.
\end{itemize}
Typical instantiations of these cases are depicted in the two panels of \Cref{fig:alpha_1}.
\end{remark}
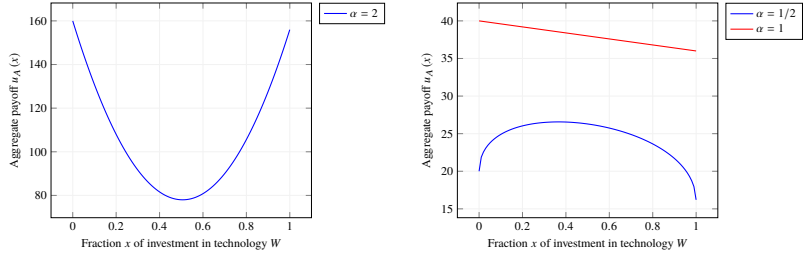
\begin{figure}[!htb]
\centering
\begin{tikzpicture}[scale=0.5]
\begin{axis}[xlabel=Fraction $x$ of investment in technology $W$, ylabel=Aggregate payoff $u_A\(x\)$, cycle list name=color list, xmajorgrids=true, ymajorgrids=true, grid style={line width=.1pt, draw=gray!10}, legend entries = {$\alpha=2$}, legend cell align={left}, legend pos={outer north east}]
\addplot[thick,blue,domain=0:1, samples=100,unbounded coords=jump]{10*(4^2)*(x^2-(1/(10*4^(2-1)))*x+(1-x)^2)};
\end{axis}
\end{tikzpicture}\hspace{10pt}
\begin{tikzpicture}[scale=0.5]
\begin{axis}[xlabel=Fraction $x$ of investment in technology $W$, ylabel=Aggregate payoff $u_A\(x\)$, cycle list name=color list, xmajorgrids=true, ymajorgrids=true, grid style={line width=.1pt, draw=gray!10}, legend entries = {$\alpha=1/2$, $\alpha=1$}, legend cell align={left}, legend pos={outer north east}]
\addplot[thick, blue, domain=0:1, samples=100,unbounded coords=jump]{20*(x^(1/2)-x/5+(1-x)^(1/2))};
\addplot[thick, red, domain=0:1, samples=100,unbounded coords=jump]{10*4*(x-(1/(10))*x+1-x)};
\end{axis}
\end{tikzpicture}
\caption{Aggregate payoff $u_A\(x\)$ created by the population as a whole at population state $x\in[0,1]$ (investment in technology $W$) for various values of parameter $\alpha$. The aggregate payoff (vertical axis) is calculated by the formula: $u_A\(x\)=VK^\alpha\lt x^\alpha-\frac{\gamma}{VK^{\alpha-1}}x+\(1-x\)^\alpha\rt$, with $x\in[0,1]$ and selected values of the parameters $V=10, K=4$ and $\gamma=1$. For $\alpha=2$ (left panel) and in general for $\alpha>1$, the total aggregate value is (locally) maximized at the boundaries, i.e., when either technology is fully adopted (superadditive value). The global maximum is attained when the less costly technology $S$ is fully adopted, i.e., when $x=0$. By contrast, for $\alpha=1/2$ (right panel, blue line), and in general for $\alpha<1$, the aggregate wealth is maximized when the population is split between the two technologies (subadditive value). For $\alpha=1$ (right panel, red line), the aggregate wealth is increasing in the adoption of the less costly technology $S$ (linear value).}
\label{fig:alpha_1}
\end{figure}

The proof of \Cref{thm:unique} relies on \Cref{lem:ln} which exploits a particular inequality of the natural logarithm.

\begin{lemma}\label{lem:ln}
For any $\alpha\ge2$, and any $x\in[0,1]$, it holds that 
\[x^\alpha\(1-x\)+x\(1-x\)^\alpha\le\frac{1}{2\alpha}.\]
\end{lemma}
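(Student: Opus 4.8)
The plan is to use symmetry and the substitution $r=x/(1-x)$ to collapse the claim to a one-variable inequality on $[0,1]$, and then to dispatch that inequality by minimising an elementary function; the hypothesis $\alpha\ge2$ will enter at exactly one place.

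\emph{Step 1 (reduction to a single variable).} The expression $\varphi(x):=x^\alpha(1-x)+x(1-x)^\alpha$ is invariant under $x\mapsto 1-x$, so I would assume $x\in[0,1/2]$ and put $r:=x/(1-x)\in[0,1]$, so that $x=r/(1+r)$ and $1-x=1/(1+r)$. Factoring $(1-x)^{\alpha+1}$ out of $\varphi(x)$ turns it into
\[
\varphi(x)=\frac{r+r^\alpha}{(1+r)^{\alpha+1}}=:G(r),
\]
so the claim becomes: $G(r)\le \tfrac{1}{2\alpha}$ for every $r\in[0,1]$ and every $\alpha\ge2$.

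\emph{Step 2 (where $\alpha\ge 2$ is used).} Since $r\in[0,1]$ and $\alpha\ge2$, we have $r^\alpha=r^2\cdot r^{\alpha-2}\le r^2$, hence
\[
G(r)\le\frac{r+r^2}{(1+r)^{\alpha+1}}=\frac{r}{(1+r)^{\alpha}}.
\]
This is the step I expect to be subtle: one must bound $r^\alpha$ by $r^2$, not by the cruder $r^\alpha\le r$, because the latter introduces a stray factor $2$ and would force the estimate $(1+1/\alpha)^{\alpha+1}\ge 4$ in Step 3, which is false (that quantity tends to $e$). With $r^\alpha\le r^2$ the remaining slack is exactly the right size, consistent with the fact that the bound is attained at $\alpha=2$, $r=1$, i.e. $x=1/2$.

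\emph{Step 3 (elementary minimisation / a logarithm inequality).} It remains to show $(1+r)^\alpha\ge 2\alpha r$ for all $r\ge0$, $\alpha\ge2$. I would analyse $h(r):=(1+r)^\alpha/r$ on $(0,\infty)$: from $(\ln h)'(r)=\tfrac{\alpha}{1+r}-\tfrac1r=\tfrac{(\alpha-1)r-1}{r(1+r)}$, the function $h$ decreases on $(0,\tfrac{1}{\alpha-1})$ and increases afterwards, so $\min_{r>0}h(r)=h\!\left(\tfrac{1}{\alpha-1}\right)=\alpha^\alpha/(\alpha-1)^{\alpha-1}$. Hence it suffices to check $\alpha^\alpha/(\alpha-1)^{\alpha-1}\ge 2\alpha$, i.e. $\bigl(1+\tfrac{1}{\alpha-1}\bigr)^{\alpha-1}\ge2$; writing $n=\alpha-1\ge1$ this is $(1+1/n)^n\ge2$ for $n\ge1$, which follows from the monotonicity of $(1+1/n)^n$ and ultimately from the logarithm inequality $\ln(1+t)>t/(1+t)$ for $t>0$. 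Combining Steps 1--3 yields $\varphi(x)\le G(r)\le \tfrac{1}{2\alpha}$, with equality only at $\alpha=2$, $x=1/2$. The single genuine obstacle is the choice of the intermediate bound in Step 2.
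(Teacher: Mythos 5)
Your proof is correct, and it takes a genuinely different route through the middle of the argument even though it lands on the same scalar inequality at the end. The paper splits the bound as $\frac{1}{2\alpha}=\frac{x}{2\alpha}+\frac{1-x}{2\alpha}$ and bounds the two summands separately; by the $x\mapsto 1-x$ symmetry this reduces the lemma to the single-term inequality $x^{\alpha-1}(1-x)\le\frac{1}{2\alpha}$, which is dispatched by maximizing over $x$ (maximum at $x=\frac{\alpha-1}{\alpha}$). You instead keep the sum intact, use the symmetry to restrict to $x\le 1/2$, substitute $r=x/(1-x)\in[0,1]$, and invoke $r^\alpha\le r^2$ to reduce to $(1+r)^\alpha\ge 2\alpha r$, which you dispatch by minimizing over $r$ (minimum at $r=\frac{1}{\alpha-1}$). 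Both optimizations collapse to the same one-parameter inequality $\bigl(1+\frac{1}{\alpha-1}\bigr)^{\alpha-1}\ge 2$ --- the paper states it in the equivalent reciprocal form $\bigl(\frac{\alpha-1}{\alpha}\bigr)^{\alpha-1}\le\frac{1}{2}$ --- and both arguments settle it with an elementary logarithm estimate ($\ln t\le t-1$ in the paper, $\ln(1+t)>t/(1+t)$ in your Step 3; these are the same inequality up to the substitution $t\mapsto 1/(1+t)$). The paper's term-by-term decomposition is somewhat shorter, since it avoids the change of variables and the auxiliary bound $r^\alpha\le r^2$; your version makes the tightness at $\alpha=2$, $x=1/2$ more transparent, because one sees exactly where each of your two inequalities becomes an equality. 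Your remark in Step 2 that the cruder bound $r^\alpha\le r$ would be too lossy is accurate, and it is an issue the paper's route never encounters because it never couples the two terms.
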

\begin{proof}
Since $x\in[0,1]$, we can rewrite the inequality as 
\[x^\alpha\(1-x\)+x\(1-x\)^\alpha\le\frac{x}{2\alpha}+\frac{1-x}{2\alpha}\]
which is equivalent to 
\[\lt x^\alpha\(1-x\)-\frac{x}{2\alpha}\rt+\lt x\(1-x\)^\alpha-\frac{1-x}{2\alpha}\rt\le 0\]
Hence, by symmetry, it suffices to show that $x^\alpha\(1-x\)-\frac{x}{2\alpha}\le0$, for any $x\in[0,1]$ and $\alpha>2$, which is in turn equivalent to $x^{\alpha-1}\(1-x\)\le\frac{1}{2\alpha}$. By differentiating the left hand side with respect to $x$, we find that 
$\frac{d}{dx}\(x^{\alpha-1}\(1-x\)\)=\frac{x^{\alpha-2}}{\alpha}\(\frac{\alpha-1}{\alpha}-x\)$ 
which is zero for $x=\frac{\alpha-1}{\alpha}$. Since $x^{\alpha-1}\(1-x\)$ is equal to $0$ for both $x=0$ and $x=1$, it attains a maximum at $x=\frac{\alpha-1}{\alpha}$ with value $\(\frac{\alpha-1}{\alpha}\)^{\alpha-1}\frac{1}{\alpha}$. Accordingly, it suffices to show that 
\[\(\frac{\alpha-1}{\alpha}\)^{\alpha-1}\frac{1}{\alpha}\le \frac{1}{2\alpha}\]
or equivalently that $\(\frac{\alpha-1}{\alpha}\)^{\alpha-1}\le 1/2$ for any $\alpha\ge2$. However, the term on the left side is decreasing in $\alpha$, since by taking the logarithm and applying the inequality $\ln{\(x\)}\le x-1$, we obtain that
\begin{align*}
\frac{d}{d\alpha}\ln{\(\frac{\alpha-1}{\alpha}\)^{\alpha-1}}&=\ln{\(\frac{\alpha-1}{\alpha}\)}+\frac{1}{\alpha}\le \frac{\alpha-1}{\alpha}-1+\frac1\alpha=0.
\end{align*}
Hence, the maximum of the left side is attained for $\alpha=2$, yielding a value of $\(\frac{2-1}{2}\)^{2-1}=\frac12$, which concludes the proof.
\end{proof}

To ease the proof of \Cref{thm:unique}, we restrict attention to integer $\alpha>1$, yet the results continue to hold for any $\alpha\ge 1$. Recall that for $\alpha=1$, the efficient technology constitutes a strictly dominant strategy and hence its adoption is trivial (and hence, it is excluded from the statement of \Cref{thm:unique}). The continuous case requires an additional technical step for the interval $(2,3)$ and since it does not add much intuition, it is omitted. In particular, for $\alpha\in [2,3]$, $f_\alpha\(x\)$ is not monotone decreasing. However the statement of \Cref{thm:unique} continues to hold in this interval as well. Since this case requires more technical details without providing any additional insight, its proof is omitted.\par

\begin{proof}[Proof of \Cref{thm:unique}] 
The case $\alpha=2$ has been treated in the \Cref{thm:main} and \Cref{cor:convergence}. For $\alpha \ge 3$, $\gamma \in \(0, 1\)$ and $T \ge \frac{1}{2}$, the function $f_\alpha\(x\):= x^{\alpha - 1} - \(1 - x\)^{\alpha - 1} - \gamma - T\ln\frac{x}{1 - x}$ is continuous and satisfies
\begin{align*}
\lim_{x \to 0^+}f_\alpha(x) &= \lim_{x\to 0^+}\(x^{\alpha - 1} - (1 - x)^{\alpha - 1} - \gamma - T\ln\frac{x}{1 - x} \) = \infty \quad \text{and }\quad f_\alpha\(\frac{1}{2}\)= - \gamma < 0.
\end{align*}
Hence, since $f_\alpha\(x\)$ is continuous for $x\in(0, 1)$, there exists $x^* \in (0, \frac{1}{2})$ such that $f_\alpha\(x^*\)=0$. To prove uniqueness, it will be sufficient to prove that for $T\ge1/2$, $f_\alpha\(x\)$ is decreasing in $x$. This implies that for $T\ge1/2$, there is a unique steady state $x^*$ and hence, the critical temperature $T_c$ will necessarily satisfy $T_c< 1/2$. Taking the derivative of $f_\alpha\(x\)$ with respect to $x$, we obtain
\[\frac{d}{dx}f_\alpha\(x\)=\(\alpha-1\)x^{\alpha-2}+\(\alpha-1\)\(1-x\)^{\alpha-2}-T\frac{1}{x\(1-x\)}.\]
The last expression is decreasing in $T$ and hence it suffices to prove that $\frac{d}{dx}f_\alpha\(x\)\le 0$, for $T=1/2$. In this case, $\frac{d}{dx}f_\alpha\(x\)\le0$ is equivalent to 
\[x^{\alpha-1}\(1-x\)+x\(1-x\)^{\alpha-1}\le\frac{1}{2\(\alpha-1\)}\]
and the claim follows from \Cref{lem:ln}. In particular, equality holds only if $T = \frac{1}{2}$, $\alpha = 3$ ($\alpha - 1 = 2$), and $x = \frac{1}{2}$. Hence, $f_\alpha\(x\)$ is decreasing in $(0, 1)$ which proves the claim. Finally, we need to show that for $T=0$ and any $\alpha>1$, the system still has 3 steady states, $x_1=0,x_3=1$ and $x_2\in\(1/2,1\)$. For $T=0$, the first derivative of $f_\alpha\(x\)$ with respect to $x$ becomes
\[\frac{d}{dx}f_\alpha\(x\)=\(\alpha-1\)\lt x^{\alpha-2}+\(1-x\)^{\alpha-2}\rt>0\]
for all $x\in\(0,1\)$ and all $\alpha>1$ which implies that $f_\alpha\(x\)$ is monotonically increasing. Since $f_\alpha\(x\)$ starts negative at $x=0$, ends up positive at $x=1$, and $f_\alpha\(1/2\)=-\gamma<0$, it has one root that lies in $(1/2,1)$. Hence, the dynamics 
\[\dot x=x\(1-x\)\lt x^{\alpha-1}-\gamma-\(1-x\)^{\alpha-1}\rt\] have two obvious steady states $x_1=0$ and $x_3=1$ and a third steady state $x_2\in\(1/2,1\)$. Accordingly, for $T=0$, the usual stability analysis applies, which proves that $[0,1/2]$ lies in the attracting region of $x=0$ as claimed.
\end{proof}

\end{document}